\newcommand{\TOC}{\operatorname{TOC}}
\newcommand{\hTOC}{\widehat{\TOC}}
\newcommand{\AUTOC}{\operatorname{AUTOC}}
\theoremstyle{plain}
\newtheorem{prop}{Proposition}
\newtheorem{coro}[prop]{Corollary}
\newtheorem{lemm}[prop]{Lemma}
\newtheorem{theo}[prop]{Theorem}
\theoremstyle{definition}
\newtheorem{exam}{Example}
\newtheorem{defi}{Definition}
\theoremstyle{remark}
\newtheorem{rema}[prop]{Remark}
\newcommand\blfootnote[1]{%
  \begingroup
  \renewcommand\thefootnote{}\footnote{#1}%
  \addtocounter{footnote}{-1}%
  \endgroup
}
\def\spacingset#1{\renewcommand{\baselinestretch}%
{#1}\small\normalsize} \spacingset{1}
\begin{document}

\title{\bf Evaluating Treatment Prioritization Rules via Rank-Weighted Average Treatment Effects}

\author[1]{Steve Yadlowsky}
\affil[1]{Google DeepMind}

\author[2]{Scott Fleming}
\affil[2]{Department of Biomedical Data Science, Stanford University}

\author[3]{Nigam Shah}
\affil[3]{Center for Biomedical Informatics Research, Stanford University}

\author[4]{Emma Brunskill}
\affil[4]{Department of Computer Science, Stanford University}

\author[5]{Stefan Wager}
\affil[5]{Graduate School of Business, Stanford University}

\maketitle

\begin{abstract}
There are a number of available methods for selecting whom to prioritize for treatment, including ones based on treatment effect estimation, risk scoring, and hand-crafted rules. We propose rank-weighted average treatment effect (RATE) metrics as a simple and general family of metrics for comparing and testing the quality of treatment prioritization rules. RATE metrics are agnostic as to how the prioritization rules were derived, and only assess how well they identify individuals that benefit the most from treatment. We define a family of RATE estimators and prove a central limit theorem that enables asymptotically exact inference in a wide variety of randomized and observational study settings. RATE metrics subsume a number of existing metrics, including the Qini coefficient, and our analysis directly yields inference methods for these metrics. We showcase RATE in the context of a number of applications, including optimal targeting of aspirin to stroke patients.
\end{abstract}

\spacingset{1.2}
\section{Introduction}

From\blfootnote{\hspace{-7mm}
SY and SF contributed equally to this research.
SF, NS, EB and SW acknowledge support from NHLBI grant R01HL144555.
All data access was performed by Stanford-affiliated co-authors; Google did not have access to the data. The analyses conducted with de-identified patient data were approved by the institutional panel on human subjects research at Stanford School of Medicine under protocol 46829.}
medicine to marketing, algorithms are commonly used to guide decision making around personalized interventions. Some  personalized intervention methods output a discrete intervention decision (often called a policy). 
However, the final choice of intervention for an individual is often informed both by predicted outcomes and additional considerations, such as cost or broader resource constraints. This motivates interest in approaches that output a score that ranks individuals in terms of intervention benefit, providing a   ``prioritization rule'' which can be used by decision makers, together with other contextual information, to design a final policy.

A conceptually clear approach to developing a prioritization rule is to fit a  heterogeneous treatment effect model that
estimates the benefit of an intervention for each individual based on their baseline covariates,
and then prioritizes individuals with the most intervention/treatment benefit. Recent methodological advances for estimating the conditional average treatment effect (CATE)
that can use data from a randomized (or appropriate observational) study allow data scientists to directly identify individuals most likely to benefit from
an intervention
\citep[e.g.,][]{hill2011bayesian,wager2018estimation,kunzel2019metalearners}. In principle, accurately estimated
CATE-based rules would be the gold standard for intervention selection and prioritization \citep{manski2004statistical}. However,
learning good CATE-based rules in practice requires either running a large randomized experiment or designing a comparable
quasi-experimental study, both of which involve considerable expense and effort.

A popular
alternative is to first estimate the baseline probability of the outcome in absence of any intervention, and then use this as a non-causal heuristic to prioritize individuals with a high baseline risk \citep{Kent2016risk, KentEtAl20}. 
There is considerable interest in understanding 
the extent to which risk-based rules are sufficient for making  high-quality individualized intervention decisions. For example, in medical settings, this may work well for preventative treatments where the treatment reduces the risk of 
an event occurring, as patients with a higher baseline risk have the most potential for benefit.
In such cases, the heterogeneous treatment effect is correlated with baseline risk, but
the amount of data available to build such a risk score may be much larger than that available
for treatment effect estimation \citep{KentEtAl20}, so a risk based approach may work better.

In this paper, we study {\it rank-weighted average treatment effect} (RATE) metrics, a suite of metrics
that can be used to quantify, estimate and run hypothesis tests about the value of a treatment prioritization
rule in a variety of statistical settings---including
with survival endpoints subject to censoring, which are common in medical settings.
At a high level, the RATE captures the extent to which individuals who are highly ranked by the
prioritization rule are more responsive to treatment than randomly selected individuals.
This approach is agnostic as to whether the rules were derived via CATE estimation or non-causal heuristics,
and only considers the extent to which the rules succeed in ranking units according to how much they
benefit from the intervention; thus, it can be used to compare risk-based and CATE-based rules on
a level playing field.

To enhance the utility of our results and methodology, we make publicly available the code used in our experiments as well as a \texttt{rank\_average\_treatment\_effect} function provided as part of the open-source \texttt{grf} package in R. Additional materials and information are provided as part of the \texttt{grf} package documentation.

\subsection{Motivating Application}
\label{sec:motivation}

To illustrate the need for a simple metric that can be used for testing the benefit of generic
prioritization rules, consider the case of stroke treatment using aspirin.
Stroke---a medical condition in which disruption of blood flow to part of the brain leads to
insufficient perfusion and brain cell death---is the third leading cause of death and disability worldwide.
Stroke is becoming more common over time, with a 70\% increase in stroke incidence globally from
1990 to 2019 \citep{feigin2021global}. Stroke will often lead directly to death \citep{saposnik2008stroke}; 
however, non-fatal cases can also have long-lasting debilitating impacts. More than a quarter
of previously independent patients remain dependent on a caregiver to complete activities of daily living 
a year after their stroke incident \citep{ullberg2015changes}. The  global cost of stroke, including
acute and long-term care, is estimated to be more than 1.12\% of the global GDP \citep{owolabi2021primary}.
There is significant interest in both preventing stroke and improving the efficacy
of treatments that mitigate its long-term consequences.

One commonly employed treatment for mitigating the long-term adverse impacts of stroke is acetylsalicylic acid, or Aspirin. Supported by randomized trial evidence that Aspirin significantly reduces the risk of death or recurrent stroke \citep{chen2000indications}, current clinical guidelines recommend the use of Aspirin for all patients without serious bleeding complications who are not already taking an anticoagulant or antiplatelet at stroke onset \citep{powers2019guidelines}. This guidance influences treatment for the majority of stroke patients, but it is not clear that the benefits of Aspirin are equally distributed. As an anti-platelet agent, aspirin can reduce platelet activation and consequently reduce the risk of clotting in the brain; however, this mechanism can also increase the risk of uncontrolled bleeding. The same trials that found aspirin to reduce the risk of death or dependence overall reported an increased risk of severe bleeding and hemorrhagic stroke \citep{chen1997cast, group1997international}. Although aspirin reduces the risk of death or long-term dependence for stroke patients on average, are there some patients for whom the harms of aspirin outweigh the benefits? How can we run a hypothesis test to validate treatment heterogeneity in this setting?

One key source of evidence in understanding benefits of aspirin to stroke patients is the
International Stroke Trial (IST) \citep{group1997international}. Comprising 19,435 patients
across 36 countries, the IST was one of the largest randomized clinical trials ever conducted
with acute stroke patients \citep{sandercock2011international}; see Section \ref{sec:ist} for
further discussion of the study. One of the headline findings of \citet{group1997international} was
that administering aspirin provided a statistically significant reduction in deaths after 6 months.

The evidence regarding treatment heterogeneity, however, is more ambiguous. A number of
studies have found suggestive evidence that patients with poor initial prognosis (i.e.,
with high risk of death regardless of treatment) may benefit less from aspirin.
In the IST trial, \citet{group1997international} considered treatment effects (on death at 6 months)
separately on subgroups stratified by an ``overall prognostic index'', divided into three bins,
``Good'', ``Average'' and ``Poor'', and found a  suggestive but non-statistically significant
trend of poor-prognosis patients experiencing less benefit from Aspirin relative to good-prognosis patients.
\citet{chen2000indications} obtained similar results when merging data from the IST with a related trial,
the Chinese Acute Stroke Trial. 
In Figure \ref{fig:IST:CATE-vs-risk}, we show results from our replication of this type of analysis on the IST
dataset: We randomly split the data in two, trained a random forest to estimate prognosis on the first half,
and report average treatment effects by quintiles of the risk score on the second half. Good prognosis groups have
more beneficial (i.e., more negative) point estimates, but all confidence intervals overlap.

\begin{wrapfigure}{R}{0.45\textwidth}
\centering
\includegraphics[width=0.42\textwidth]{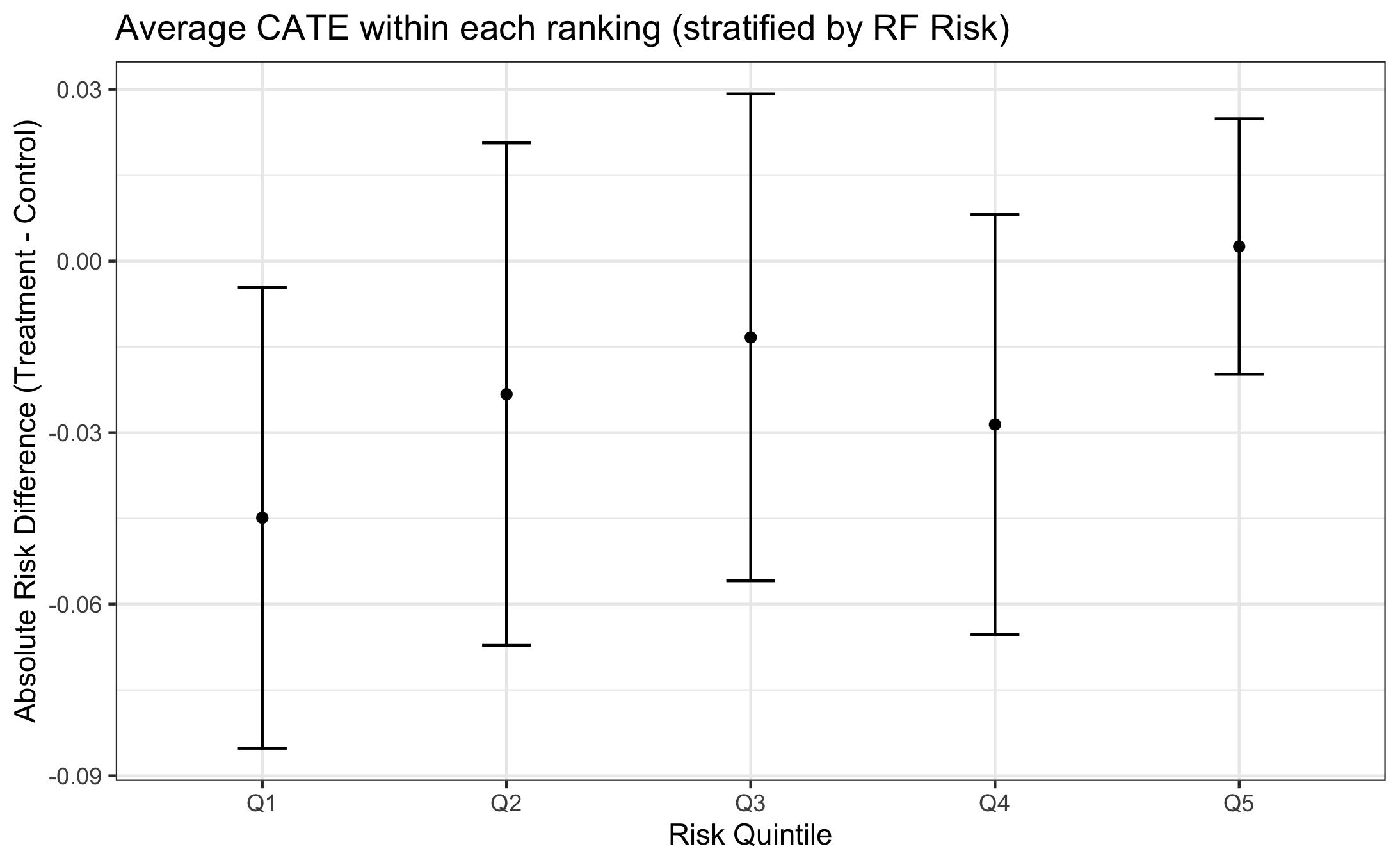}
    \caption{Treatment effect stratified by predicted risk within a 50\% held out test set on the International Stroke Trial data \citep{group1997international}.\label{fig:IST:CATE-vs-risk}}
\end{wrapfigure}

The findings of \citet{group1997international} and \citet{chen2000indications}, as illustrated
in Figure \ref{fig:IST:CATE-vs-risk}, leave us with many open questions. What is a high power way
to test for treatment heterogeneity here---ideally without dividing the data into an arbitrary
number of subgroups? Is a prognostic index the right way to express heterogeneity here, or could
we have done better by directly estimating the CATE? Does the IST show evidence that there are
some patients who benefit less from aspirin than others? In Section \ref{sec:ist}, we will revisit
this setting, and argue that analysis using the RATE can help shed light on these questions.
More broadly, we will show that RATE can be used as a tool in many other settings, and provide additional concrete examples in uplift modeling and antihypertensive treatment evaluation.

\subsection{Related Work}

Methodologically, our paper fits into a growing literature on evaluating estimators of heterogeneous treatment effects. One of the popular metrics for this purpose used in the marketing literature is the Qini curve,
which plots an estimate of the cumulative gain obtained by treating a growing fraction of units (as ranked by a prioritization
rule) against the fraction of units treated \citep{Radcliffe07}. The Qini curve is often summarized by the area under the curve, called the Qini coefficient. Recently, \citet{ImaiL19} studied statistical properties of
the Qini curve under randomization inference, while
\citet{sun2021treatment} considered analogous metrics in a setting where the cost of treating units is unknown and
may vary across units. Meanwhile, in the statistics literature, \citet{ZhaoTiCaClWe13} considered a different
area-under-the-curve for evaluating treatment rules. More broadly, \citet{chernozhukov2018generic} advocate assessing
treatment rules by evaluating the  average treatment effects across  different quantiles of estimated treatment effects.

In the context of this line of work, the main methodological contributions of our paper are as follows.
First, we unify a variety of evaluation metrics under the umbrella of RATE metrics, including the Qini coefficient from \cite{Radcliffe07} and the one proposed by \cite{ZhaoTiCaClWe13}. Within this unified context it is easier to compare and contrast the metrics in terms of their power and other statistical properties. Second,
we develop estimators for RATE metrics that can be used in a number of statistical
settings, including randomized studies and observational studies under an unconfoundedness
assumption \citep{rosenbaum1983central}, and that can accommodate potentially
censored survival endpoints. Third, we prove a central limit theorem that applies to a
wide class of RATE metrics and estimators, and that can be used to justify resampling-based
confidence intervals. Our formal results blend recent advances in doubly-robust estimation of
heterogeneous treatment effects 
\citep{athey2021policy,ChernozhukovChDeDuHaNeRo18,kennedy2023towards,semenova2017debiased}
with classical results on $L$-statistics \citep{ShorackWe09}.
We also note that, even in the simplest setting of Qini curves in randomized trials, our results yield the
first central limit theorem and associated exact Gaussian confidence intervals
(\citet{ImaiL19} provide bias and variance bounds under the Neyman model, but do not prove a
central limit theorem).

\section{The Rank-Weighted Average Treatment Effect}
\label{sec:RATE}

We assume access to $i = 1,\ldots,n$ independent and identically
distributed samples $(X_i, \, Y_i, \, W_i) \in \xx \times \RR \times \cb{0, \, 1}$, where $W_i$ denotes treatment
assignments, $Y_i$ is the outcome of interest, and the $X_i$ are auxiliary covariates. In some settings, we
may need to consider a larger number of observed variables, e.g., censoring times for survival analysis;
see Section \ref{sec:scoring-rules} for further discussion. Following the Neyman--Rubin model  \citep{imbens2015causal}, we posit potential outcomes
$\cb{Y_i(0), \, Y_i(1)}$ such that we observe $Y_i = Y_i(W_i)$, and we interpret $Y_i(1) - Y_i(0)$ as the
effect of the treatment on the $i$-th unit. We will frequently write the conditional average treatment effect (CATE) as
\begin{equation}
\label{eq:CATE}
\tau(x) = \EE{Y_i(1) - Y_i(0) \cond X_i = x}.
\end{equation}
where $x$ could be a scalar or vector of covariates, consistent with the dimensions of $X_i$. Given this notation, we now define
a prioritization rule and its associated targeting operator characteristic curve, characterizing how different the average treatment effect (ATE) is in units above each quantile of the prioritization rule from the ATE in the entire population.

\begin{defi}
A prioritization rule is defined in terms of a priority scoring function $S : \xx \rightarrow \RR$, such
that samples $i = 1, \, \ldots, \, n$ are prioritized in order $j = 1, \, \ldots, \, n$ for treatment in decreasing order of $S(X_i)$ (that is, a larger value of $S(X_i)$ implies that the sample should be treated first).
We let $i(j)$ denote the mapping from the rank $j$ to the sample index $i$.
\end{defi}

The priority scoring function $S$ is a user-provided mapping, and our methods are agnostic to the way that it is selected: $S$ could be a learned priority rule, a learned CATE estimate, a hard-coded heuristic, or something else.
Throughout our analysis, we treat the prioritization rule $S(\cdot)$ as fixed and deterministic; our $n$ samples will only be used to evaluate $S(\cdot)$ (i.e., they act as a test set). If $S(\cdot)$ is learned from data, our results should be understood as conditional on $S(\cdot)$.

Let $F_S(\cdot)$ be the cumulative distribution function (CDF) of $S(X_i)$. For now, we will assume that there are no ties in the prioritization rule $S$, so that $F_S$ is continuous; ties can be broken by randomization, as discussed in the Supplementary Materials Section A.2.

\begin{defi}
For any rule with priority score $S(\cdot)$ and any threshold $0 < u \leq 1$ such that $F_S^{-1}(u)$ exists, the targeting operator characteristic (TOC) is
\begin{equation}
\label{eq:toc}
\TOC(u; \, S) \defeq 
\EE{Y_i(1) - Y_i(0) \cond F_S(S(X_i)) \ge 1-u} -  \EE{Y_i(1) - Y_i(0)}.
\end{equation}
\end{defi}
\noindent Note if $u=1$, the first term is the average treatment effect, and so $TOC(u; \,S) = 0$.

Our main proposal involves evaluating prioritization rules in terms of weighted averages of the
TOC, which we refer to as rank-weighted average  treatment effects (RATEs).
RATEs only depend on the priority score $S(\cdot)$ via its induced priority ranking, and
not via the values of $S(X_i)$ for individual units.
After defining the RATE, we show below that a number of evaluation metrics considered in the literature are in
fact special cases of a RATE.

\begin{defi}
\label{def:rate}
For any weight function $\alpha:(0, \, 1] \rightarrow \RR$, the induced rank-weighted average treatment effect (RATE)
of a priority score $S(\cdot)$ is
$\theta_\alpha(S) = \int_0^1 \alpha(u) \TOC(u; \, S) \ \dif{u}$.
\end{defi}

\begin{rema}
Note that if $S(X_i) \indep Y_i(1) - Y_i(0)$, then both the TOC and any RATE will be identically 0.
Furthermore, if a priority score is monotone predictive of treatment effects in the sense that
\smash{$f(s) = \mathbb{E}[Y_i(1) - Y_i(0) \cond S(X_i) = s]$} is non-decreasing in $s$, then
$\TOC(u; \, S)$ is non-negative and non-increasing for $u \in (0,\, 1)$, and any RATE
with a non-negative weighting function will be non-negative.
\end{rema}

\begin{exam}[high-vs-others]
A very simple way to assess the effectiveness of a prioritization rule is by comparing the ATE for the top $u$-th of
units prioritized by the rule to the overall ATE. This is a RATE where $\alpha(u)$ is a point mass at $u$ and $\theta(S) = \TOC(u; \, S)$.
\end{exam}

\begin{exam}[AUTOC]
One limitation of the high-vs-other metric is that it is focused on comparisons at a specific quantile $F_S^{-1}(u)$ that needs
to be selected. To get a quantile-agnostic performance
measure, consider the area under the TOC curve, 
\begin{equation*}
    \operatorname{AUTOC}(S) = \int_0^1 \TOC(u; \, S) \ \dif{u}.
\end{equation*}
This metric was also considered in \citet{ZhaoTiCaClWe13}.
\end{exam}

\begin{exam}[Qini]
The Qini coefficient is another possible ``under the curve'' metric.
The Qini curve \citep{Radcliffe07} is defined by evaluating cumulative benefits as we increase the treatment fraction
according to a prioritization rule, and the Qini coefficient measures the area under the Qini curve:
\begin{equation*}
\label{eq:Qini}
\operatorname{QINI}(S) = \int_{0}^1 \EE{1\p{\cb{F_S(S(X_i)) \geq 1-u}} \p{Y_i(1) - Y_i(0)}} - u\EE{Y_i(1) - Y_i(0)} \ \dif{u}.
\end{equation*}
This is a RATE with linear weight function $\alpha(u) = u$, i.e., 
\smash{$\operatorname{QINI}(S) = \int_0^1 u \TOC(u; \, S) \ \dif{u}$}.
\end{exam}

\begin{exam}[AUPEC]
\citet{ImaiL19} proposed a modified Qini-coefficient, where no units are assigned to treatment if the priority score falls below a specified threshold $s^*$:
\begin{align*}
&\operatorname{AUPEC}(S; \, s^*) = \int_{0}^1 \EE{1\p{\cb{F_S(S(X_i)) \geq \max\cb{1 - u, \, F_S(s^*)}}} \p{Y_i(1) - Y_i(0)}} \nonumber \\
& \hspace{5cm} - u\EE{Y_i(1) - Y_i(0)} \ \dif{u}.
\label{eq:AUPEC}
\end{align*}
The AUPEC with score threshold $s^* \not= -\infty$ is not a RATE, because it also depends on the value of $S(X_i)$ as opposed to
the induced ranking only.
\end{exam}

\subsection{Weighted ATE Representation}
\label{sec:representation}
For analytic purposes, it is helpful to represent RATEs from Definition \ref{def:rate} as weighted averages of the individual treatment effects $Y_i(1) - Y_i(0)$, with weights depending on the quantile of $S(X_i)$.
For any weight function $\alpha(u)$ for a RATE metric as in Definition~\ref{def:rate}, let 
\begin{equation}
    w_a(t) = \int_{t}^1 \frac{\alpha(u)}{u} \ \dif{u} - \int_{0}^1 \alpha(u) \ \dif{u},
\end{equation}
assuming that these integrals exist and are finite. Proposition~\ref{prop:to-Lstat} shows that $\theta_\alpha(S)$ has a natural representation as an average treatment effect weighted by $w_\alpha$.

\begin{prop}
\label{prop:to-Lstat}
Let $\alpha(u)$ be the weight function for a RATE metric as in Definition~\ref{def:rate}. Assume that $\EE{Y_i(1) - Y_i(0) \cond X_i = x}$ is uniformly bounded. If $\alpha(u)$ is absolutely integrable on $[0, 1]$ and $\alpha(u)/u$ is absolutely integrable on $[t, 0]$ for any $t \in (0, 1)$, then $\theta_\alpha(S)$ can be written equivalently as
\begin{equation}
    \theta_{\alpha}(S) = \EE{ w_{\alpha}\p{1-F_S(S(X_i))}\p{Y_i(1) - Y_i(0)}}.
    \label{eq:as-Lstat}
\end{equation}
\end{prop}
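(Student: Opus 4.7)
The plan is to unroll the definition of $\TOC(u; \, S)$, apply Fubini's theorem to swap the $du$-integral with the outer expectation, and recognize the resulting inner integral as the weight $w_\alpha$. To set things up, write $\tau_i = Y_i(1) - Y_i(0)$ and $U_i = 1 - F_S(S(X_i))$. Since $F_S$ is continuous, $U_i$ is uniformly distributed on $(0, 1)$, so $\PP(U_i \leq u) = u$ and the conditioning event $\{F_S(S(X_i)) \geq 1 - u\}$ coincides with $\{U_i \leq u\}$. Because $U_i$ is $\sigma(X_i)$-measurable, the tower property applied to $\tau_i$ yields
\begin{equation*}
\TOC(u; \, S) = \frac{1}{u}\EE{\tau(X_i)\,\mathbf{1}\{U_i \leq u\}} - \EE{\tau(X_i)}.
\end{equation*}

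Next I would substitute this into \eqref{eq:rate} and split the result into two pieces. The constant-in-$u$ piece contributes $-\EE{\tau(X_i)}\int_0^1 \alpha(u)\,du$. For the remaining piece I would invoke Fubini--Tonelli to write
\begin{equation*}
\int_0^1 \frac{\alpha(u)}{u}\EE{\tau(X_i)\,\mathbf{1}\{U_i \leq u\}}\,du = \EE{\tau(X_i)\int_{U_i}^1 \frac{\alpha(u)}{u}\,du}.
\end{equation*}
Combining the two pieces reproduces $\EE{\tau(X_i)\, w_\alpha(U_i)}$, which agrees with the claimed $\EE{w_\alpha(1 - F_S(S(X_i)))(Y_i(1) - Y_i(0))}$ by one more application of the tower property (pulling $Y_i(1) - Y_i(0)$ inside after conditioning on $X_i$, since $w_\alpha(U_i)$ is a function of $X_i$ alone).

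The main step requiring care is justifying the Fubini exchange. I would do this by bounding the joint integrand in absolute value using the uniform boundedness of the CATE: letting $M = \sup_x |\tau(x)|$,
\begin{equation*}
\int_0^1 \frac{|\alpha(u)|}{u}\EE{|\tau(X_i)|\,\mathbf{1}\{U_i \leq u\}}\,du \leq M \int_0^1 \frac{|\alpha(u)|}{u}\cdot u\, du = M\int_0^1 |\alpha(u)|\,du,
\end{equation*}
which is finite by the absolute integrability of $\alpha$ on $[0, 1]$. Notice that the factor of $u$ from $\EE{\mathbf{1}\{U_i \leq u\}}$ neatly cancels the $1/u$ singularity, so only integrability of $\alpha$ itself (not $\alpha/u$) is needed to swap the integrals. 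The second hypothesis---absolute integrability of $\alpha(u)/u$ on $[t, 1]$ for each $t \in (0, 1)$---is used separately to guarantee that $w_\alpha(U_i)$ is well-defined pointwise (almost surely $U_i > 0$), making the identity meaningful rather than merely justifying the swap.
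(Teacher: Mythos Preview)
Your proposal is correct and follows essentially the same route as the paper's proof: rewrite the conditional expectation in $\TOC$ as $\tfrac{1}{u}\EE{\tau(X_i)\mathbf{1}\{U_i\le u\}}$, apply Fubini to swap the $du$-integral with the expectation, and read off $w_\alpha$. Your Fubini justification is in fact slightly sharper than the paper's---you correctly observe that the factor $\EE{\mathbf{1}\{U_i\le u\}}=u$ cancels the $1/u$, so absolute integrability of $\alpha$ alone suffices for the swap, with the $\alpha(u)/u$ hypothesis serving only to make $w_\alpha$ well-defined pointwise.
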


This representation is valuable for a number of reasons. First, from Proposition \ref{prop:to-Lstat},
we see that the RATE metrics considered above have the following representations:
\begin{equation}
\label{eq:wrep}
\begin{split}
&\AUTOC\p{S} = \EE{\p{-\log\p{1 - F_S(S(X_i))} - 1} \p{Y_i(1) - Y_i(0)}}, \\
&\operatorname{QINI}\p{S} = \EE{\p{F_S(S(X_i)) - \frac12} \p{Y_i(1) - Y_i(0)}}.
\end{split}
\end{equation}
This gives further insight into the qualitative behavior of different RATE metrics. For example,
we see that the AUTOC measure strongly upweights treatment effects for the very first units prioritized by
$S(\cdot)$ (i.e., with $F_S(S(X_i)) \approx 1$), whereas the Qini coefficient considers the beginning and end of the ranking
given by $S(X_i)$ symmetrically.

Second, the representation \eqref{eq:as-Lstat} provides a natural starting point for a unified asymptotic analysis
of RATE metrics, allowing
us to leverage a large and well understood set of asymptotics results for $L$-statistics \citep{ShorackWe09}.\footnote{
\spacingset{1}\footnotesize
Strictly speaking, $\eta_w$ as defined in \eqref{eq:etaw}
is not an $L$-statistic because of the random multiplicative factor $Y_i(1) - Y_i(0)$. However, this will not
impede our application of standard results on $L$-statistics in proving a central limit theorem.}  We can then apply these results
to RATE metrics via Proposition \ref{prop:to-Lstat}.

Third, this representation gives us an alternative way to define a RATE. For any $w:(0,1) \to \R$ that satisfies appropriate regularity conditions on $w$ (a formal discussion is presented in the Supplementary Materials Section B.2, Assumptions D and E), any centered weighted ATE of the form
\begin{equation}
\label{eq:etaw}
\eta_w(S) = \EE{w\p{1 - F_S(S(X_i))}\p{Y_i(1) - Y_i(0)}}
\end{equation}
is also a RATE. By centered, we mean that $\int_0^1 w(u) \ \dif{u} = 0$. Then, we have that the weighted ATE in \eqref{eq:etaw} is equivalent to a RATE metric as defined in Definition~\ref{def:rate} with weights $\alpha_w(\cdot)$ defined as
$\eta_w(S) = \theta_{\alpha_w}(S)$ and $\alpha_w(t) = -t w'(1-t)$.

\subsection{Estimating the RATE}
\label{sec:estimation}

Our goal is to provide a general framework for designing RATE estimators that can be
applied in a wide variety of statistical settings, including observational studies and studies
with survival endpoints. To this end, we follow the approach used by \citet{semenova2017debiased} and \citet{athey2021policy}
for other methodological applications to provide similar generality. The main idea is to assume
the existence of ``scores'' \smash{$\hGamma_i$}  with the property that they act as nearly unbiased (but noisy)
proxies for the CATE \eqref{eq:CATE},
\begin{equation}
\label{eq:gamma_approx}
\EE{\hGamma_i \cond X_i} \approx \tau(X_i) = \EE{Y_i(1) - Y_i(0) \cond X_i}.
\end{equation}
The law of iterated expectation means that we can replace $Y_i(1) - Y_i(0)$ with such a score in the TOC definition \eqref{eq:toc} and weighted ATE representation of RATEs \eqref{eq:etaw}, as long as the approximation in \eqref{eq:gamma_approx} is good.
In the case of evaluating the effect of a treatment on non-survival outcomes in
a randomized controlled trial with randomization probability $\pi$, one simple choice
of scoring rule we could use is inverse-propensity weighting, i.e.,
$\hGamma_i = \frac{W_iY_i}{\pi} - \frac{(1 - W_i)Y_i}{1 - \pi}$.
In this case \eqref{eq:gamma_approx} holds exactly. In more complicated settings, however, constructing scores
\smash{$\hGamma_i$}  requires increased care and
involves estimation of nuisance components \citep{chernozhukov2016locally}. We defer a discussion
of how to construct scores \smash{$\hGamma_i$} and precise conditions of the type \eqref{eq:gamma_approx} to
Section \ref{sec:scoring-rules}, and for now take the availability of such scores as given.

Given this setting, we estimate the TOC and RATE by sample-averaging estimators. First,
for any $1/n \leq q \leq 1$, we estimate the TOC as
\begin{equation}
\label{eq:TOC_est}
\hTOC(u; \, s) = \frac{1}{\lfloor un \rfloor} \sum_{j = 1}^{\lfloor un \rfloor} \hGamma_{i(j)} - \frac{1}{n} \sum_{i = 1}^{n} \hGamma_{i}.
\end{equation}
This TOC estimator implies a natural RATE estimator for smooth
weight functions $\alpha(\cdot)$:
\begin{equation}
\label{eq:RATE_est}
\htheta_\alpha(S) = \frac{1}{n} \sum_{j = 1}^n \alpha\p{\frac{j}{n}} \hTOC\p{\frac{j}{n}; \, S}.
\end{equation}
The weighted ATE representation in \eqref{eq:as-Lstat} induces a natural estimator that does not require $\alpha(\cdot)$ to be smooth.
In this form, we can use \smash{$\hGamma_i$} directly in the empirical estimate
\begin{equation}
\label{eq:RATE_L_est}
\hat{\eta}_{w_\alpha}(S) = \frac{1}{n} \sum_{j = 1}^n w\p{\frac{j}{n}} \hGamma_{i(j)}.
\end{equation}

\subsection{Score construction}
\label{sec:scoring-rules}
One advantage of RATEs, and their estimators discussed in Section~\ref{sec:estimation}, is their generality in terms of causal estimation strategies. Deriving a score $\hGamma_i$ that approximately satisfies Eq.~\eqref{eq:gamma_approx} is nontrivial, because individual treatment effects depend on unobserved counterfactuals. Here, we summarize some useful scores from various causal estimation strategies.

Many of the scores derived below depend on quantities that must be estimated---we refer to them as nuisance parameters---thus motivating the use of the hat~\smash{$\,\what{\cdot}\,$}~in \smash{$\hGamma_i$}. It is useful to compare these scores to an \emph{oracle} score where the nuisance parameters are known a priori, which we denote $\Gamma_i^*$. We assume that the oracle score satisfies the condition~\eqref{eq:gamma_approx} exactly,
$ \EE{\Gamma_i^* \cond X_i} = \tau(X_i).$
Then, we can quantify the score approximation error as
    $\delta_i \defeq \hGamma_i - \Gamma_i^*$.
The properties of $\delta_i$ are useful for studying the estimation error and inference in Section~\ref{sec:asymptotics}.

\paragraph{Randomized Trials}
As mentioned in Section~\ref{sec:estimation}, randomized trials admit a simple IPW score that satisfies the condition~\eqref{eq:gamma_approx} exactly; however, these scores can have larger-than-necessary variance.
Using an Augmented IPW (AIPW) estimator  can help reduce the variance by adjusting for the baseline covariates $X_i$ measured at the beginning of the trial,
\begin{gather}
    \hGamma_i = \hat{m}(X_i, 1) - \hat{m}(X_i, 0) + \frac{W_i - \pi}{\pi\p{1 - \pi}} \p{Y_i - \hat{m}(X_i, W_i)}, \label{eq:rct-aipw}
\end{gather}
where $\hat{m}(x, w) \approx \E[Y_i(w) | X_i = x]$ is an estimate of the nuisance parameter $m(x, w)$ representing the expected outcome given a subject's covariates and treatment assignment $w \in \{0, 1\}$. The nuisance parameter must be estimated---using an appropriate parametric model, a nonparametric / machine learning estimator, or a simple surrogate such as the lagged outcome at the time of randomization. In a randomized trial, the treated proportion $\pi$ is known, so that $\E[W_i \cond X_i] = \pi$. Therefore, the estimated score $\hGamma_i$ from \eqref{eq:rct-aipw} satisfies the condition~\eqref{eq:gamma_approx} exactly whenever cross-fitting (discussed below) is used.

\paragraph{Observational Study with Unconfoundedness}

In this context, we estimate doubly robust scores for each participant using AIPW scores \citep{robins1994estimation}:
\begin{align}
    \hGamma_i = \hat{m}(X_i, 1) - \hat{m}(X_i, 0) + \frac{W_i - \hat{e}(X_i)}{\hat{e}(X_i)\p{1 - \hat{e}(X_i)}} \p{Y_i - \hat{m}(X_i, W_i)}, \label{eqn:dr-score-unconfoundedness}\\
    e(x) = P[W_i = 1 | X_i = x], \hspace{1cm}
    m(x, w) = \E[Y_i(w) | X_i = x] 
    \nonumber
\end{align}
where $e(x)$ represents the probability of an individual being assigned to treatment conditioned on observables; $m(x, w)$ represents the expected outcome given a subject's covariates and treatment assignment $w \in \{0, 1\}$; and $\hat{e}(x)$, $\hat{m}(x, w)$ represent nonparametric estimates of $e(x)$ and $m(x, w)$, respectively.
The oracle score would be
\begin{equation}
\label{eqn:oracle-aipw-score}
    \Gamma_i^* = m(X_i, 1) - m(X_i, 0) + \frac{W_i - e(X_i)}{e(X_i)\p{1 - e(X_i)}} \p{Y_i - m(X_i, W_i)}.
\end{equation}
This satisfies the condition~\eqref{eq:gamma_approx} exactly, because \smash{$\E[Y_i - m(X_i, W_i) \cond X_i, W_i] = 0$}, and $\tau(X_i) = m(X_i, 1) - m(X_i, 0).$

\paragraph{Time-to-event Outcomes}
We discuss an AIPW score for time-to-event outcomes with right-censoring  \citep{robins1994estimation,tsiatis2007semiparametric} in the Supplementary Materials A.3,
building on the presentation in \citet{cui2023estimating}. Discussing the functional form
of the scores is not possible here due to space constraints. More broadly, however,
the fact that we are able to immediately generalize our results to time-to-event
outcomes by drawing from existing results on doubly robust estimation in this
setting highlights the power and generality of our approach.

\paragraph{Cross-fitting}
All of the scores defined in this section require estimation of an unknown nuisance parameter function, which we will generically denote as $\eta(v)$, where $v$ is some generic set of arguments to the nuisance parameter. For example, for observational studies, $m(x, w)$ and $e(x)$ are unknown, and must be estimated.
To avoid ``own observation'' bias when fitting these nuisance parameters, we can split the sample into two components, one for estimating the nuisance parameters, and the other for applying the score functions. To improve efficiency, \citet{zheng2011cross} suggested splitting the data into $k$ folds, and repeating the sample splitting estimates of the nuisance parameters on each fold. This approach is discussed extensively in \citet{ChernozhukovChDeDuHaNeRo18} under the name ``cross-fitting.'' Using cross-fitting with the scores discussed in this section and appropriate machine learning estimators of the nuisance parameters often achieves the needed assumptions on $\delta_i$ in Assumption~\ref{assume:nuisance} to be introduced in  Section~\ref{sec:asymptotics}.

\section{Asymptotics and Inference}
\label{sec:asymptotics}

In order for the RATE to be a useful tool for assessing priority scoring rules, we need to be
able to use RATE estimates as the basis for hypothesis tests for comparing different scoring
rules. In this section, we will provide a central limit theorem for a large family of RATE estimators
that will enable confidence intervals and hypothesis tests via resampling-based methods.
Throughout this section, we will assume for simplicity that that $S(X_i)$ has no ties; however, all our
results extend immediately to the case with ties via the randomized tiebreaking procedure discussed in
Supplementary Materials Section~A.2.

In Sections~\ref{sec:representation} and~\ref{sec:estimation}, we discussed two representations for the RATE and their respective estimators. Except for technical issues related to regularity conditions, these representations are equivalent (see Prop.~\ref{prop:to-Lstat}). It is most convenient to state regularity
assumptions and prove results for statistics of the weighted ATE form \eqref{eq:etaw}. On the other hand, the estimators~\eqref{eq:RATE_est} and~\eqref{eq:RATE_L_est} are similar in spirit, albeit not precisely equivalent. However, both can be represented in the following generalized form of \eqref{eq:RATE_L_est},
\begin{equation}
    \what{\theta} = \frac{1}{n}\sum_{j=1}^n w_n\p{\frac{j}{n}} \what{\Gamma}_{i(j)},
    \label{eq:general_est}
\end{equation}
where $w_n$ is now a sequence of empirical weight functions that depends on the sample size $n$. As long as $w_n$ converges to $w$, in an appropriate sense discussed below, any estimator of the form \eqref{eq:general_est} will have similar statistical behavior to the exact plug-in estimator \eqref{eq:RATE_L_est}.

We show that the estimator $\what{\theta}$ is asymptotically linear under certain conditions on the data generating distribution, the weight function of the metric, and the nuisance parameter estimates used to construct $\what{\Gamma}_i$. It follows that the asymptotic distribution can be approximated via the half-sample bootstrap (subsampling without replacement) which, as discussed further in Appendix A.4, enables construction of confidence intervals and hypothesis tests. For notational convenience, we define $Q = S(X)$, $Q_i = S(X_i)$, and $\overline{\tau}(q) = \E[\tau(X_i) \mid Q_i = q]$.

We now define 3 assumptions that are important to our results: note that the constants must simultaneously satisfy the constraints of all the needed assumptions, which are separated for clarity of defining different components that contribute to the estimator $\what{\theta}$.
\begin{assumption}[Weight Regularity]
\label{assume:weight-fn}
There are almost surely no ties in the scores $S(X)$, so that $F_S(s)$ is continuous. The weights $w_n$ and $w$ are asymptotically similar, and each is sufficiently diffuse: 
$w$ and $w_n$ are both squared integrable on $(0,1)$, and there exists $M<\infty$ and $b<1$ satisfying the conditions of Assumption~\ref{assume:regularity} and Assumption~\ref{assume:nuisance}, so that $|w(t)| \le B(t) \defeq M(t (1-t))^{-b}$ for $b < 1$, $
\E\left[\frac{1}{n} \sum_{j=1}^n \p{w_n\p{j/n} - w(1- F_S(Q_{i(j)}))}^2\right] \to 0,$
and
$
\limsup_{n \to \infty} \sum_{j=1}^n \p{w_n\p{j/n} - w\p{j/n}}^p < \infty~\text{for some}~p > 2.
$
\end{assumption}

\begin{assumption}[Data Regularity]
\label{assume:regularity}
There exists $C_g < \infty$ such that $\var(\Gamma_i^\ast \mid X_i=x) \le C_g$.
$\overline{\tau}(q)$ is of bounded variation. $\E[ |\overline{\tau}(Q)|^{r}] < \infty$ for some $r > 2$ satisfying $1/r + b < 1/2$ and $1/r \le 1 - 1/p$ with $b$ and $p$ defined in Assumption~\ref{assume:weight-fn}.
\end{assumption}

To satisfy the variance condition for the IPW or AIPW score, a sufficient condition is that there exists $C_v < \infty$ such that $\sigma_w^2(x) \defeq \var(Y(w) \mid X=x) \le C_v$ and $\var(Y(w)) \le C_v$, and
there exists $0 < C_e < \infty$ such that $C_e \le P(W=1 \mid X=x) \le 1 - C_e$.

\begin{assumption}[Nuisance Parameter Convergence]
\label{assume:nuisance}
(a) The $\delta_i$ are (uniformly) randomly partitioned into $K \in \N$ (independently of $n$) sets $\{I_k\}_{k=1}^K$, each containing $n/K$ items, with $\{\delta_i\}_{i\in S_k}$ independent, conditionally on $B_k = \{(Y_i, W_i, X_i)\}_{i \not \in I_k}$ and an event $G_k$ with $P(G_k) \to 1$. (b) The bias $ \E[\sqrt{n} w(1 - F_S(Q_i)) \delta_i \mid G_k, B_k] \to 0$, and $q \mapsto \E[w(1 - F_S(Q_i)) \delta_i \mid Q_i = q, G_k, B_k]$ is of bounded variation. (c) The higher moments of $\delta_i$ satisfy $\E[(1+B^2(F(Q_i)) (t(1-t))^{-2\epsilon}) \delta_{i}^2 \mid Q_i=q, G_k, B_k] \le \min\{\zeta_n v(q), C\}$ with $\epsilon >0$ for some $v$ satisfying $\E[v(Q_i)] < \infty$, $\zeta_n \to 0$ and $C < \infty$, and finally $\E[\delta_i^r] < \infty$ for $r$ satisfying the same conditions as in Assumption~\ref{assume:regularity}.
\end{assumption}
This assumption is satisfied if (i) the IPW score is used with a known propensity score (such as in a randomized experiment), or (ii) the AIPW score is used with nuisance parameters that are known or estimated with cross-fitting \citep{ChernozhukovChDeDuHaNeRo18} satisfying $\|\what{\pi}(\cdot) - \pi(\cdot)\|_{m, P}\max\{\|\what{\mu}_1(\cdot) - \mu_1(\cdot)\|_{m, P}, \|\what{\mu}_1(\cdot) - \mu_1(\cdot)\|_{m, P}\} = o_P(n^{-1/2})$, and $\|\what{\pi}(\cdot) - 0\|_{\infty}$ and $\|\what{\pi}(\cdot) - 1\|_{\infty}$ are both bounded away from 0 with high probability. To satisfy Assumption~\ref{assume:nuisance}(c), a norm of $m=2$ is often sufficient
(eg., if $w(\cdot)$ is bounded), but if the weight function $w(\cdot)$ heavily
upweights certain regions of $Q$, then we may require $m > 2$ (e.g.,
via H\"older's inequality). Assumption~\ref{assume:nuisance}(a) is a sufficient condition to allow the approximation error terms to be uncorrelated within each partition by allowing one to condition on the nuisance parameters fit using cross-fitting, although it can be generalized so long as $\cov(\delta_i, \delta_j \mid G_k, B_k) = o(1/n)$ for $i, j \in I_k$. This allows generalization to leave-one-out cross-fitting schemes under sufficiently regular nuisance parameter estimation methods.

Under these assumptions, we give our main result, showing that the estimated RATE metric is asymptotically linear (and therefore asymptotically normal).

\begin{theo}
\label{thm:asymp-linear}
Under Assumptions \ref{assume:weight-fn}-\ref{assume:nuisance},
$
    \sqrt{n}(\what{\theta} - \theta) = \frac{1}{\sqrt{n}} \sum_{i=1}^n \psi_i + o_P(1),
$
with $\psi_i = w(1-F_S(S(X_i))) (\Gamma^\ast_i - \overline{\tau}(S(X_i))) + \int_{S(X_i)}^\infty w(1-F_S(q)) \dif{\overline{\tau}(q)} - \theta$, where $\var(\psi_1) < \infty$. Thus,
\begin{equation}
  \sqrt{n}(\what{\theta} - \theta) \cd \normal\left(0, \var(\psi_1)\right).  
\end{equation}
\end{theo}

There are two key ideas in Theorem~\ref{thm:asymp-linear}. The first is quantifying the variation that comes from weighting observations by estimated quantiles of the prioritization rule instead of the true quantiles. This we connect to the theory of $L$-statistics \citep{VanDerVaart98,ShorackWe09}, by noticing that
$
    (1/n) \sum_{j=1}^n w_n\p{\frac{j}{n}} \overline{\tau}(Q_i),
$ is an $L$-statistic.

The second is the use of the scores \smash{$\hGamma_i$} in place of \smash{$\overline{\tau}(Q_i)$} in the estimator \smash{$\what{\theta}$}. Such scores are helpful for estimating any parameter that is a linear functional of $\tau(X_i)$. By the law of iterated expectations, the same is true for linear functionals of \smash{$\overline{\tau}(Q_i)$}, so long as the weights only depend on $Q_i$. The representation of RATEs in \eqref{eq:as-Lstat} show that RATEs satisfy this requirement. The approach for replacing $\tau(X_i)$ with \smash{$\hGamma_i$} has been used for average treatment effect estimation \citep{ChernozhukovChDeDuHaNeRo18}, conditional average treatment effect estimation \citep{kennedy2023towards}, policy learning \citep{athey2021policy}, and structural functions of treatment effects \citep{semenova2017debiased}. The key to the proof of Theorem~\ref{thm:asymp-linear} in the Supplementary Materials Section B.2 is extending this technique to $L$-statistics. The statistical efficiency of the estimator depends greatly on the form of the score used; we discuss the practical implications of these choices in Section~\ref{sec:comparing-weighting-functions}.

The CLT implies that Wald confidence intervals based on a consistent estimator of $\var(\psi_1)$ would have asymptotically correct coverage. However, the variance term is cumbersome to understand and estimate in closed form. Thus, we pursue inference using resampling methods instead. Asymptotic linearity with an influence function with finite variance implies the
validity of a wide variety of inference methods, such as many boostrap resampling methods
for inference \citep{Mammen12}. Typical proofs (eg., via Hadamard differentiability as
in \citet{VanDerVaart98}) would require additional assumptions on the weight function
for RATE metrics. However, certain inference procedures based on half-sampling can
be justified via asymptotic linearity alone \citep{chung2013exact}. Building on this
observation, we here show that the half-sample bootstrap \citep{Efron82,praestgaard1993exchangeably}
enables us to build confidence intervals for the RATE based on Theorem \ref{thm:asymp-linear}. We conjecture that the standard nonparametric bootstrap would provide valid inference as well; however due to technical challenges in the proof, we do not pursue this result here.

\begin{lemm}
\label{lem:half-boot}
For some parameter $\beta$, let $\what{\beta}$ be an estimator from iid data $(Z_i)_{i=1}^n$ that satisfies $\sqrt{n}(\what{\beta} - \beta) = \frac{1}{\sqrt{n}}\sum_{i=1}^n \psi(Z_i) + o_P(1) \cd Z$ for a fixed, measurable function $\psi$ such that $\E[\psi^2(Z_i)] < \infty$ and $\E[\psi(Z_i)] = 0$. Let $\what{\beta}^\ast$ be the estimate using a random sample (without replacement) of $\lfloor n/2 \rfloor$ of the $Z_i$. Then, conditionally on $(Z_i)_{i=1}^n$, $\sqrt{n}(\what{\beta}^\ast - \what{\beta}) \cd Z$, as well.
\end{lemm}

Theorem~\ref{thm:asymp-linear} shows that $\what{\theta}$ satisfies the assumptions of Lemma \ref{lem:half-boot}, so  we immediately get:

\begin{coro}
Under the assumptions of Theorem~\ref{thm:asymp-linear}, let $V$ be the asymptotic distribution of
$ \sqrt{n}(\what{\theta} - \theta). $
Let $\what{\theta}^\ast$ be an estimate of a RATE using a random sample (without replacement) of $\lfloor n/2 \rfloor$ of the observations. Under the same assumptions as Theorem~\ref{thm:asymp-linear},
$
  \sqrt{n}(\what{\theta}^\ast - \what{\theta}) \cd V,
$
conditionally on $(X_i, Y_i, W_i)_{i=1}^n.$
\label{coro:bootstrap}
\end{coro}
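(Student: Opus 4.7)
The plan is to combine Theorem~\ref{thm:asymp-linear}, applied both to the full sample and to the $n/2$-sized half-sample, with a finite-population central limit theorem for sampling without replacement. The main obstacle will be re-applying Theorem~\ref{thm:asymp-linear} on the half-sample while the nuisance parameters have been fit on the full sample.

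Let $B_i \in \{0,1\}$ indicate inclusion of the $i$-th observation in the half-sample, so $\sum_i B_i = n/2$ and, marginally, the selected observations are i.i.d.\ from the underlying distribution. Applying Theorem~\ref{thm:asymp-linear} at sample size $n/2$ to $\what{\theta}^*$ gives
\[
\sqrt{n/2}\,(\what{\theta}^* - \theta) \;=\; \frac{1}{\sqrt{n/2}} \sum_{i: B_i = 1} \psi_i \;+\; o_P(1),
\]
where $\psi_i$ is the same influence function as in Theorem~\ref{thm:asymp-linear} because it depends only on population-level quantities. Multiplying through by $\sqrt{2}$ and subtracting the full-sample expansion $\sqrt{n}(\what{\theta} - \theta) = n^{-1/2} \sum_{i=1}^n \psi_i + o_P(1)$ yields the key identity
\[
\sqrt{n}\,(\what{\theta}^* - \what{\theta}) \;=\; \frac{1}{\sqrt{n}} \sum_{i=1}^n (2 B_i - 1)\,\psi_i \;+\; o_P(1).
\]

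Conditioning on the data $\mathcal{D}_n = \{(X_i, Y_i, W_i)\}_{i=1}^n$ fixes the $\psi_i$. The signs $\varepsilon_i := 2B_i - 1 \in \{-1, +1\}$ are exchangeable with $\E[\varepsilon_i] = 0$, $\var(\varepsilon_i) = 1$, and $\cov(\varepsilon_i, \varepsilon_j) = -1/(n-1)$ for $i \ne j$. A short calculation gives the conditional variance of $n^{-1/2}\sum_i \varepsilon_i \psi_i$ as
\[
\frac{1}{n-1}\sum_{i=1}^n \psi_i^2 \;-\; \frac{1}{n(n-1)}\Bigl(\sum_{i=1}^n \psi_i\Bigr)^2,
\]
which converges in probability to $\E[\psi_i^2] = \var(V)$ by the law of large numbers (using $\E[\psi_i] = 0$ to control the second term). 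A H\'ajek-type CLT for simple random sampling without replacement then delivers conditional asymptotic normality of $n^{-1/2}\sum_i \varepsilon_i \psi_i$; the Lindeberg-type condition is inherited from $\E[\psi_i^2] < \infty$ via a standard truncation argument. Combined with the $o_P(1)$ remainder, this yields $\sqrt{n}(\what{\theta}^* - \what{\theta}) \cd V$ conditionally on $\mathcal{D}_n$.

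The most delicate step is the first one. Theorem~\ref{thm:asymp-linear} is stated for an i.i.d.\ sample of size $n$, whereas $\what{\theta}^*$ is computed on $n/2$ subsampled observations that typically reuse the nuisance functions fit on the original $n$ observations. One must verify that this reuse does not violate Assumption~\ref{assume:nuisance} on the sub-sample---plausible because the cross-fit residuals $\delta_i$ already satisfy the required moment bounds uniformly---and that the empirical weight convergence in Assumption~\ref{assume:weight-fn} still holds at sample size $n/2$. Both adjustments are mild but need to be made explicit, which presumably is the content of Lemma~\ref{lem:half-boot} in the Supplementary Materials.
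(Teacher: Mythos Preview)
Your proposal is correct and follows the same overall skeleton as the paper: apply the asymptotic linearity of Theorem~\ref{thm:asymp-linear} to both the full sample and the half-sample, subtract, and obtain the representation $\sqrt{n}(\what{\theta}^*-\what{\theta}) = n^{-1/2}\sum_i V_i\psi_i + o_P(1)$ with $V_i = 2B_i - 1 \in \{-1,+1\}$ and $\sum_i V_i = 0$. The difference lies in how the conditional CLT for the signed sum is established. The paper's Lemma~\ref{lem:half-boot} uses a \emph{binomialization} trick: it couples the constrained signs $V_i$ to i.i.d.\ Rademacher variables $\tilde V_i$ by randomly flipping $|N_n - n/2|$ of them, where $N_n \sim \operatorname{Bin}(n,1/2)$, invokes the multiplier CLT for $n^{-1/2}\sum_i \tilde V_i \psi_i$, and then shows the coupling error $n^{-1/2}\sum_i(\tilde V_i - V_i)\psi_i$ is $o_P(1)$ via a finite-population variance calculation. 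Your route via a H\'ajek-type finite-population CLT is more direct---no auxiliary coupling is needed, and the Lindeberg-type negligibility condition indeed follows from $\E[\psi_1^2]<\infty$ since $\max_i \psi_i^2 / \sum_j \psi_j^2 \to 0$ in probability for i.i.d.\ summands. Both approaches yield the same conclusion; the paper's version reduces to a textbook multiplier CLT at the cost of extra bookkeeping for the binomialized remainder, while yours appeals to a slightly less standard but equally classical result. Your caveat about reusing full-sample nuisance fits on the half-sample is apt; the paper handles this exactly as you anticipate, simply invoking asymptotic linearity at sample size $n/2$ without separately re-verifying Assumption~\ref{assume:nuisance} there.
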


From these results, we get a few corollaries justifying the asymptotic linearity of, and confidence intervals for, the previously proposed evaluation metrics. This generalizes existing results for these metrics that applied primarily to fully randomized trials. \citet{ImaiL19} show results for the Qini metric under fully randomized experiments using Neyman repeated sampling techniques. \citet{ZhaoTiCaClWe13} show consistency of the AUTOC under fully randomized experiments, but not the asymptotic distribution. We significantly generalize these results, by showing that they would hold for any CATE score $\hat{\Gamma}$ satisfying Assumption~\ref{assume:nuisance}. This allows application of our asymptotic results to observational data under unconfoundedness or using various other identification techniques.

First, we show that any weight function $w : [0, 1] \to \R^+$ that is squared-integrable and uniformly continuous satisfies the assumptions on the weight function from Theorem~\ref{thm:asymp-linear}.

\begin{lemm}
Assume that $w : [0, 1] \to \R^+$ is squared-intergrable and uniformly continuous on its range, $[0, 1]$. Then, $ E[\frac{1}{n}\sum_{j=1}^n (w(j/n) - w(1-F_S(S(X_{i(j)}))))^2] \to 0.$
\label{lem:unif-cts}
\end{lemm}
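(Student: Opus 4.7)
The plan is to reduce the claim to the uniform Glivenko--Cantelli convergence of the empirical distribution function of the uniform random variables $U_i := 1 - F_S(S(X_i))$, combined with a modulus-of-continuity argument.

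First I would set up the change of variable. Because $F_S$ is continuous (Assumption~\ref{assume:weight-fn}), the random variables $U_i = 1 - F_S(S(X_i))$ are i.i.d.\ Uniform$(0,1)$. Sorting the $X_i$ in decreasing order of $S(X_i)$ corresponds to sorting the $U_i$ in increasing order, so $U_{(j)} := 1 - F_S(S(X_{i(j)}))$ is exactly the $j$-th order statistic of $U_1,\ldots,U_n$. The goal is therefore to show
\begin{equation*}
\E\!\left[\frac{1}{n}\sum_{j=1}^n \bigl(w(j/n) - w(U_{(j)})\bigr)^2\right] \to 0.
\end{equation*}

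Next I would invoke the hypotheses on $w$. Since $w$ is uniformly continuous on $[0,1]$, it is bounded (say $\|w\|_\infty \le B$) and admits a non-decreasing modulus of continuity $\omega_w(\delta) := \sup\{|w(a)-w(b)| : a,b\in[0,1], |a-b|\le \delta\}$ with $\omega_w(\delta)\to 0$ as $\delta\to 0^+$. Hence
\begin{equation*}
\bigl(w(j/n)-w(U_{(j)})\bigr)^2 \le \omega_w\!\bigl(|j/n - U_{(j)}|\bigr)^2,
\end{equation*}
and since $\omega_w$ is non-decreasing,
\begin{equation*}
\frac{1}{n}\sum_{j=1}^n \bigl(w(j/n)-w(U_{(j)})\bigr)^2 \;\le\; \omega_w\!\Bigl(\max_{1\le j \le n} \bigl|j/n - U_{(j)}\bigr|\Bigr)^2.
\end{equation*}

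The last ingredient is controlling $\max_j |j/n - U_{(j)}|$. Writing $F_n$ for the empirical CDF of $U_1,\ldots,U_n$, we have $F_n(U_{(j)}) = j/n$, so
\begin{equation*}
\max_{1\le j \le n}\bigl|j/n - U_{(j)}\bigr| \;=\; \max_{1\le j \le n}\bigl|F_n(U_{(j)}) - U_{(j)}\bigr| \;\le\; \sup_{t\in[0,1]}|F_n(t) - t|,
\end{equation*}
and the right-hand side tends to zero almost surely by the Glivenko--Cantelli theorem (alternatively, by the DKW inequality). Since $\omega_w$ is continuous at $0$ with $\omega_w(0)=0$, the upper bound goes to $0$ almost surely. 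Finally, the whole average is deterministically bounded by $(2B)^2$, so the dominated convergence theorem converts the a.s.\ convergence into convergence of the expectation, completing the proof.

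The argument is essentially routine; the only place requiring care is the passage from $|j/n - U_{(j)}|$ to a sup-norm empirical process statement, but once this is recognized as a Glivenko--Cantelli observation, the rest is a clean modulus-of-continuity plus bounded-convergence argument. No deep probability machinery is needed beyond Glivenko--Cantelli.
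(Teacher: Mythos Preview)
Your proof is correct and follows essentially the same route as the paper's: both reduce the claim to the modulus of continuity of $w$ applied to the sup-norm difference between the empirical and true distribution functions, then invoke Glivenko--Cantelli and bounded convergence. Your presentation is slightly more explicit in passing to uniform order statistics, but the argument is the same.
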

See Section~B.4 in the Supplementary Materials for proof. This immediately implies the following corollary for the Qini score.
\begin{coro}
The Qini coefficient, for completely randomized trials as initially described in \cite{Radcliffe07}, or for observational data satisfying the no unobserved confounding assumption using the AIPW score, is $\sqrt{n}$-consistent and asymptotically linear, as long as the data satisfies Assumption~\ref{assume:regularity} and the nuisance parameter estimates satisfy Assumption~\ref{assume:nuisance}.
\end{coro}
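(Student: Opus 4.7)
The plan is to show the Qini fits directly into the framework of Theorem~\ref{thm:asymp-linear} by exhibiting the weight function $w$ for its weighted ATE representation and verifying Assumption~\ref{assume:weight-fn}; Assumptions~\ref{assume:regularity} and~\ref{assume:nuisance} are part of the hypothesis. From \eqref{eq:wrep} (or equivalently from the centered representation \eqref{eq:etaw}), the Qini corresponds to the weight function $w(t) = -t$, or in centered form $w(t) = 1/2 - t$. This $w$ is Lipschitz on $[0,1]$, bounded by $1$, and square-integrable, so the envelope condition $|w(t)| \le M(t(1-t))^{-b}$ of Assumption~\ref{assume:weight-fn} holds trivially with $b = 0$ and $M = 1$.

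Next I would dispatch the two convergence conditions on the empirical weights $w_n$. For the $L^2$ condition
\[
\E\left[\frac{1}{n}\sum_{j=1}^n \p{w_n(j/n) - w(1 - F_S(Q_{i(j)}))}^2\right] \to 0,
\]
Lemma~\ref{lem:unif-cts} applies directly, since $|w|$ (or equivalently the centered Qini weight) is both uniformly continuous and square-integrable on $[0,1]$. For the $L^p$ limsup condition, the natural Qini estimator in \eqref{eq:RATE_L_est} uses $w_n \equiv w$ (no discretization is needed because $w$ is smooth and available in closed form), so each summand vanishes identically and the condition is immediate.

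Finally, Assumptions~\ref{assume:regularity} and~\ref{assume:nuisance} are imposed by hypothesis. In the completely randomized trial case, the IPW score \eqref{eq:ipw_score} gives $\delta_i \equiv 0$, so Assumption~\ref{assume:nuisance} reduces to a triviality; in the observational case with unconfoundedness and cross-fit AIPW, the product-rate condition $\|\what{\pi} - \pi\|_{m,P}\max\{\|\what{\mu}_w - \mu_w\|_{m,P}\} = o_P(n^{-1/2})$ from the discussion following Assumption~\ref{assume:nuisance} supplies what is needed (with $m=2$ being fine, since $w$ is bounded). Theorem~\ref{thm:asymp-linear} then yields $\sqrt{n}(\what{\theta} - \theta) = n^{-1/2} \sum_i \psi_i + o_P(1)$ with a mean-zero, finite-variance influence function, which is exactly the asymptotic linearity claim and implies $\sqrt{n}$-consistency.

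The main obstacle would have been verifying the empirical-weight convergence in Assumption~\ref{assume:weight-fn}, but Lemma~\ref{lem:unif-cts} has already reduced this to uniform continuity plus square-integrability of $w$, and the Qini weight $w(t)=1/2-t$ trivially satisfies both. Consequently the corollary is essentially a bookkeeping check: the substantive work is carried by Theorem~\ref{thm:asymp-linear} and Lemma~\ref{lem:unif-cts}.
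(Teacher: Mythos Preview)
Your proposal is correct and follows exactly the paper's intended route: the corollary is stated immediately after Lemma~\ref{lem:unif-cts} with the remark that it ``immediately implies'' the Qini result, and your argument spells this out---identify the Qini weight $w(t)=1/2-t$, note it is bounded and uniformly continuous so Lemma~\ref{lem:unif-cts} verifies Assumption~\ref{assume:weight-fn} (with $b=0$ and $w_n\equiv w$), and then invoke Theorem~\ref{thm:asymp-linear}. The additional bookkeeping you include about the IPW/AIPW cases is accurate and in line with the paper's discussion following Assumption~\ref{assume:nuisance}.
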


The weight function for the AUTOC is not uniformly continuous. Nonetheless, it still satisfies Assumption~\ref{assume:weight-fn}, according to the following proposition, also proved in Section~B.5.
\begin{prop}
\label{prop:autoc-weight}
The weight function $w_n(t) = H_n - H_{\lfloor nt \rfloor + 1} - 1$ for the AUTOC satisfies the conditions of Assumption~\ref{assume:weight-fn} with $w(t) = -\log(t) - 1$. Therefore, above estimator of the AUTOC is $\sqrt{n}$-consistent and asymptotically linear, as long as the data satisfies Assumption~\ref{assume:regularity} and the nuisance parameter estimates satisfy Assumption~\ref{assume:nuisance} with any $r > 2$.
\end{prop}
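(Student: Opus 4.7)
The plan is to verify the three structural requirements of Assumption~\ref{assume:weight-fn} for the AUTOC pair $(w_n, w)$---the envelope bound on $|w|$, the $L^p$-control of $w_n - w$, and the $L^2$-matching between $w_n(j/n)$ and $w(1 - F_S(Q_{i(j)}))$---and then conclude asymptotic linearity and validity of the half-sample bootstrap from Theorem~\ref{thm:asymp-linear} and Corollary~\ref{coro:bootstrap}. Square integrability of $w(t) = -\log t - 1$ on $(0,1)$ is immediate, and $\int_0^1 w(t)\,dt = 0$ shows $w$ is a valid centered weight.

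For the envelope, the elementary inequality $|\log t| \le (eb)^{-1} t^{-b}$ on $(0, 1)$ gives $|w(t)| \le M(t(1-t))^{-b}$ for any $b \in (0, 1)$, so $b$ may be chosen arbitrarily small to meet the compatibility constraints with $r > 2$ in Assumption~\ref{assume:regularity} and $p > 2$ in Assumption~\ref{assume:nuisance}. For the $L^p$ bound on $w_n - w$, I would apply the Euler--Maclaurin expansion $H_m = \log m + \gamma + O(1/m)$ uniformly in $m$ to get
\begin{equation*}
w_n(j/n) - w(j/n) = (H_n - H_{j+1}) + \log(j/n) = -\log(1 + 1/j) + O(1/j + 1/n),
\end{equation*}
so $|w_n(j/n) - w(j/n)| \le C/j$ for $1 \le j \le n$ with $C$ independent of $n$. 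Hence $\sum_{j=1}^n |w_n(j/n) - w(j/n)|^p \le C^p \sum_{j \ge 1} j^{-p} < \infty$ for any $p > 1$; in particular, the condition holds for some (any) $p > 2$.

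For the $L^2$-matching condition, I would exploit that continuity of $F_S$ makes $\{1 - F_S(Q_{i(j)})\}_{j=1}^n$ distributionally equal to the sequence of uniform order statistics $\{U_{(j)}\}_{j=1}^n$. Splitting by Minkowski,
\begin{equation*}
\mathbb{E}\left[\frac{1}{n}\sum_{j=1}^n (w_n(j/n) - w(U_{(j)}))^2\right] \le \frac{2}{n}\sum_{j=1}^n (w_n(j/n) - w(j/n))^2 + \frac{2}{n}\sum_{j=1}^n \mathbb{E}\bigl[(w(j/n) - w(U_{(j)}))^2\bigr],
\end{equation*}
so that the first term is $O(1/n)$ by the previous step. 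For the second, since $w(j/n) - w(U_{(j)}) = \log(nU_{(j)}/j)$ and $U_{(j)} \sim \operatorname{Beta}(j, n+1-j)$, I would invoke the identities $\mathbb{E}[\log U_{(j)}] = \psi(j) - \psi(n+1)$ and $\operatorname{Var}(\log U_{(j)}) = \psi_1(j) - \psi_1(n+1)$, together with the bounds $\psi(j) = \log j + O(1/j)$ and $\psi_1(j) \le C/j$ uniformly in $j \ge 1$, to show $\mathbb{E}[\log^2(nU_{(j)}/j)] \le C'/j$. Summing then yields $\frac{1}{n}\sum_{j=1}^n C'/j = O((\log n)/n) \to 0$, completing the verification.

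The main obstacle is ensuring uniformity of the bound $\mathbb{E}[\log^2(nU_{(j)}/j)] \le C'/j$ for small $j$, where the digamma and trigamma asymptotic expansions are not tight; this should be handled by using the exact values $\psi(1) = -\gamma$ and $\psi_1(1) = \pi^2/6$ together with monotonicity of $\psi_1$ on $[1,\infty)$ to cover the $j = O(1)$ regime by direct computation, with the expansions taking over for large $j$.
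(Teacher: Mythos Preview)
Your argument is correct. The envelope and $L^p$ bounds are essentially the paper's (you are in fact more explicit about the envelope $|w(t)|\le M(t(1-t))^{-b}$ and about the $p>2$ case; the paper only writes out $p=2$). The substantive difference is in the $L^2$-matching step. The paper splits the sum at $j=n^\epsilon$, controls the extreme indices $j\le n^\epsilon$ by bounding $\E[\log^2 G_S(Q_{i(1)})]$ via the density of the minimum order statistic, and controls the bulk via a second-order Taylor expansion of $\log$ combined with the DKW inequality on $\|\hat G_S-G_S\|_\infty$; this yields a rate of $O(n^{1-2\epsilon})$ after tuning $\epsilon\in(1/2,1)$. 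Your approach instead uses the exact Beta-log moments $\E[\log U_{(j)}]=\psi(j)-\psi(n{+}1)$ and $\var(\log U_{(j)})=\psi_1(j)-\psi_1(n{+}1)$ to obtain the uniform termwise bound $\E[\log^2(nU_{(j)}/j)]\le C'/j$, giving the sharper $O((\log n)/n)$ with no case split and no empirical-process input. The small-$j$ concern you flag is harmless: since $\psi(j)-\psi(n{+}1)=H_{j-1}-H_n$, the same harmonic-number bound $|H_m-\log m-\gamma|\le 1/(2m)$ you already used gives $|\log(n/j)+\psi(j)-\psi(n{+}1)|\le C/j$ uniformly in $j\ge 1$ (the case $j=1$ reduces to $|\log n-H_n|\le 1$), and $\psi_1(j)=\sum_{k\ge 0}(j+k)^{-2}\le 2/j$ holds for every $j\ge 1$ including $j=1$ where $\psi_1(1)=\pi^2/6<2$, so no separate treatment of the $j=O(1)$ regime is needed.
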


\section{Choosing a RATE Metric and Score}
\label{sec:comparing-weighting-functions}
The generality of RATE metrics and their estimators discussed in Section~\ref{sec:estimation} mean that there are important choices that need to be made when using RATE metrics in an application. In particular, one must choose a weighting function $\alpha(\cdot)$ and a form for the score $\what{\Gamma}_i$ as in Section~\ref{sec:scoring-rules}. These choices affect the signal-to-noise ratio of the RATE metric, therefore increasing or decreasing the power of resulting the hypothesis tests.

We begin by considering which weighting function one should use. 
Overall, we conclude that the RATE metric with the best signal-to-noise ratio (and thus power for testing whether the RATE for a given prioritization rule is different from $0$) depends on how the treatment effects differ across quantiles of the prioritization rule. To illustrate this, we construct simulations representing (1) a scenario in which almost all subjects exhibit a linearly varying CATE, (2) a scenario in which only a small portion of the population experiences a varying CATE, and (3) a scenario in between (1) and (2).

We draw $n=400$ samples from a standard uniform distribution $X_i \sim \textnormal{Unif}(0, 1)$, and generate potential outcomes according to the following model:
\begin{gather}
\begin{cases}
    Y_i(w) = \mu_w(X_i) + \varepsilon_{i}(w),~\text{where} \\
    \mu_0(x) = 0 ~\mbox{and}~
    \mu_1(x) = \max{\p{-\frac{2}{p^2} x + \frac{2}{p}, 0}} 
\end{cases}
\label{eqn:mu0-qini-vs-autoc}
\end{gather}
and $\varepsilon_{i}(w) \sim \normal{}(0, 0.2)$ represents i.i.d. random noise and $p$ is a simulation-specific parameter representing the proportion of individuals for whom the CATE is non-zero. We draw the treatment assignment randomly with probability 0.5, so that $e = P\p{W_i = 1} = P\p{W_i = 0} = 0.5$.
Then, we consider the prioritization rule $S(X_i) = 1 - X_i$, which is ``perfect'' in the sense that, for all $x_i$ and $x_j$ in $[0, 1]$, $S(x_i) \geq S(x_j)$ implies $\tau(x_i) \geq \tau(x_j)$. When $p=1.0$, the CATE is nonzero for all subjects and varies linearly over quantiles of the prioritization rule. When $p=0.1$, only a small subset of the population has a nonzero CATE, but the treatment effect is large and changes quickly with the quantile.

\begin{figure}[ht]
    \begin{subfigure}{0.32\textwidth}
        \includegraphics[scale=0.6]{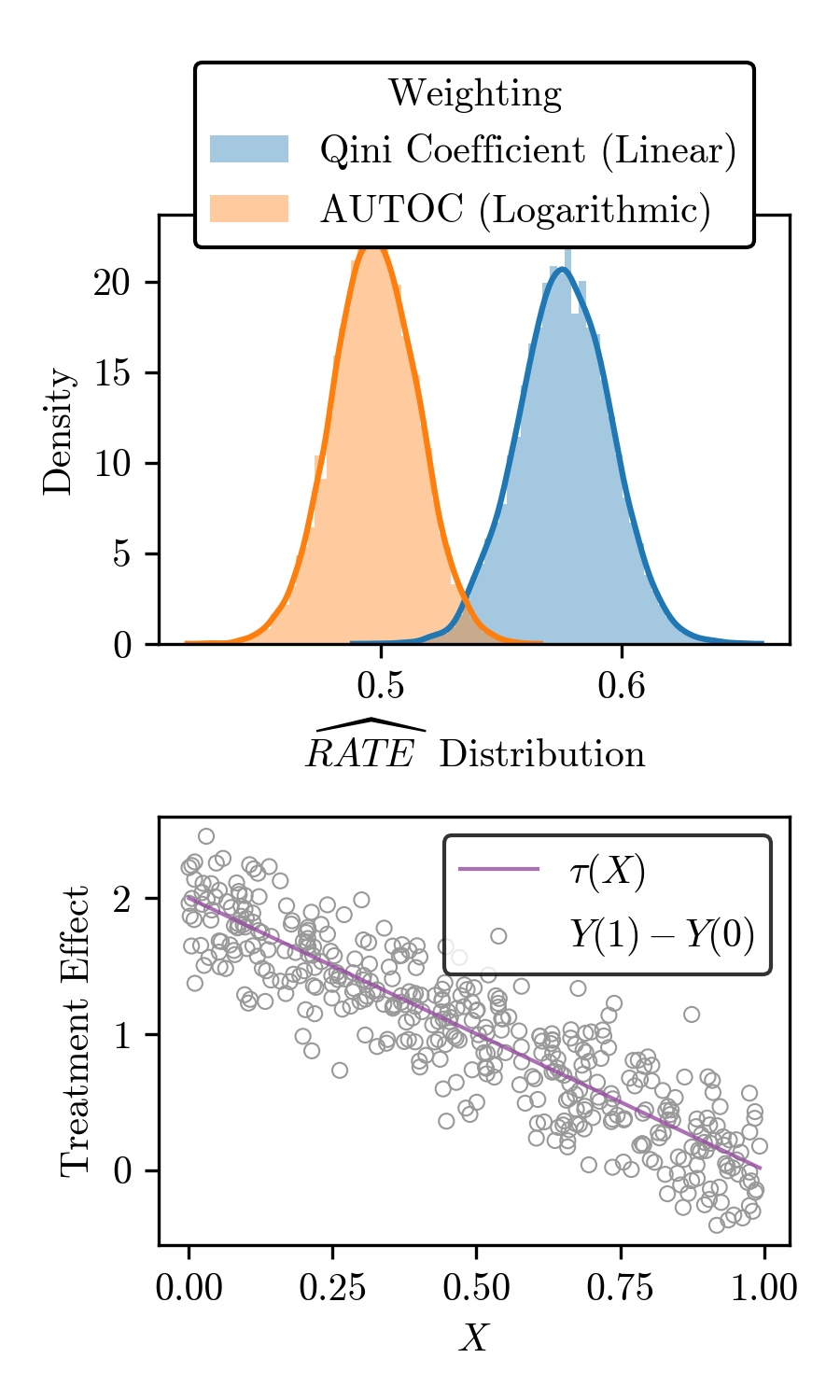} 
        \caption{100\% have $|\tau(X_i)| > 0$}
    \end{subfigure}
    \begin{subfigure}{0.32\textwidth}
        \includegraphics[scale=0.6]{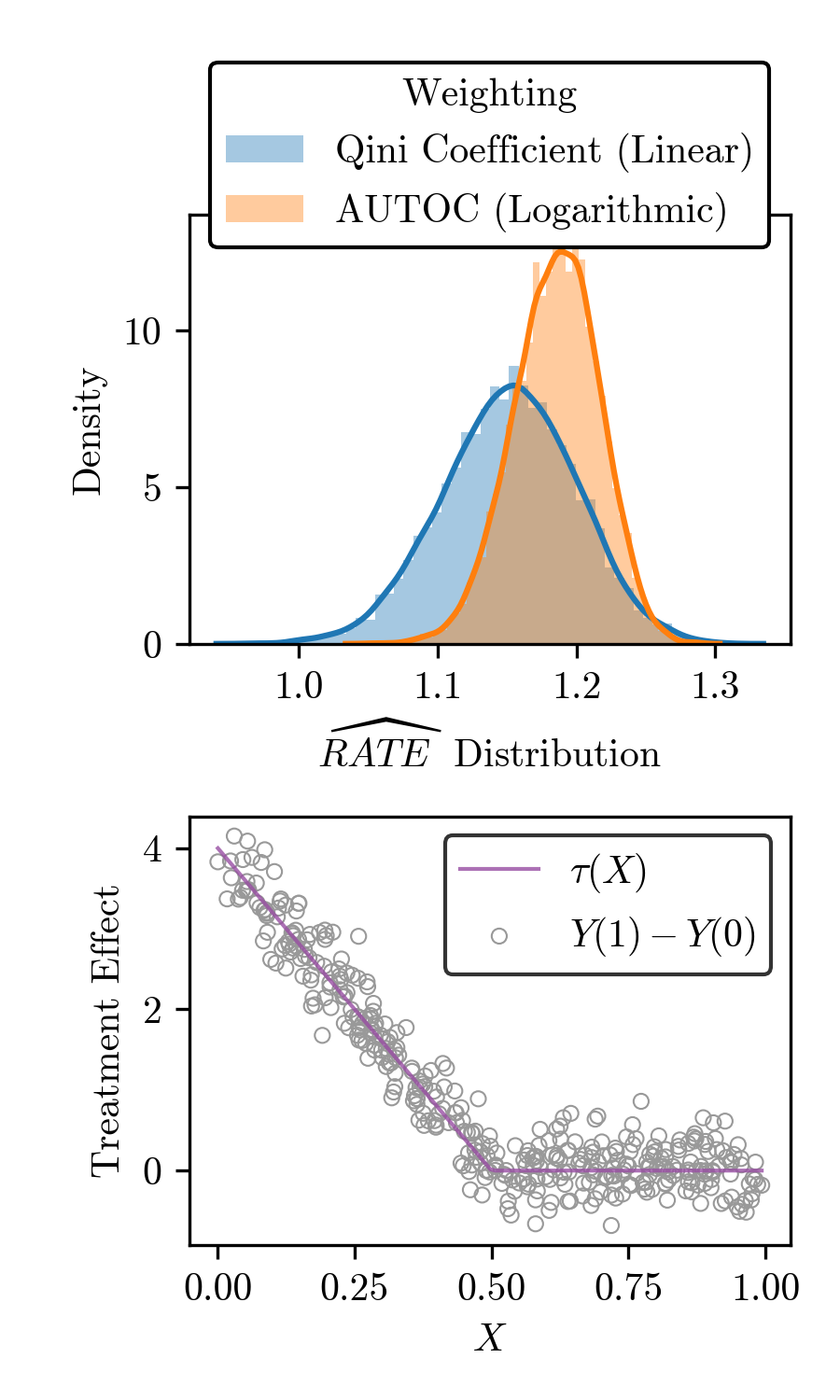}
        \caption{50\% have $|\tau(X_i)| > 0$}
    \end{subfigure}
    \begin{subfigure}{0.32\textwidth}
        \includegraphics[scale=0.6]{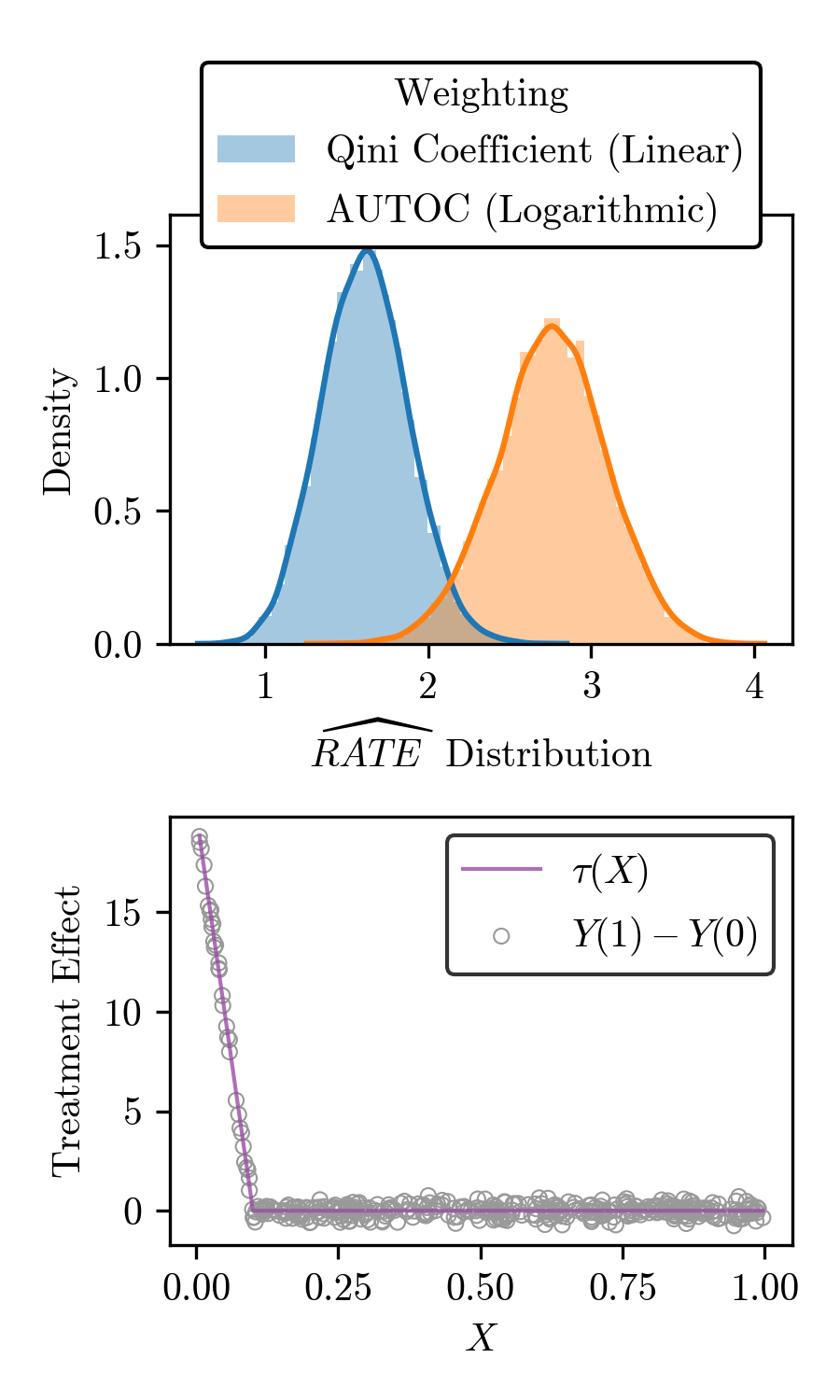}
        \caption{10\% have $|\tau(X_i)| > 0$}
    \end{subfigure}
    \caption{Comparison of linear (Qini) vs. logarithmic (AUTOC) weighting functions. If treatment effects are nonzero for a significant proportion of the population (e.g., left figure) the power of the estimated RATE when using linear weighting (Qini) tends to be greater than when using logarithmic (AUTOC) weighting. Conversely, if nonzero treatment effects are concentrated among a small proportion of the population (e.g., right figure) we see that using logarithmic (AUTOC) weighting leads to a greater power for the estimated RATE relative to using linear (Qini) weighting.}
    \label{fig:comparing_autoc_to_qini}
\end{figure}

For each $p$ in $\{1.0, 0.5, 0.1\}$, we simulate a dataset and calculate both the AUTOC and Qini metrics using oracle AIPW scores $\Gamma^{\ast}_i$ as given in Equation \ref{eqn:oracle-aipw-score}. Figure \ref{fig:comparing_autoc_to_qini} shows the distribution of the estimates of each metric over $10000$ Monte Carlo simulations\footnote{
\spacingset{1}\footnotesize
We note that, given some fixed sample size, $n$, the variance of the weights applied to each doubly robust score in the AUTOC is 1 while the variance of the weights applied to these same scores for the Qini coefficient approaches 0.5 for large $n$. (This limit can be calculated by computing the integral for variance of the weights in the representation in Eq.~\eqref{eq:wrep}, with the helpful observation that $F_S(Q_i)$ is uniform on $(0,1)$). We rescaled the Qini coefficient weights to also have variance 1 for the purposes of this analysis, in order to fairly compare the two metrics. While this rescaling changes the value of the point estimate for these methods, it does not change the statistical power of the two approaches.}.

When all or a substantial portion (e.g., $>50\%$) of the treatment effects are both heterogeneous and non-zero, using the Qini metric can lead to a higher RATE compared to the AUTOC. However, when only a small subset of individuals have a non-zero heterogeneous treatment effect (e.g., $
\leq 10\%$), using the AUTOC can lead to a higher RATE and thus be advantageous. This example highlights a broader intuition about when and why one might choose the AUTOC vs Qini coefficient as a RATE metric: If a researcher believes that only a small subset of their study population experiences nontrivial heterogeneous treatment effects, they should use logarithmic weighting; if they believe that heterogeneous treatment effects are diffuse and substantial across the entire study population, they should use linear weighting. Following this guideline should, in general, increase the statistical power of tests against the absence of heterogeneous treatment effects.

A second important consideration for researchers looking to use RATE metrics concerns how best to generate score estimates, $\hGamma_i$, for each individual the evaluation dataset. Here, we focus on choosing a score in the context of data with no unobserved confounding.

As discussed in Section \ref{sec:scoring-rules}, both IPW scores and AIPW scores satisfy necessary conditions (Assumption~\ref{assume:nuisance}) for RATE estimation in randomized trials and sometimes do for observational study settings with no unobserved confounding, as well;\footnote{
\spacingset{1}\footnotesize
The conditions under which the IPW satisfies Assumption~\ref{assume:nuisance} are typically more restrictive than for the AIPW, therefore making the comparison meaningful primarily when the IPW estimator is $\sqrt{n}$-consistent.} however, AIPW scores can yield lower variance score estimates in both settings. In a simple experiment, we highlight the significant practical impact of using AIPW score estimates to reduce RATE estimate variance and improve power in a randomized trial.

\begin{wrapfigure}{R}{0.65\textwidth}
    \begin{center}
    \includegraphics[trim=0mm 10mm 0mm 15mm, width=0.6\textwidth]{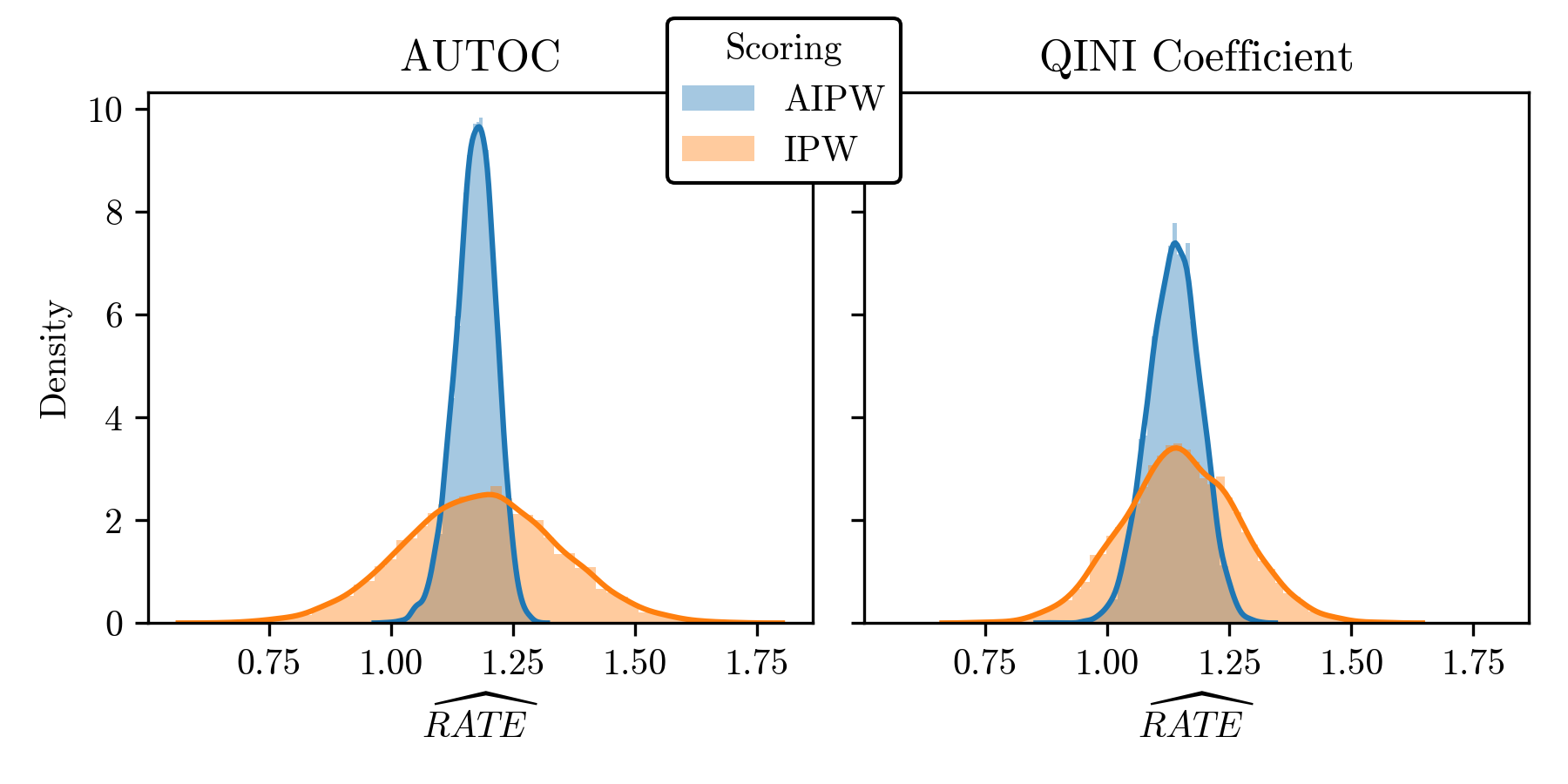}
    \end{center}
    \caption{Comparison of RATE estimates obtained using Inverse-Propensity Weighting (IPW) vs. Augmented IPW scores. Using AIPW scores yield RATE estimates with lower variance and greater statistical power compared to RATE estimates using IPW scores.\label{fig:comparing-aipw-to-ipw}}
\end{wrapfigure}

We consider the same simulation setup as in Section \ref{sec:comparing-weighting-functions}.
 Here, however, instead of using oracle AIPW scores (\ref{eqn:oracle-aipw-score}) to calculate the RATE, we use estimated IPW scores and AIPW scores. In calculating AIPW scores, we use nuisance parameters 
 $\hat{m}(x, w)$ estimated using a random forest regression model with cross-fitting \citep{chernozhukov2018generic}. Thus in each of 10000 Monte Carlo simulations, we generate a synthetic sample according to \eqref{eqn:mu0-qini-vs-autoc}, estimate both the AUTOC and Qini coefficient using IPW and AIPW scores on that sample, and compare the final distributions of 10000 RATE point estimates.

Figure \ref{fig:comparing-aipw-to-ipw} shows the resulting distributions. The AIPW score yields lower variance estimates of the RATE compared to the IPW score, consistent with the improved efficiency of AIPW scores for estimating the average treatment effect \citep{robins1994estimation}. The AIPW score improves the precision of the point estimate and the statistical power of derived hypothesis tests. We recommend using AIPW scores for estimating the RATE in both randomized trial and observational study settings.

\section{Aspirin and Stroke}
\label{sec:ist}

As a first application of RATE, we revisit the International Stroke Trial (IST) introduced
in Section \ref{sec:motivation} \citep{group1997international,sandercock2011international}.
It assessed the treatment effects of Aspirin, Heparin (an anticoagulant), both, or neither for
patients with presumed acute ischaemic stroke in a factorial design on 19,435 patients across 36 countries.
Primary outcomes included (1) death within 14 days of stroke onset, and (2) death or dependency at 6 months.
Follow-up for the primary outcomes was 99\% complete. We focus on the outcome of death or
dependency at 6 months and restrict our analysis to those patients for whom this outcome was recorded.
We also only analyze the effect of Aspirin, irrespective of Heparin assignment status. Consistent with
the original trial paper, we employ an intention-to-treat analysis approach \citep{group1997international}.

We estimate RATE metrics for several prioritization rules on IST, comprising both CATE-based and risk-based approaches. We evaluate 
\begin{enumerate}
    \item  A prognostic random forest risk model \citep{breiman2001random} trained on individuals assigned to the control arm (this model was used to create quintiles in Figure \ref{fig:IST:CATE-vs-risk});
    \item A logistic regression risk model trained on 
    \item A causal forest model \citep{athey2019generalized} that specifically targets the CATE; and
    \item A logistic regression T-learner \citep{kunzel2019metalearners}.
\end{enumerate}
Here, treatment effects with a negative sign are desirable (i.e., they correspond to reduced mortality).
Our goal is to investigate whether we can identify a subgroup who benefits less than average from aspirin
(and so should potentially not be prescribed aspirin), and we hypothesize that patients with a high baseline
risk may form such a group. Thus, when evaluating prognostic rules we rank patients with high risk first and
when evaluating CATE-based rules we rank patients with positive CATE first. We expect to get a positive RATE
in doing so, meaning that we expect to find that patients we believe should benefit less from aspirin in fact do.

In order to avoid overfitting, rules are first trained on a 50\% train split, then evaluated on a 50\% test split. 
For each prioritization rule considered, we learn the optimal parameters of the prioritization rule on the train set, and then generate a point estimate and Gaussian half-sample bootstrap 95\%-confidence interval for the RATE using 10,000 bootstrap samples from data in the test set. We  report associated two-sided $p$-values against the RATE being 0. This test serves as a test against the absence of heterogeneous treatment effects.
Given the high (99\%) follow-up rate, we simply discarded observations with missing follow-up and did not attempt to correct for censoring.
Details on each of these methods can be found in Section C.1.2 of the Supplement and the papers cited above. All prioritization rules are trained and evaluated on the IST data.


\begin{table}
\centering
\begin{tabular}{|r|cc|}
	\hline
	Prioritization Rule & AUTOC (95\% CI) & $p$-value \\
	\hline
	Random Forest Risk (grf) & 0.014 (0.002, 0.027) & 0.022 \\
	Logistic Regression Risk & 0.011 (0.000, 0.230) & 0.057 \\
        Causal Forest (grf) & 0.006 (-0.012, 0.024) & 0.493 \\
        Logistic Regression T-learner & 0.005 (-0.013, 0.023) & 0.588 \\
	\hline
\end{tabular}
\caption{RATE estimates obtained using data from the International Stroke Trial. We also show 95\% confidence intervals obtained using the half-sample bootstrap,  along with associated $p$-values. The $p$-values are not corrected for multiple testing.}
\label{tab:IST-RATE-vs-model}
\end{table}

AUTOC estimates from our considered methods, along with 95\% confidence intervals,
are shown in Table \ref{tab:IST-RATE-vs-model}. Our first finding is that the
AUTOC with prognostic-based targeting (i.e., with the random forest risk model)
is statistically significant, with a $p$-value of 0.022. This is an interesting
finding since, as discussed in Section \ref{sec:motivation}, \citet{group1997international}
and \citet{chen2000indications} had considered baseline prognostic index as a potential
effect modifier; however, their subgroup-based analysis was not powerful enough to
obtain statistical significance. Here, in contrast, we find that if we combine
machine learning-based estimation of the prognostic index (here using random forests)
with RATE-based evaluation, we get significance.\footnote{
\spacingset{1}\footnotesize
We do not apply a multiplicity
correction to the $p$-values here since the random forest risk / AUTOC analysis was our
primary analysis; the other analyses are provided for comparison and further insight.}
In other words, from a statistical point of view, this result highlights the gain in power
from using RATE relative to estimating treatment effects separately for different risk quantiles.

\section{Prioritization Rules for Antihypertensive Treatment}
\label{sec:CVD}

As a second application of our framework, we consider the problem of personalized antihypertensive treatment. Hypertension, or elevated blood pressure (BP), is implicated in 14\% of all deaths across the globe \citep{fisher2018hypertension} and has a high prevalence around the world \citep{muntner2018potential}. Effective hypertension treatment significantly reduces the risk of negative cardiovascular disease outcomes \citep{psaty2003health}. However, it is less clear the degree to which benefits of intensively targeting a low BP with antihypertensive medications are uniform across patients, as antihypertensives carry risks and cause adverse side effects \citep{accord2010effects, sprint2015randomized}.

Here, we study personalized treatment rules for hypertension using two large randomized controlled trials on the effectiveness of intensive blood-pressure control: the Action to Control Cardiovascular Risk in Diabetes Blood Pressure trial (ACCORD-BP) \citep{accord2010effects} and the Systolic Blood Pressure Intervention Trial (SPRINT) \citep{sprint2015randomized}. The ACCORD-BP and SPRINT trials share many similarities such as the treatment (intensive blood-pressure control, which aims for SBP $< 120$ mm Hg vs. standard blood-pressure control, which aims for SBP $< 140$ mm Hg), study population (see Section D.1 of the Supplementary Materials), and primary outcomes (e.g., myocardial infarction, stroke, death from cardiovascular causes). However, the trials differ in two key aspects: 
(1) The ACCORD-BP trial was conducted on participants with type 2 diabetes, while SPRINT was conducted on participants without diabetes; and (2) the ACCORD-BP trial found that risk reduction under the intensive blood-pressure treatment arm was non-significant (12\%, 95\% CI -6\% to 27\%), while the SPRINT trial found that intensive blood-pressure treatment led to significantly lower risk relative to controls (25\% , 95\% CI 11\% to 36\%). The literature has hypothesized that the differences in reported significance of treatment effects may be due to treatment effect heterogeneity \citep{beddhu2018effects, basu2017detecting, kaul2017tale}. This is a hypothesis we can test using the RATE.
Baseline characteristics for study participants are given in Table \ref{tab:baseline-characteristics} of the Supplementary Material.

We use RATE to study the ability of a number of prioritization rules (i.e., $S$ in our notation, as per Definition 1 in Section \ref{sec:RATE}) to identify patients who benefit more than average from intensive blood pressure medication in SPRINT and ACCORD-BP. First, we consider two risk scores for cardiovascular disease that are widely used in clinical practice:\footnote{
\spacingset{1}\footnotesize
For example, the American College of Cardiology and American Heart Association 2017 guidelines recommend that, in patients with Stage 1 hypertension (SBP of 130-139 mm Hg), antihypertensive medications should be used for intensive blood pressure targeting only if the estimated 10-year CVD risk is $\geq10\%$, as measured by the ACC/AHA Pooled Cohort Equations \citep{whelton20182017, goff2014accaha}.
}
The Framingham Risk Score \citep{DAgostinoVaPeWoCoMaKa08}\footnote{
    \spacingset{1}\footnotesize
    The Framingham Risk Score predicts an individual's 10-year risk of developing atherosclerotic cardiovascular disease in patients without a prior cardiac event \citep{DAgostinoVaPeWoCoMaKa08}. 
    It consists of two separate multivariate models for males and females. Each model takes as input a patient's age, sex, current smoking status, systolic blood pressure, HDL cholesterol, total cholesterol, diabetes diagnosis, and whether the patient is currently being treated with medications to reduce their blood pressure. 
    Multiple studies have validated the ability of the Framingham Risk Score to discriminate between high- and low-risk patients \citep[e.g.,][]{artigao2013framingham}
    and criticized its moderate to poor calibration---particularly among younger patients, women, and ethnically diverse cohorts \citep[e.g.,][]{defilippis2015analysis}.
}
and the ACC/AHA Pooled Cohort Equations \citep{goff2014accaha}.\footnote{
    \spacingset{1}\footnotesize
    The ACC/AHA Pooled Cohort Equations
    use a set of risk estimates obtained from pooling multiple cohorts to calculate a patient's 10-year risk of developing atherosclerotic cardiovascular disease \citep{goff2014accaha}. The Pooled Cohort Equations consists of four separate multivariate models, stratified by race (White and Black) and sex (men and women). 
    In addition to the variables in the Framingham Risk Score, the Pooled Cohort Equations also incorporate race and diabetes status. The Pooled Cohort Equations have been validated for the overall risk prediction task, although their performance for certain subgroups has been criticized \citep{yadlowsky2018clinical}.
} 
Second, we consider prioritization based on a risk score we trained ourselves via a random forest. Finally we consider two prioritization rules based on CATE estimates, one that fits CATE via Cox proportional hazards regression using what \citet{kunzel2019metalearners} called the S-learner approach, and another that fits CATE using causal survival forests \citep{cui2023estimating}. 

We estimate RATE separately in the ACCORD-BP and SPRINT trials. Whenever we use scoring rules
that we fit ourselves (i.e., based on forests or Cox regression), we follow the approach of \citet{basu2017detecting} and train models on the other
trial than the one we are estimating a RATE on (e.g., we evaluate a causal survival forest trained
with ACCORD-BP on SPRINT, and vice-versa). We produce a 95\% confidence intervals and associated
two-sided $P$-values as in Section \ref{sec:ist}. 
Unlike in the IST discussed above, both the ACCORD-BP and SPRINT trials had non-negligible loss to follow-up (94.6\% were lost to follow-up in SPRINT, 85.2\% in ACCORD-BP)
and so we used right-censored time-to-event outcomes
throughout; see Section~A.3 of the Supplementary Materials for details.

Table \ref{train_accord_test_sprint_rmst} shows results from training prioritization rules on ACCORD-BP and evaluating them on SPRINT, and Table \ref{train_sprint_test_accord_rmst} for training on SPRINT and evaluating on ACCORD-BP.
In this experiment, the estimated RATE metrics for all risk-based prioritization rules were not significantly different from zero when evaluated on both ACCORD-BP and SPRINT. These findings suggest that, for these two trial populations, the risk-based estimators in consideration would not order patients in accordance with estimated treatment benefit. Additionally, the estimated RATE prioritization rules that directly target the CATE (the Causal Survival Forest and Cox Proportional Hazards S-Learner) were
not significantly different from 0 at level $\alpha = 0.05$ in both ACCORD-BP and SPRINT, using Gaussian half-sample bootstrap confidence intervals.

\begin{table}
\centering
\begin{tabular}{|r|cc|}
	\hline
	Prioritization Rule & AUTOC (95\% CI) & $p$-value \\
	\hline
	Causal Survival Forest (grf) & 2.37 (-3.52, 8.27) & 0.43 \\
	Cox PH S-learner & 3.57 (-1.92, 9.07) & 0.20 \\
	\hline
	Random Survival Forest Risk (grf) & 4.21 (-2.46, 10.88) & 0.22 \\
	Framingham Risk Score & 1.83 (-4.28, 7.94) & 0.55 \\
	ACC/AHA Pooled Cohort Equations & -1.76 (8.26, 4.73) & 0.59 \\
	\hline
\end{tabular}
\caption{AUTOC estimates obtained using data from SPRINT ($n = 9069$), with prioritization rules trained on ACCORD ($n = 4535$), if necessary. We also show 95\% confidence intervals obtained using the half-sample bootstrap,  along with associated $p$-values.}
\label{train_accord_test_sprint_rmst}
\end{table}

\begin{table}
\centering
\begin{tabular}{|r|cc|}
	\hline
	Prioritization Rule & AUTOC (95\% CI) & $p$-value \\
	\hline
	Causal Survival Forest (grf) & -5.29 (-17.58, 7.00) & 0.40 \\
        Cox PH S-learner & 8.90 (-2.62, 20.43) & 0.13 \\
	\hline
	Random Survival Forest Risk (grf) & 2.93 (-8.82, 14.67) & 0.62 \\
	Framingham Risk Score & -7.01 (-18.45, 4.48) & 0.25 \\
	ACC/AHA Pooled Cohort Equations & -1.43 (-13.22, 10.36) & 0.81 \\
	\hline
\end{tabular}
\caption{AUTOC estimates obtained using data from ACCORD ($n = 4535$), with prioritization rules trained on SPRINT ($n = 9069$), if necessary. We also show 95\% confidence intervals obtained using the half-sample bootstrap,  along with associated $p$-values.}
\label{train_sprint_test_accord_rmst}
\end{table}

Our findings show no significant evidence of heterogeneous treatment effects in the SPRINT and ACCORD-BP trials in terms of the restricted mean survival time (RMST). We also assessed the RATE using a combined version of the SPRINT/ACCORD-BP data, but these results were similarly not significant (see Section D.2 of the Supplementary Materials).

From the results on these data, it remains ambiguous whether or not clinical use of risk scores like the Framingham Risk Score to guide blood pressure control treatment 
is in fact more beneficial than simply using the original trial's inclusion/exclusion protocols to guide treatment (given that treatment benefit was significant in SPRINT for the primary outcome). It also remains ambiguous whether there is significant treatment effect heterogeneity at all in these trial populations. Our main results suggest that the SPRINT/ACCORD-BP results are not well-enough powered to support conclusive claims of treatment effect heterogeneity using an RMST outcome. However, we also note contemporaneous work \citep{oikonomou2022individualising, xu2022treatment} that obtain (marginally) significant detections for heterogeneity in absolute-risk-difference outcomes. Overall, these results highlight the fragility of any claims regarding treatment effect heterogeneity in these trials and demonstrates the importance of having a principled approach towards performing such tests. The RATE provides such a principled approach for evaluating and comparing prioritization rules, as well as testing for global treatment effect heterogeneity.

\section{Application to Uplift Modeling}
\label{sec:uplift-modeling}

Finally, to demonstrate versatility of the RATE approach outside of the medical domain,
we also briefly consider an application to marketing. Traditionally, marketing offers
have been targeted based on analogues to risk modeling (e.g., target retention offers
to customers predicted to be at risk of canceling the service); however,  there has been
 interest in methods based on CATE estimation, or \emph{uplift modeling}
\citep{Ascarza18,Radcliffe07,RadcliffeSu99}. One driver of recent interest in CATE
estimation is that randomized trials are particularly easy to design and implement
the world of digital marketing, and many tools exist to help data scientists run these
experiments \citep{HenningObTa15,JohnsonLeNu17,GordonZeBhCh19}.

We use RATE to compare different targeting rules on a large benchmark dataset released by Criteo for studying uplift modeling in online digital advertising, based on anonymized results from a number of randomized controlled trials on incrementality \citep{DiemertBeReAm18}.\footnote{\spacingset{1}\footnotesize
In combining trials, the interpretation of the treatment is subtle, corresponding to an intent to treat with one of a handful of arbitrarily chosen ads. The purpose of the data is to provide a benchmark for uplift modeling, and therefore, the results are not meant to be used in a particular application. See \cite{DiemertBeReAm18} for more information about the dataset construction and validation.}
For the purposes of anonymization, the features released are a random projection of the user features into a 12-dimensional space. The data also contain a binary indicator for treatment status, and two outcomes, visiting the site following the ad, and further conversion into a customer.

\begin{table}[]
    \centering
    \begin{tabular}{|l|c|c|c|c|c|c|}
    \hline
         & \multicolumn{2}{c|}{Full data} &  \multicolumn{2}{c|}{Training subsample} &  \multicolumn{2}{c|}{Test subsample} \\
         \hline
         & Visit & Conversion & Visit & Conversion & Visit & Conversion
         \\
    \hline
        Treatment & 0.0485 & 0.0031 & 0.0492 & 0.0031 & 0.0488 & 0.0031 \\
    \hline
        Control & 0.0382 & 0.0019 & 0.0395 & 0.0019 & 0.0377 & 0.0021 \\
    \hline
    \end{tabular}
    \caption{Visit and conversion rates in each arm of the Criteo Uplift Benchmark dataset, and our two randomly subsampled datasets for training and testing uplift models. Tests for difference in means between the rates of the three samples are insignificant at $5\%$-level.}
    \label{tab:criteo-avg-rates}
\end{table}

\begin{figure}[t]
    \begin{subfigure}{0.5\textwidth}
    \centering
        \includegraphics[width=0.75\columnwidth]{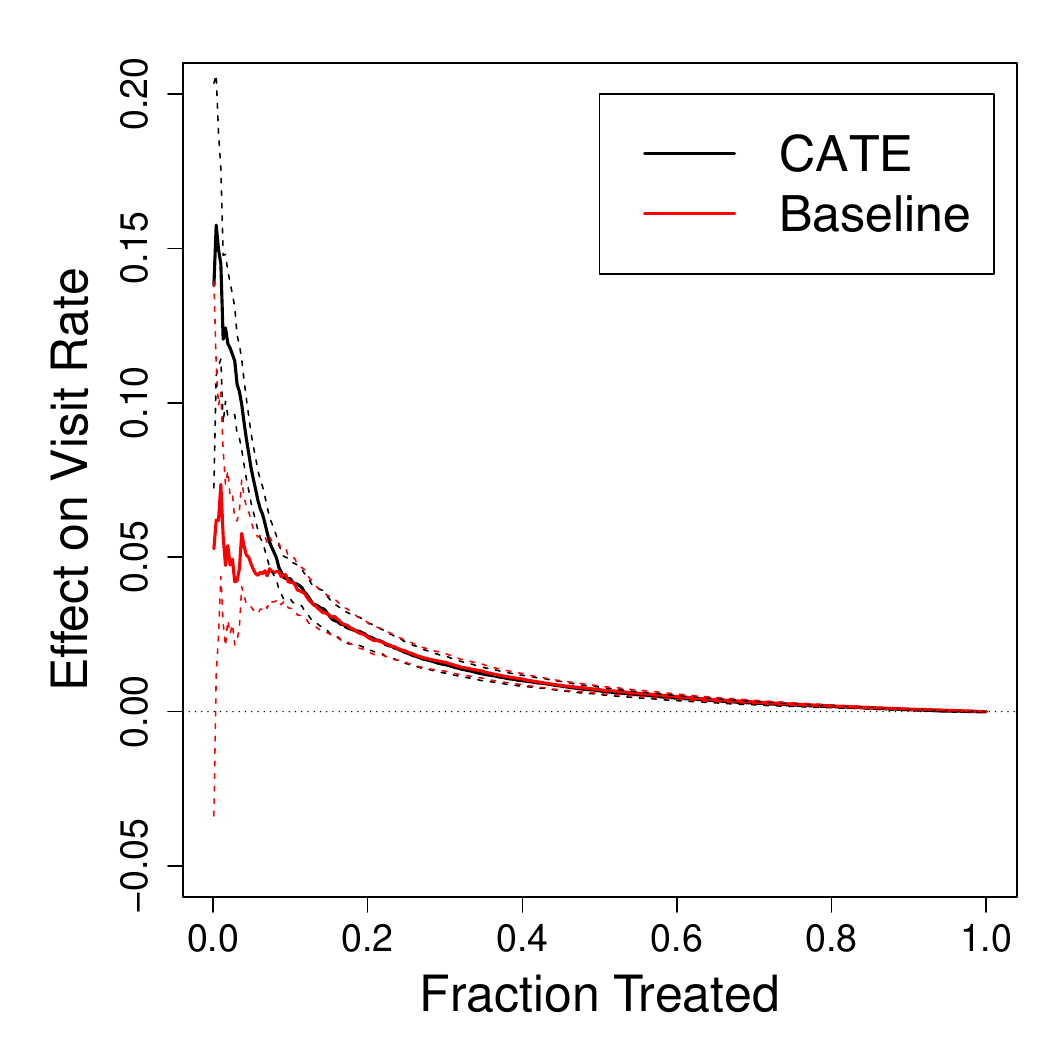}
        \label{fig:criteo:visit-compare}
    \end{subfigure}
    \begin{subfigure}{0.5\textwidth}
    \centering
        \includegraphics[width=0.75\columnwidth]{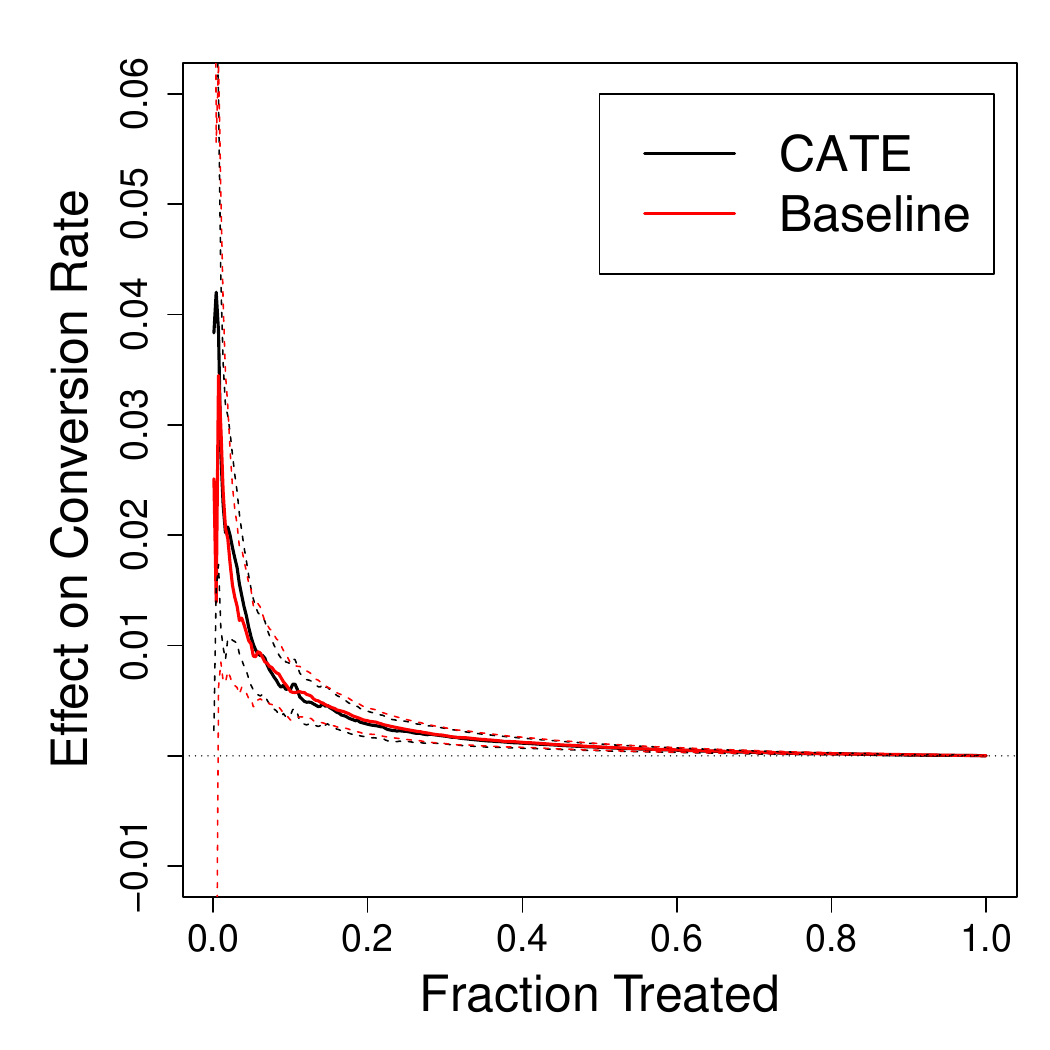}
        \label{fig:criteo:conversion-compare}
    \end{subfigure}
    \caption{TOC curves for two prioritization rules (baseline- and CATE-based) and two outcomes (rate of visits and rate of conversion) on Criteo Uplift benchmark dataset.}
    \label{fig:criteo}
\end{figure}

We considered two random-forest based prioritization rules, one fit to the baseline probability of the outcome if untreated (ie., how likely is the user to visit or convert without any advertising), and one fit using causal forests for the conditional average treatment effect. We will call these the baseline and CATE prioritization rules, respectively.\footnote{\spacingset{1}\footnotesize
In both cases, we used default hyperparameters, except for the minimum node size and \texttt{mtry} parameter. The smaller the minimum node size, the slower that fitting the random forests becomes, and this is particularly problematic for the large dataset size used here. We selected the minimum node size to be 5000, which was the smallest value such that the forest could be fit reasonably quickly on a modern laptop computer. We selected the optimal \texttt{mtry} parameter for both algorithms with respect to the AUTOC on the test dataset. Because there are only 12 possible values for this parameter, and because the test set is large, the bias in the final results introduced by doing this is negligible.}
The data contain 25,309,483 samples with an unbalanced fraction of treated vs control units;  however, we randomly selected two sets of 320,000 samples without replacement, each equally balanced with 160,000 treated and 160,000 control units. We used the first set for training prioritization rule models, and the second set for evaluating the RATE metrics of the learned models.
The average visit and conversion rates in the treatment and control arms of the updated dataset and our subsamples are reported in Table~\ref{tab:criteo-avg-rates}.

\begin{table}[t]
    \centering
    \begin{tabular}{|l|c|c|c|c|}
    \hline
         \multirow{2}{*}{} & \multicolumn{2}{c|}{AUTOC (95\% CI) } &  \multicolumn{2}{c|}{Qini coefficient (95\% CI)} \\
         \cline{2-5}
         & Visit & Conversion & Visit & Conversion
         \\
    \hline
        Baseline & \makecell{0.0136\\(0.0111, 0.0161)} & \makecell{0.0023\\(0.0011, 0.0035)} & \makecell{0.0059 \\(0.0048, 0.0069)} & \makecell{0.00074\\(0.00047, 0.00101)} \\
    \hline
        CATE & \makecell{0.0171 \\(0.0148, 0.0194)} & \makecell{0.0025\\(0.0015, 0.0035)} & \makecell{0.0058 \\(0.0049, 0.0067)} & \makecell{0.00070\\(0.00044, 0.00096)}  \\
    \hline
    \end{tabular}
    \caption{RATE metrics on Criteo Uplift Benchmark dataset. Confidence intervals are estimated using the half-sample bootstrap.}
    \label{tab:criteo-rate}
\end{table}

Figure~\ref{fig:criteo} shows the TOC curves for each of the prioritization rules for each of the outcomes, and Table~\ref{tab:criteo-rate} shows the RATE metrics (AUTOC and Qini coefficient) summarizing these curves, along with their confidence intervals.
Both methods on both outcomes work better than random prioritization, with strictly positive CIs. For visits, the AUTOC for the CATE-based prioritization rule is marginally higher than the baseline-based rule; the difference between their AUTOC statistics is 0.0034 (95\% CI 0.0019, 0.0049), which is a relative increase in AUTOC of 25\%. However, the Qini coefficient and the results for conversions are not significantly different between the two prioritization rules. Looking at the TOC curves in Figure~\ref{fig:criteo} gives suggestions as to why this is the case. We see that the average treatment effect among the highest prioritized individuals according to the CATE-based rule is much higher than for the baseline-based rule at similar fractions treated. However, the difference disappears once the fraction treated is more than 10\%. The Qini coefficient weights the TOC by the fraction treated, so it downweights the importance of these small groups with large treatment benefit. On the other hand, these groups are more influential in the AUTOC, explaining why we observe a statistically significant difference in this metric. Note that while the confidence intervals overlap between the two metrics, paired bootstrapping is more precise and detects a significant contrast for the AUTOC.

\section{Discussion}
\label{sec:discussion}

We discussed evaluation of treatment prioritization rules using a number of metrics motivated by the Area under the Receiver-operating Characteristic curve (AUROC) for classification \citep{green1966signal, zweig1993receiver}. Our approach enables threshold-agnostic evaluation and comparison of risk-based and CATE-based estimators alike. Researchers can use RATE metrics and half-sample bootstrap confidence intervals to (1) evaluate whether a prioritization rule performs significantly better than chance/random treatment allocation in stratifying subjects according to estimated treatment benefit; (2) directly, meaningfully, and rigorously compare the targeting performance of two different prioritization rules and determine whether one exhibits significant superiority; and (3) perform a global test for the presence of predictable heterogeneous treatment effects, by coupling estimation of the RATE with a powerful, flexible, non-parametric CATE estimator as the prioritization rule.
Our approach works in the context of binary outcomes, continuous, real-valued outcomes, and right-censored time-to-event outcomes, making it a practical tool for a number of different fields of study, from marketing and business to medicine and public policy. Additionally, it generalizes existing estimators of the Qini score~\citep{Radcliffe07} and AUTOC~\citep{ZhaoTiCaClWe13} to any setting where an identification strategy allows the construction of a CATE score, including observational studies with unconfoundedness. The representation of these metrics as a weighted average treatment effect that we provide in \eqref{eq:as-Lstat} implies that these metrics will be similar to the average treatment effect in terms of their sensitivity to model misspecification or unobserved confounding.

To apply a treatment prioritization rule in practice, one must convert the prioritization rule into a policy for whom to intervene on versus not. From this perspective, it's intuitive to consider directly evaluating the prioritization rule on the average reward of the policy implied by thresholding the prioritization rule at a specified level. The implied policy value of treating the top $u$-th of units can be measured using the ``high-vs-others'' estimator as discussed in Section~\ref{sec:RATE}. This estimator is a RATE and so the tools for inference as discussed in Section~\ref{sec:asymptotics} apply. If the threshold---e.g. treat with antihypertensive medications with 10-year risk of ASCVD $\ge 10\%$---is known at the time of evaluation, this is a reasonable approach. This can also be a reasonable approach if the author has expert knowledge (e.g., regarding the treatment and study population) that would justify assigning weights for all units other than the top $u$-th a weight of zero in calculating the RATE. 


However, in many cases, the policy is unclear or may still be subject to change at the time that the prioritization rule is designed. For instance, in healthcare applications, the policy for treatment decisions might additionally depend on individualized risk of serious adverse events from the treatment, patient and provider preferences, costs, and more. In marketing applications, they may depend on varying costs associated with advertising, concerns for advertisement blindness \citep{HenningObTa15} and other considerations.

\begin{wrapfigure}{r}{0.52\textwidth}
\centering
    \includegraphics[trim={1.5cm 1.5cm 1.5cm 1.5cm},width=0.45\textwidth]{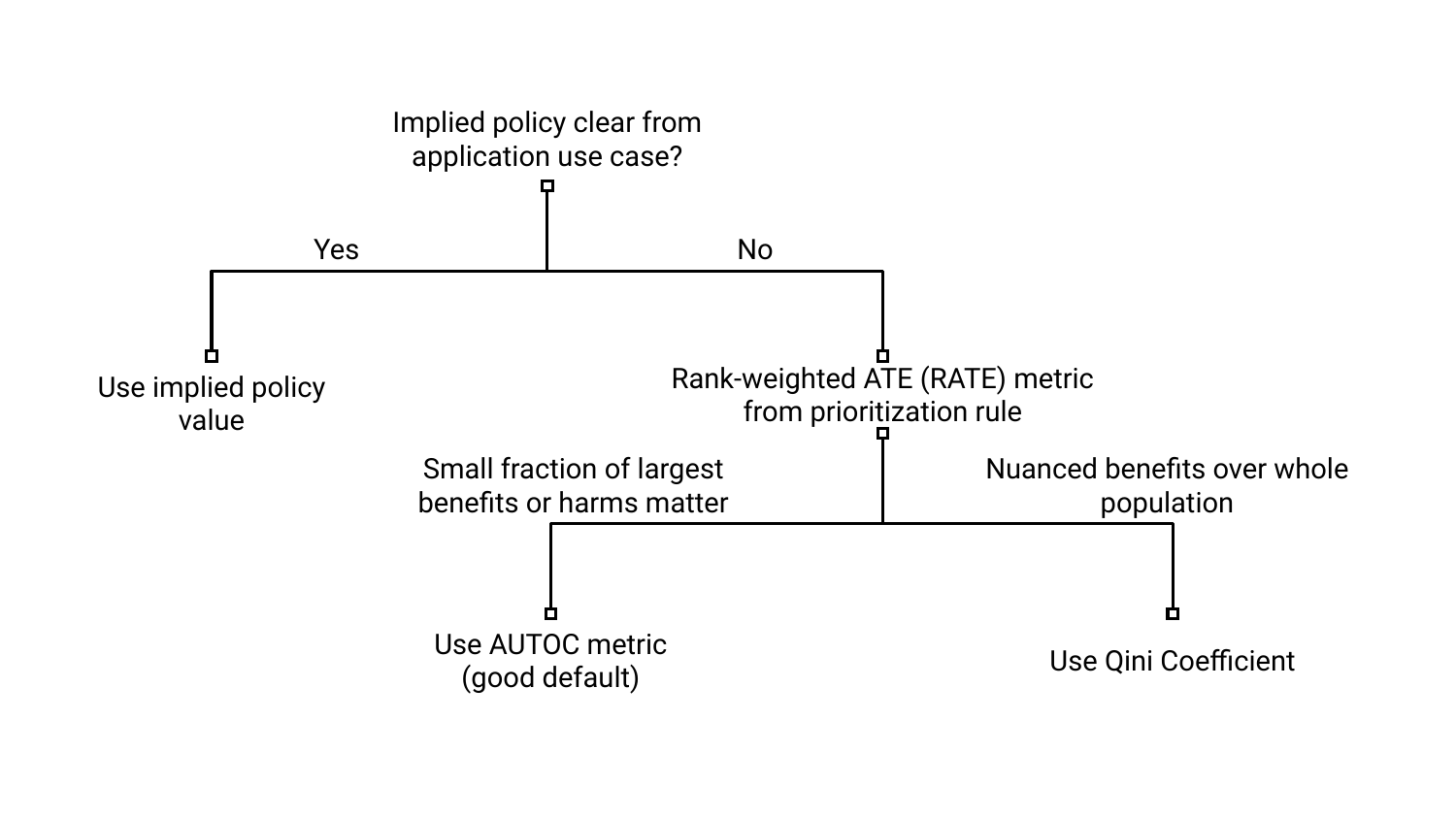}
    \caption{A flowchart for choosing the best evaluation metric to use in an application.\label{fig:outline}}
\end{wrapfigure}

RATE metrics provide a way to summarize the quality of a treatment prioritization rule in ranking units according to potential without specifying a particular treatment policy. This can be especially beneficial when evaluating the degree to which risk-based and other non-centered targeting algorithms optimize treatment benefit.
Choosing the appropriate RATE metric for a problem is important.  As noted in Section~\ref{sec:comparing-weighting-functions}, the precision (and thus, statistical power) of a RATE metric depends on the weighting scheme and the shape of the true, underlying TOC curve. In applications where the analyst anticipates that the prioritization rule will be used to find a relatively small group of individuals with a large treatment benefit, they should use the AUTOC curve; the digital advertising example in Section~\ref{sec:uplift-modeling} illustrates the practical impacts here. However, in cases where treatment effects are more evenly spread out over the population, the Qini coefficient might be more powerful. From our experience so far working with these metrics, we recommend starting with the AUTOC metric if one is unsure which would be more appropriate. We summarize the above discussions of how to choose an effective metric for heterogeneous treatment effects in applications as a flowchart in Figure~\ref{fig:outline}.

\newpage
\appendix

\section*{Supplementary Material}

\vspace{-4mm}

\section{Additional Results}

\subsection{Converse to Proposition \ref{prop:to-Lstat}}
\begin{prop}
\label{prop:from-Lstat}
Suppose that $S(X_i)$ has no ties, i.e., that $F_S(X_i))$ has a uniform distribution on $[0, \, 1]$,
and that \smash{$\tau(x) = \mathbb{E}[Y_i(1) - Y_i(0) \cond X_i = x]$} is uniformly
bounded. Furthermore, suppose that $w: [0, \, 1] \rightarrow \RR$ admits a (Radon-Nikodym) derivative
$w'(u)$ such that $w'(u)(1 - u)$ is absolutely integrable with respect to Lebesgue measure on
$[0, \, 1]$, and that $\int_0^1 w(u) \ du = 0$. Then $\eta_w(S)$ as defined in \eqref{eq:etaw} is
a RATE in the sense of Definition~\ref{def:rate}, with the weight function
\begin{equation*}
\eta_w(S) = \theta_{\alpha_w}(S), \ \ \ \ \alpha_w(t) = -t w'(1-t).
\end{equation*}
\end{prop}
\begin{proof}
We first note that, by the chain rule, \smash{$\eta_w = \mathbb{E}[w\p{1 - F_S(S(X_i))}\tau(X_i)]$}, and furthermore
\begin{align*}
\eta_w(S) &= \int w\p{1 - F_S(S(x))} \tau(x) \, d \mathbb{P}(x) \\
&= w(1) \EE{\tau(X)} - \int \int_{0}^{F_S(S(x))} w'(1 - u) \, du \,  \tau(x) \, d \mathbb{P}(x) \\
&= w(1) \EE{\tau(X)} - \int \int_{0}^{1} 1\p{\cb{u \leq F_S(S(x))}} w'(1 - u) \, du \, \tau(x) \, d\mathbb{P}(x).
\end{align*}
Next, because $(1 - u) w'(u)$ is absolutely integrable and $\tau(x)$ (and thus also the TOC) is uniformly bounded, we can
use Fubini's theorem to continue, and verify that
\begin{align*}
\eta_w(S) &= w(1) \EE{\tau(X)} - \int_0^1 w'(1 - u) \int 1\p{\cb{u \leq F_S(S(x))}} \tau(x) \, d \mathbb{P}(x) \, du \\
&= w(1) \EE{\tau(X)} - \int_0^1 w'(1 - u) \p{1 - u} \p{\TOC(u; \, S) + \EE{\tau(X)}} \, du \\
&= \theta_{\alpha_w}(S) + \EE{\tau(X)} \p{w(1) - \int_0^1 w'(1 - u) \p{1 - u} \, du},
\end{align*}
where $\theta_{\alpha_w}(S)$ is as in Definition~\ref{def:rate}. Finally, integrating by parts, we see that
$\int_0^1 w'(1 - u) \p{1 - u} \, du = \int_0^1 w'(u) u \, du=  w(1) - \int_0^1 w(u) \, du = w(1)$
because $\int_0^1 w(u) \, du = 0$ by assumption, and so $\eta_w(S) = \theta_{\alpha_w}(S)$.
\end{proof}

\subsection{Tiebreaking}
\label{sec:tiebreaking}

So far, we have assumed that the priority score $S(\cdot)$ provides a strict priority ranking.
In many settings of interest, however, $S(\cdot)$ may have ties. In these cases, we break ties by considering
all possible permutations of orderings within each set of tied observations, and then
averaging the resulting TOC estimates. Thus, in the presence of ties, our definition of the TOC
from \eqref{eq:toc} generalizes to
\begin{equation}
\label{eq:toc_tie}
\begin{split}
&\TOC(u; \, S) = 
\frac{P(S(X_i) > q_u)}{u} \EE{Y_i(1) - Y_i(0) \cond S(X_i) > q_u} \\
&\quad\,\quad\,\quad + \left(1 - \frac{P(S(X_i) > q_u)}{u}\right)\EE{1\p{\cb{S(X_i) = q_u}}\p{Y_i(1) - Y_i(0)} } \\
&\quad\,\quad\,\quad -  \EE{Y_i(1) - Y_i(0)},
\end{split}
\end{equation}
where $q_u = \sup\{ q : F_S(q) \le u\}$. Notice that when there are no ties, so that $P(S(X_i) > q_u) = u$, this simplifies to the previous expression for the TOC. Similarly, the sample-average TOC estimator \eqref{eq:TOC_est}
averages the $\hGamma_i$ across tied observations: If $S(X_{i(k)}) > S(X_{i(k+1)}) = \ldots = S(X_{i(k')})$ for
some $k < m < k'$, then
\begin{equation}
\label{eq:htoc_tie}
\hTOC\p{\frac{m}{n}; s} = \frac{1}{m} \p{\sum_{j=1}^{k} \hGamma_{i(j)} + \frac{m - k}{k' - k} \sum_{j=k + 1}^{k'} \hGamma_{i(j)}} - \frac{1}{n} \sum_{i=1}^n \hGamma_i.
\end{equation}
Given these these tie-robust definitions of the TOC, our definition of the RATE and as associated
estimator \eqref{eq:RATE_est} remain as is.

The weighted ATE representation in \eqref{eq:as-Lstat} can also be adapted by taking the average of the weights over all possible tie-breaking orders. That is, by replacing $w(t)$ at any $t$ with $P(S(X_i) = t) = v > 0$ with its average over the ties, 
\begin{equation*}
    \frac{1}{v}\int_0^v w(F_S(t)-\epsilon) \dif{\epsilon}.
\end{equation*}
Similarly, for the empirical estimate, by replacing $w(m/n)$ at any $k < m < k'$ with $S(X_m) = S(X_{k'})$ with 
\begin{equation*}
    \frac{1}{k' - k}\sum_{j = k+1}^{k'} w\p{j/n}.
\end{equation*}

\subsection{A Score for Time-to-event Outcomes}
The AIPW approach for both randomized trials and observational studies can be generalized to handle time-to-event outcomes with right-censored outcomes. To do so, we first need to introduce some additional notation to specify the event times and censoring times.
We denote the counterfactual event times as $T_i(1)$ and $T_i(0)$, and the counterfactual censoring times as $C_i(1)$ and $C_i(0)$. With censored data, we only observe the factual event time $T_i = T_i(W_i)$ up to the factual censoring time $C_i = C_i(W_i)$. That is, the observed data are $U_i = \min\{T_i, C_i\}$ and $\Delta_i = 1\{U_i = T_i\}$.

Two endpoints that are often used with time-to-event outcomes can be easily adapted to the methods developed here.  To study absolute risk, we use the outcome
\begin{equation*}
    Y_i(w) = 1\{ T_i(w) \le t_0\},
\end{equation*}
and to study the restricted mean survival time, we use
\begin{equation*}
    Y_i(w) = \min\{ T_i(w), t_0\},
\end{equation*}
for a pre-specified time $t_0$ that denotes the end of observation. The causal effect estimated in each case is the absolute risk reduction
\begin{equation*}
    P\p{T_i(1) \le t_0 \cond X_i=x} - P\p{T_i(0) \le t_0 \cond X=x},
\end{equation*}
and the difference in restricted mean survival time,
\begin{equation*}
    \EE{ \min\{T_i(1), t_0\} - \min\{T_i(0), t_0\} \cond X=x}.
\end{equation*}

With either of these endpoints, we use the augmented IPW approach described in \cite{tsiatis2007semiparametric}, based off of the construction from \cite{robins1994estimation}, and used as a score for CATE estimation in \cite{cui2023estimating}.
For either of these endpoints, the score is the same. Let $\tilde{U}_i = \min\{U_i, t_0\}$, $\tilde{\Delta}_i = 1\{U_i = T_i~\text{or}~C_i > t_0\}$ and $Y_i = Y_i(W_i)$ as defined above for each endpoint. Then,
the score is
\begin{gather}
    \hGamma_i = \hat{m}(X_i, 1) - \hat{m}(X_i, 0) + \frac{W_i - \hat{e}(X_i)}{(1-\hat{e}(X_i))\hat{e}(X_i)}\left(\frac{\tilde{\Delta}_i Y_i - \p{1-\tilde{\Delta}_i}\hat{q}(\tilde{U}_i, X_i, W_i)}{\hat{S}_C(\tilde{U}_i, X_i, W_i)} \right. \label{eq:dr-score-censoring} \\
    \quad~\quad \left. - \int_0^{\tilde{U}_i} \frac{\hat{q}(s, X_i, W_i)}{\hat{S}_C(s, X_i, W_i)} \dif{\hat{\Lambda}}_C(s, X_i, W_i) - \hat{m}(X_i, W_i) \right),\nonumber
\end{gather}
with
\begin{align*}
    &m(x, w) = \EE{Y_i \cond X_i=x, W_i=w},\\
    &q(u, x, w) = \EE{Y_i \cond X_i=x, U_i \ge u, W_i = w},~\mbox{and}\\
    &S_C(s, x, w) = P(C_i \ge s \cond X_i=x, W_i=w).
\end{align*}
Similarly to the observational study setting, $\hat{e},$ $\hat{m}$, $\hat{q}$, and $\hat{S}_C$ refer to nonparametrically estimated versions of the corresponding functions defined without a $\hat{~}$. The oracle score $\Gamma_i^\ast$ is the equivalent with all of the estimated functions replaced with their corresponding true function.

\subsection{Hypothesis Testing and Confidence Intervals}
\label{sec:hypo-test}
An important aspect of RATE metrics is that they provide a natural null hypothesis for testing whether a prioritization rule is useful for identifying subpopulations with heterogeneous treatment effects. Consider the null hypothesis $H_0$: $\theta_\alpha(S) \le 0$. A test of this hypothesis is sensitive to treatment heterogeneity aligned with the prioritization rule $S$.

Additionally, a test of $H_0$ can be interpreted as a test of treatment effect heterogeneity. If there is no treatment effect heterogeneity, meaning that $\tau(X) = \tau $ is constant, then $\TOC(u; S) = 0$ for any $u \in (0, 1]$, and so $\theta_\alpha(S) = 0$. However, it will not be a powerful test for treatment effect heterogeneity without a good prioritization rule. The statistic $\theta_\alpha(S)$ can be zero if there is no treatment effect heterogeneity, or if the prioritization rule $S$ has no relationship with the heterogeneous treatment effects. Some may find it advantageous, that instead of simply testing whether heterogeneous treatmenet effects exist, it tests whether the treatment effect is predictable by some prioritzation rule. However, those interested in studying whether heterogeneous treatment effects exist regardless of whether or not they can be identified by a prioritization rule should consider tests such as \citet{DingFeMi16}.

Using the sampling distribution for $\what{\theta}$ from Section 3, 
we can easily develop a valid test for $H_0$ using the bootstrap. Let $\sigma^\ast$ denote the standard deviation of the bootstrap estimates $\widehat{\theta}^\ast$. Then, an asymptotically valid $P$-value for $H_0$ is
\begin{equation*}
    2 \Phi\p{-\frac{|\widehat{\theta}|}{\sigma^\ast}},
\end{equation*}
where $\Phi$ is the CDF of the unit normal distribution, $\normal{}(0, 1).$
Similarly, we can use the sampling distribution to derive asymptotically valid confidence intervals for $\theta_\alpha(S)$. To get a $1-\alpha$ confidence interval, we can use
\begin{equation*}
    \what\theta \pm z_{1-\alpha/2} \sigma^\ast,
\end{equation*}
where $z_{1 - \alpha/2} = \Phi^{-1}(1 - \alpha/2)$, and, as above, $\sigma^\ast$ is the standard deviation of the bootstrap estimates $\widehat{\theta}^\ast$.

\section{Proofs}
\label{sec:proofs}

\subsection{Proof of Proposition \ref{prop:to-Lstat}}
\begin{proof}
Let $A = \int_0^1 \alpha(u) \ du.$ Now, note that
\begin{align*}
    \theta_a(S) &= \int_{0}^1 \alpha(u) \EE{ Y_i(1) - Y_i(0) \cond F_S(S(X_i)) \ge 1 - u} \dif{u} - A\EE{Y_i(1) - Y_i(0)} \\
    &= \int_{0}^1 \frac{\alpha(u)}{u} \EE{ \p{Y_i(1) - Y_i(0)} 1\p{\cb{F_S(S(X_i)) \ge 1 - u}}} \dif{u} - A\EE{Y_i(1) - Y_i(0)} \\
    &= \int_{0}^1 \frac{\alpha(u)}{u} \EE{ \tau(X_i) 1\p{\cb{F_S(S(X_i)) \ge 1 - u}}} \dif{u} - A\EE{Y_i(1) - Y_i(0)}.
\intertext{By recalling that we assumed $\tau(X)$ is uniformly bounded and $\frac{\alpha(u)}{u}$ is absolutely integrable on $1 \ge u \ge 1 - F_S(S(X_i))$ for almost every $S(X_i)$, we can apply Fubini's Theorem to get}
    &= \EE{ \p{Y_i(1) - Y_i(0)} \int_{0}^1 \frac{a(u)}{u} 1\p{\cb{F_S(S(X_i)) \ge 1 - u}} \dif{u}} - A\EE{Y_i(1) - Y_i(0)} \\
    &= \EE{ \p{Y_i(1) - Y_i(0)} \int_{1 - F_S(S(X_i))}^1 \frac{a(u)}{u} \dif{u}} - A\EE{Y_i(1) - Y_i(0)}.
\end{align*}
\end{proof}

\subsection{Proof of Theorem~\ref{thm:asymp-linear}}

\begin{proof}
Let $\what{F}_S(q^-)$ denote the empirical distribution at the left limit, instead of the right, and define $\what{G}_S(q) = 1 - \what{F}_S(q^-)$, which is the empirical distribution function of the observations in the reverse order.
Note that the quantile $j/n$ of the $i(j)$-th example is simply \smash{$\what{G}_S(Q_{i(j)})$}.
Then, we decompose $\sqrt{n}(\what{\theta} - \theta)$ into two leading terms that we show are asymptotically linear, followed by remainder terms that we show vanish under Assumptions A-C.
\begin{align*}
    \frac{1}{\sqrt{n}} \sum_{i=1}^n w_n(\what{G}_S(Q_i)) \what{\Gamma}_i -\theta &= \underbrace{\frac{1}{\sqrt{n}} \sum_{i=1}^n w(1-F_S(Q_i) (\Gamma^\ast_i - \overline{\tau}(Q_i))}_{\text{Term 1}}
    \\
    &~~~~+ \underbrace{\frac{1}{\sqrt{n}} \sum_{i=1}^n w(\what{G}_S(Q_i)) \overline{\tau}(Q_i) - \theta}_{\text{Term 2}}
    \\
    &~~~~+ \underbrace{\frac{1}{\sqrt{n}} \sum_{i=1}^n (w_n(\what{G}_S(Q_i)) - w(\what{G}_S(Q_i))) \overline{\tau}(Q_i)}_{\text{Term 3}}
    \\
    &~~~~+ \underbrace{\frac{1}{\sqrt{n}} \sum_{i=1}^n (w_n(\what{G}_S(Q_i)) - w(1-F_S(Q_i))) (\Gamma^\ast_i - \overline{\tau}(Q_i))}_{\text{Term 4}}
    \\
    &~~~~+ \underbrace{\frac{1}{\sqrt{n}} \sum_{i=1}^n w_n(\what{G}_S(Q_i)) \delta_i.}_{\text{Term 5}}
\end{align*}

\noindent \textbf{Term 1}

The first term is trivially a sum of iid random variables.

\noindent \textbf{Term 2}

The second term (which we will denote as $T_n$)
is an $L$-statistic \citep{VanDerVaart98,ShorackWe09} with $\overline{\tau}(q)$ of bounded variation. To show that it is asymptotically linear, we use the following assumptions and theorem from \citet{ShorackWe09}:

\setcounter{assumption}{3}
\begin{assumption}{[Assumption 1, \citet{ShorackWe09}]}
(a) The weight function $w : (0, 1) \to \R$ satisfies $|w(1-t)| \le B(t) = M t^{-b_1} (1-t)^{-b_2}$ for $M < \infty$ and $\max\{b_1, b_2\} < 1$, and (b) $\overline{\tau}(q) = \overline{\tau}_+(q) - \overline{\tau}_-(q)$, each monotone non-decreasing, with $|\overline{\tau}_\pm(F_S^{-1}(t))| \le D(t) = Mt^{-d_1}(1-t)^{-d_2}$.
\label{assume:sw-growth}
\end{assumption}

\begin{assumption}{[Assumption 2', \citet{ShorackWe09}]}
(a) $w$ has a continuous derivative on $(0, 1)$ satisfying $|w'(1-t)| \le \frac{B(t)}{t(1-t)}$ for $B$ defined as in Assumption~\ref{assume:sw-growth}.
\label{assume:sw-smooth}
\end{assumption}

\begin{theo}{[Theorem 1, \citet{ShorackWe09}]}
Suppose Assumptions~\ref{assume:sw-growth} and~\ref{assume:sw-smooth} hold with $\alpha = \max\{b_1 + d_1, b_2 + d_2\} < 1/2$. Then,
\begin{equation*}
    \sqrt{n}(T_n - \theta) = -\int_{0}^1 w(1-t) \mathbb{U}_n(t) \dif{h(F_S^{-1}(t))} + o_P(1).
\end{equation*}
\label{thm:sw-l-stat}
\end{theo}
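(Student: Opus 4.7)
The plan is to establish this classical L-statistic linearization via the Chernoff--Gastwirth--Johns strategy, adapted to handle a weight $w$ and a quantile function $h \circ F_S^{-1}$ whose singularities at the endpoints are only controlled polynomially. First I reduce to the uniform case via the probability integral transform: set $U_i = F_S(Q_i)$, which are i.i.d.\ $\text{Uniform}(0,1)$ by the continuity of $F_S$ in Assumption~\ref{assume:weight-fn}, and define $\tilde h(u) = h(F_S^{-1}(u))$. Assumption~\ref{assume:sw-growth}(b) gives that $\tilde h$ is of bounded variation with $|\tilde h_{\pm}(u)| \le D(u)$. Writing $F_{n,U}$ for the empirical CDF of the $U_i$ and $F_{n,U}^{-1}$ for its quantile function, the definition of $T_n$ can be rewritten (up to an $O_P(1/n)$ discretization correction controlled by Assumption~\ref{assume:sw-smooth} on $w'$) in the quantile form
\begin{equation*}
    T_n - \theta = \int_0^1 w(1-t) \, \bigl[\tilde h(F_{n,U}^{-1}(t)) - \tilde h(t)\bigr] \, dt.
\end{equation*}

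The linearization itself comes from a Fubini argument. Since $\tilde h$ has bounded variation,
\begin{equation*}
    \tilde h(F_{n,U}^{-1}(t)) - \tilde h(t) = \int_0^1 \bigl[\mathbf{1}\{s \le F_{n,U}^{-1}(t)\} - \mathbf{1}\{s \le t\}\bigr] \, d\tilde h(s),
\end{equation*}
and the duality $F_{n,U}^{-1}(t) \ge s \iff F_{n,U}(s) \le t$ lets me swap the order of integration to obtain
\begin{equation*}
    T_n - \theta = -\int_0^1 \biggl[\int_s^{F_{n,U}(s)} w(1-t) \, dt\biggr] d\tilde h(s).
\end{equation*}
A first-order Taylor expansion of the inner integral around $t = s$ produces the leading linear term $-\int_0^1 (F_{n,U}(s) - s)\, w(1-s)\, d\tilde h(s)$, and multiplying through by $\sqrt n$ gives exactly the claimed representation, using $\mathbb{U}_n(s) = \sqrt n\,(F_{n,U}(s) - s)$ and $d h(F_S^{-1}(s)) = d\tilde h(s)$.

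The main obstacle is showing that the quadratic remainder from the Taylor step, together with the discretization correction, is $o_P(n^{-1/2})$ after multiplication by $\sqrt n$. Heuristically this remainder is of order $1/n$, but its integrand carries $w'$ (singular like $B(t)/[t(1-t)]$ by Assumption~\ref{assume:sw-smooth}) multiplied by $(F_{n,U} - \mathrm{Id})^2$ and integrated against the signed measure $d\tilde h$ (also singular at the boundary via $D$). The condition $\alpha = \max\{b_1 + d_1,\, b_2 + d_2\} < 1/2$ is tuned precisely so that the product $B(t) D(t)$ is integrable against the empirical-process scale $\sqrt{t(1-t)}$. The technical tool is a weighted empirical-process inequality of Chibisov type, $\sup_{t \in [1/n,\,1-1/n]} |F_{n,U}(t) - t| / [t(1-t)]^{1/2-\epsilon} = O_P(n^{-1/2})$, combined with a separate truncation argument on $t \notin [1/n,\, 1-1/n]$ that uses direct integrability of $B D$. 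Thus the difficulty is analytic rather than structural: careful endpoint bookkeeping via weighted empirical-process bounds is what converts the formal Fubini/Taylor identity into a bona fide $o_P(1)$ remainder bound.
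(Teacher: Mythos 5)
This statement is not proved in the paper at all: it is quoted verbatim as Theorem~1 of \citet{ShorackWe09} and invoked as an external result to dispatch Term~2 in the proof of Theorem~\ref{thm:asymp-linear}, so there is no in-paper argument to compare against. Your sketch reconstructs the classical Chernoff--Gastwirth--Johns/Shorack route that underlies the cited result: probability-integral transform to the uniform case, the exact identity $T_n-\theta=-\int_0^1\bigl[\int_s^{F_{n,U}(s)}w(1-t)\,dt\bigr]d\tilde h(s)$ obtained from the bounded-variation representation of $\tilde h$ plus the quantile--CDF duality and Fubini, first-order expansion of the inner integral to produce $-\int_0^1 w(1-s)\,\mathbb{U}_n(s)\,d\tilde h(s)$, and weighted empirical-process (Chibisov--O'Reilly type) bounds with the exponent condition $\alpha=\max\{b_1+d_1,b_2+d_2\}<1/2$ to kill the remainder. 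That is the right skeleton and is essentially the proof in the cited source. Two caveats on completeness rather than correctness: the claim that passing from the discrete weights $w(j/n)$ to the integral form costs only $O_P(1/n)$ is not automatic when $w'$ is singular at the endpoints (Assumption~\ref{assume:sw-smooth} only gives $|w'(1-t)|\le B(t)/[t(1-t)]$), so that step needs the same endpoint bookkeeping as the Taylor remainder; and the boundary region $t\notin[1/n,1-1/n]$ requires in-probability linear bounds on the empirical distribution (e.g.\ $F_{n,U}(s)\asymp s$ on $[U_{(1)},1]$ with high probability) in addition to integrability of $B\cdot D$, since the Taylor remainder involves $\sup|w'|$ over a random interval that can approach the boundary. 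These are exactly the technical points where the classical proof spends its effort, and your condition $\alpha<1/2$ is indeed what makes them work, so the gaps are of execution, not of conception.
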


Assumption~\ref{assume:sw-growth}(a) follows directly from Assumption~\ref{assume:weight-fn}. From Assumption~\ref{assume:regularity}, we know that $\overline{\tau}(q)$ is of bounded variation, and that $\E[|\overline{\tau}(Q)|^r] < \infty$, which implies that we can write $\overline{\tau}(q) = \overline{\tau}_+(q) - \overline{\tau}_-(q)$, with $\E[|\overline{\tau}_\pm(Q)|^r] < \infty$. Now, Markov's inequality implies that for $s > 0$,
\begin{equation*}
    P( |\overline{\tau}_{\pm}(Q)| \ge s ) \le \frac{\E[|\overline{\tau}_{\pm}(Q)|^r]}{s^r}.
\end{equation*}
Inverting this statement, using the fact that $q \mapsto \overline{\tau}_\pm(q)$ is monotone, we see that Assumption~\ref{assume:sw-growth}(b) holds with $d_1 = d_2 = 1/r$. As stipulated in Assumption~\ref{assume:regularity}, $\alpha = 1/r + b < 1/2$. Therefore, $T_n$ satisfies the assumptions of Theorem~\ref{thm:sw-l-stat}, implying that it is asymptotically linear with influence function
\begin{equation*}
    Y_i = \int_{Q_i}^\infty w(1-F(q)) \dif{\overline{\tau}(q)} - \theta.
\end{equation*}

The rest of the proof involves showing the remaining terms are asymptotically negligible.

\noindent \textbf{Term 3}

For any fixed $\epsilon > 0$, applying Markov's inequality followed by H\"older's inequality gives
\begin{align}
    &P\left( \bigg|\frac{1}{\sqrt{n}}\sum_{i=1}^n \left(w_n(\what{G}_S(Q_i)) - w(\what{G}_S(Q_i))\right) \overline{\tau}(Q_i))\bigg| \ge \epsilon \right) \nonumber
    \\
    &~~~~~\le
    \frac{\E\left[\bigg|\sum_{i=1}^n \left(w_n(\what{G}_S(Q_i)) - w(\what{G}_S(Q_i))\right) \frac{\overline{\tau}(Q_i)}{\sqrt{n}}\bigg| \right]}{\epsilon} \nonumber
    \\
    &~~~~~\le
    \frac{\E\left[ \left(\sum_{i=1}^n \left(w_n(\what{G}_S(Q_i)) - w(\what{G}_S(Q_i))\right)^p \right)^{1/p} \left(\frac{1}{n^{r/2}}\sum_{i=1}^n |\overline{\tau}|^r(Q_i) \right)^{1/r} \right]}{\epsilon} \nonumber
\end{align}
Assumption~\ref{assume:weight-fn} gives that $\left(\sum_{i=1}^n \left(w_n(\what{G}_S(Q_i)) - w(\what{G}_S(Q_i))\right)^p \right) < C$ eventually.
Applying Jensen's inequality shows that $\E\left[\left(\frac{1}{n}\sum_{i=1}^n |\overline{\tau}|^r(Q_i) \right)^{1/r} \right] \le \E[ \overline{\tau}^r(Q_i) ]^{1/r}$, which is bounded by Assumption~\ref{assume:regularity}. Therefore, the numerator is $O(n^{1/2 - 1/r}) = o(1)$.

\noindent \textbf{Term 4}

For any fixed $\epsilon > 0$, applying Chebyshev's inequality conditionally on $\{Q_i\}_{i=1}^n$ shows that the fourth term satisfies
\begin{align}
    &P\left( \bigg|\frac{1}{\sqrt{n}}\sum_{i=1}^n \left(w_n(\what{G}_S(Q_i)) - w(1-F_S(Q_i))\right) (\Gamma^\ast_i - \overline{\tau}(Q_i))\bigg| \ge \epsilon \right) \nonumber
    \\
    &~~~~~=
    \E\left[P\left( \bigg| \frac{1}{\sqrt{n}}\sum_{i=1}^n \left(w_n(\what{G}_S(Q_i)) - w(1-F_S(Q_i))\right) (\Gamma^\ast_i - \overline{\tau}(Q_i)) \bigg| \ge \epsilon \mid \{Q_i\}_{i=1}^n \right)\right] \nonumber
    \\
    &~~~~~=
    \E\left[ \frac{1}{\epsilon^2} \var\left( \frac{1}{\sqrt{n}}\sum_{i=1}^n \left(w_n(\what{G}_S(Q_i)) - w(1-F_S(Q_i))\right) (\Gamma^\ast_i - \overline{\tau}(Q_i)) \mid \{Q_i\}_{i=1}^n \right)\right] \label{eq:conditional-var-1}
    \\
    &~~~~~=
     \frac{1}{\epsilon^2} \E\left[ \frac{1}{n}\sum_{i=1}^n \left(w_n(\what{G}_S(Q_i)) - w(1-F_S(Q_i))\right)^2 \var(\Gamma_i^\ast \mid Q_i)  \right] \label{eq:conditional-var-2}
    \\
    &~~~~~\le
     \frac{C_g}{\epsilon^2} \E\left[ \frac{1}{n}\sum_{i=1}^n \left(w_n(\what{G}_S(Q_i)) - w(1-F_S(Q_i))\right)^2 \right], \nonumber
\end{align}
where \eqref{eq:conditional-var-1} to \eqref{eq:conditional-var-2} follows from the rule that $\var(aX + bY) = a^2 \var(X) + b^2 \var(Y),$ when $X$ and $Y$ are independent, and noting that each term in the sum is independent given $\{Q_i\}_{i=1}^n$.
By assumption, $\E[ \tfrac{1}{n}\sum_{i=1}^n (w_n(\what{G}_S(Q_i)) - w(1-F_S(Q_i)))^2 ] \to 0$.
The last inequality follows from Assumption~\ref{assume:regularity}.

\noindent \textbf{Term 5}

Finally, we control the last term using Assumption~\ref{assume:nuisance}. We begin by splitting the sum up into the $K$ partitions,
\begin{equation*}
    \frac{1}{\sqrt{n}}\sum_{i=1}^n w_n(\what{G}_S(Q_i)) \delta_i = \sum_{k=1}^K \frac{1}{\sqrt{n}} \sum_{i \in I_k} w_n(\what{G}_S(Q_i)) \delta_i,
\end{equation*}
and note that because these partitions are selected uniformly at random, it suffices to show that
\begin{align*}
    P\left( \bigg| \frac{1}{\sqrt{n}}\sum_{i \in I_1} w_n(\what{G}_S(Q_i)) \delta_i \bigg| \ge \epsilon \right) \to 0
\end{align*}
for any $\epsilon > 0$. Recalling $G_1$ from Assumption~\ref{assume:nuisance}, we divide this event into a piece that happens on $G_1$ and a piece that happens on $G_1^C$,
\begin{align*}
    P\left( \bigg| \frac{1}{\sqrt{n}}\sum_{i \in I_1} w_n(\what{G}_S(Q_i)) \delta_i \bigg| \ge \epsilon \right) \le  P\left( \bigg| \frac{1}{\sqrt{n}}\sum_{i \in I_1} w_n(\what{G}_S(Q_i)) \delta_i \bigg| \ge \epsilon \mid G_1 \right) P\left( G_1 \right) + P\left(G_1^C\right).
\end{align*}
Because $P(G_1) \to 1$, we need to show that 
\begin{align*}
    P\left( \bigg| \frac{1}{\sqrt{n}}\sum_{i \in I_1} w_n(\what{G}_S(Q_i)) \delta_i \bigg| \ge \epsilon \mid G_1 \right) \to 0.
\end{align*}
Denote $\{Q_i\}_{i \in I_1}, \{(W_i, X_i, Y_i)\}_{i \not\in I_1}$ as $B_1$. Using $\cas$ here to denote convergence almost everywhere on $G_1$, it suffices to show that
\begin{equation*}
    P\left( \bigg| \frac{1}{\sqrt{n}}\sum_{i \in I_1} w_n(\what{G}_S(Q_i)) \delta_i \bigg| \ge \epsilon \mid B_1, G_1 \right) \cas 0,
\end{equation*}
because the bounded convergence theorem implies that conditional convergence implies unconditional convergence.

If we replace $w_n(\what{G}_S(Q_i))$ with $w(1-F_S(Q_i))$ this would follow easily by Chebyshev's inequality from the bias and variance conditions of Assumption~\ref{assume:nuisance}:
\begin{align*}
    |\E\left[ \frac{1}{\sqrt{n}}\sum_{i \in I_1}^n w(1-F_S(Q_i)) \delta_i \mid G_1, B_1 \right]| \le |\sqrt{n}\E[ w(1-F_S(Q_i)) \delta_i \mid B_1, G_1 ]| \to 0,
\end{align*}
and, because we assumed in Assumption~\ref{assume:nuisance} that $\{\delta_i\}_{i \in I_1}$ are independent conditionally on $B_1, G_1$,
\begin{align*}
    \var\left( \frac{1}{\sqrt{n}}\sum_{i \in I_1}^n w(1-F_S(Q_i)) \delta_i \mid G_1, B_1 \right) &= \frac{1}{n}\sum_{i \in I_1} \var\left( w(1-F_S(Q_i)) \delta_i \mid G_1, B_1 \right) \to 0.
\end{align*}
This, along with the following lemma proves that Term 5 converges to zero.
\begin{lemm}
\label{lem:approx-error-weights}
Under Assumptions~\ref{assume:weight-fn} and~\ref{assume:nuisance},
\begin{equation}
   P\left(\bigg| \frac{1}{\sqrt{n}}\sum_{i \in I_1} (w_n(\what{G}_S(Q_i))-w(1-F_S(Q_i))) \delta_i \bigg| \ge \epsilon \mid B_1, G_1 \right) \cas 0.
   \label{eq:approx-error-weights}
\end{equation}
\end{lemm}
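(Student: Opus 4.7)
The plan is to invoke a Chebyshev-type bound conditionally on $(B_1, G_1)$, exploiting the conditional independence of $\{\delta_i\}_{i \in I_1}$ granted by Assumption~\ref{assume:nuisance}(a), and then control the resulting conditional second moment by separating it into a bias piece and a centered variance piece. Writing $\Delta_i := w_n(\what{G}_S(Q_i)) - w(1 - F_S(Q_i))$, which is $B_1$-measurable since $B_1$ determines every $Q_j$ for $j \in I_1$ and hence the empirical CDF $\what{G}_S$, the triangle inequality combined with conditional Chebyshev gives
\begin{equation*}
P\!\left(\bigg|\frac{1}{\sqrt n}\sum_{i \in I_1}\Delta_i \delta_i\bigg| \ge \epsilon \,\Big|\, B_1, G_1\right) \le \frac{4}{\epsilon^2 n}\sum_{i \in I_1}\Delta_i^2 \var(\delta_i \mid B_1, G_1) + \mathbf{1}\!\left\{\bigg|\frac{1}{\sqrt n}\sum_{i \in I_1}\Delta_i\,\E[\delta_i \mid B_1, G_1]\bigg| \ge \epsilon/2\right\},
\end{equation*}
and it suffices to show each summand converges to zero almost surely on $G_1$.

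For the variance term, the factor $1 + B^2(F_S(Q_i))(t(1-t))^{\epsilon}$ appearing in Assumption~\ref{assume:nuisance}(c) is engineered precisely to cancel the endpoint blowup of $w$, so that $\Delta_i^2 \var(\delta_i \mid Q_i, B_1, G_1)$ is controllable by $\Delta_i^2 \cdot \max\{\zeta_n v(Q_i), C\}$ up to constants. The task then reduces to $\frac{1}{n}\sum_{i \in I_1}\Delta_i^2 \to 0$. Since the partition $I_1$ is uniformly random of size $n/K$, this is equivalent (up to the factor $K$) to $\E\big[\frac{1}{n}\sum_{i=1}^n \Delta_i^2\big] \to 0$, which is exactly the hypothesis of Assumption~\ref{assume:weight-fn}. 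A subsequence/Borel--Cantelli argument then upgrades convergence in expectation to convergence almost surely on $G_1$; convergence in probability would already suffice when the bound is fed back through bounded convergence in the proof of Theorem~\ref{thm:asymp-linear}.

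For the bias term, I would further split $\Delta_i = [w_n(\what{G}_S(Q_i)) - w(\what{G}_S(Q_i))] + [w(\what{G}_S(Q_i)) - w(1 - F_S(Q_i))]$. The first summand, paired against $\E[\delta_i \mid B_1, G_1]$ and scaled by $1/\sqrt n$, is controlled by H\"older's inequality with exponents $p$ (from Assumption~\ref{assume:weight-fn}) and $r$ (from Assumption~\ref{assume:nuisance}(c)); the relation $1/r \le 1 - 1/p$ assumed in Assumption~\ref{assume:regularity} makes the $n$-exponents balance. The second summand I would recognize as an empirical-process-type integral: letting $\mu(q) := \E[\delta_i \mid Q_i = q, B_1, G_1]$, which is of bounded variation by Assumption~\ref{assume:nuisance}(b), the sum is approximated by $\int (w \circ \what{G}_S - w \circ G_S)\,d\mu$, and the condition in Assumption~\ref{assume:nuisance}(b) that $\sqrt n\,\E[w(1 - F_S(Q_i))\delta_i \mid B_1, G_1] \to 0$ combined with an integration-by-parts-style argument (using bounded variation of $\mu$ and an empirical-process bound on $\what G_S - G_S$) closes this piece.

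The principal technical obstacle is the boundary behavior of $w$: since $|w(t)|$ may diverge like $(t(1-t))^{-b}$, naive uniform-convergence arguments for $w \circ \what{G}_S \to w \circ G_S$ break down near $0$ and $1$. My plan to overcome this is a trimming argument. On an interior set $[a_n, 1-a_n]$ with $a_n \to 0$ slowly enough that $B(a_n)$ grows slower than the Glivenko--Cantelli rate for $\what{G}_S - G_S$, uniform convergence plus uniform continuity of $w$ on the compact complement handles the bulk contribution. On the boundary strip, the envelope $B$ multiplied by the moment bounds on $\delta_i$ from Assumption~\ref{assume:nuisance}(c), together with the matching weight-plus-moment exponent $\epsilon$, is precisely what makes the residual vanish; this interlocking of the weight envelope and the $\delta_i$-moment envelope is the main reason Assumption~\ref{assume:nuisance}(c) is stated in its somewhat awkward form.
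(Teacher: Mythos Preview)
Your decomposition is the same as the paper's: center $\delta_i$ at its conditional mean to isolate a variance piece (handled by conditional Chebyshev and the $L^2$-convergence clause of Assumption~\ref{assume:weight-fn}), then split the remaining bias $\Delta_i\,\E[\delta_i\mid Q_i]$ according to $\Delta_i=[w_n(\what G_S)-w(\what G_S)]+[w(\what G_S)-w(1-F_S)]$, with the first bracket controlled by H\"older in the exponents $p$ and $r$. Those three pieces and their treatments line up one-for-one with Terms~1--3 in the paper's proof.

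The only substantive divergence is in the last piece. You propose to handle $\frac{1}{\sqrt n}\sum_{i\in I_1}\big(w(\what G_S(Q_i))-w(1-F_S(Q_i))\big)\mu(Q_i)$ by a hands-on empirical-process argument: integration by parts against the bounded-variation function $\mu(q)=\E[\delta_i\mid Q_i=q,B_1,G_1]$, a Glivenko--Cantelli bound on $\what G_S-G_S$, and trimming near $0$ and $1$ to tame the envelope $B(t)$. The paper instead observes that this term is an $L$-statistic with an $n$-dependent score $g_n(t)=\E[\delta_i\mid Q_i=F_S^{-1}(t)]$ and $g(t)\equiv 0$, and invokes \citet{Shorack72} directly; the envelope and moment conditions you have correctly identified as the crux (the interlocking of $|w|\le B$ with the $(1+B^2(\cdot)(t(1-t))^\epsilon)$-weighted second moment in Assumption~\ref{assume:nuisance}(c), and the bounded-variation hypothesis on $\mu$) are exactly what verify Shorack's Assumptions~1--3. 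Since Shorack's own proof proceeds by trimming and integration by parts, your route is sound in principle but amounts to reproving the relevant portion of that paper; citing it buys you the endpoint analysis for free and avoids having to make the trimming rate $a_n$ explicit.
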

Proving Lemma~\ref{lem:approx-error-weights} is similar to how we control Terms 2-4 in in the above decomposition, however now with an $L$-statistic whose variance goes to 0. See Section~\ref{proof:approx-error-weights} in the Supplementary Materials for the details.
\end{proof}

\subsection{Proof of Lemma\ref{lem:half-boot}}
\label{proof:bootstrap}

\begin{proof}
Throughout, we will assume that $n$ is divisible by two for notational simplicity.
Because $\sqrt{n}(\what{\beta} - \beta) = \frac{1}{\sqrt{n}}\sum_{i=1}^n \psi(Z_i) + o_P(1)$, when fit with $n/2$ samples, we have that $\sqrt{n}(\what{\beta}^\ast - \beta) = \sqrt{n}\frac{2}{n}\sum_{i=1}^{n/2} \psi(Z_i) + o_P(1)$.

Write $\sqrt{n}(\what{\beta}^\ast - \what{\beta})$ as the difference between the two terms above,
\begin{align}
    \sqrt{n}(\what{\beta}^\ast - \what{\beta}) &= \sqrt{n}(\what{\beta}^\ast - \beta) - \sqrt{n}(\what{\beta} - \beta) \nonumber \\
    &= \frac{1}{\sqrt{n}}\sum_{i=1}^{n} V_i \psi(Z_i) + o_P(1),
    \label{eq:corr-mult-boot}
\end{align}
where $V_i$ is $1$ if the $i$-th example is in the subset of $n/2$ examples in the bootstrap sample, and otherwise is $-1$.

Our goal is to show that this is (conditionally) asymptotically normal with distribution $\normal{}(0, \var(\psi(Z_i))$. Because $\sqrt{n}(\what{\beta} - \beta) \cd Z,$ with $Z \sim \normal{}(0, \var(\psi(Z_i))$, this will prove the result.

To show that this is true, we first consider a ``binomialized'' version of the term \eqref{eq:corr-mult-boot}.\footnote{This technique is inspired by the Poissonization proof technique for the standard empirical bootstrap as in \cite[Chp. 3.6]{VanDerVaartWe96}.}
Let $N_n$ be a random draw from a $\operatorname{Ber}(n, \half)$ distribution. If $N_n > n/2$, randomly choose $N_n - n/2$ samples with $V_i = -1$ and set $\tV_i = 1$. If $N_n < n/2$, randomly choose $n/2 - N_n$ samples with $V_i = 1$ and set $\tilde{V}_i = -1$. Set $\tV_i = V_i$ for all other samples. With this construction, $\{\tV_i\}_{i=1}^n$ are i.i.d. Rademacher random variables,
\begin{equation*}
    \tV_i = \begin{cases}
    1 & \text{w.p.}~\half \\
    -1 & \text{w.p.}~\half.
    \end{cases}
\end{equation*}
Therefore, the multiplier central limit theorem \citep[Lemma 2.9.5]{VanDerVaartWe96} implies that
\begin{equation*}
    \frac{1}{\sqrt{n}}\sum_{i=1}^{n} \tV_i \psi(Z_i)
\end{equation*}
converges to $\normal{}(0, \var(\psi(Z_i))$, conditionally on $(Z_i)_{i=1}^n$ almost surely.

To complete the proof, we need to show that conditionally (almost surely) on $(Z_i)_{i=1}^n$,
\begin{equation*}
    R_n(N_n) = \frac{1}{\sqrt{n}}\sum_{i=1}^{n} (\tV_i - V_i) \psi(Z_i) = o_P(1).
\end{equation*}
Towards this goal, first notice that $V_i - \tV_i$ is only nonzero on the $|N_n - n/2|$ samples changed in the construction of $\tV_i$. Let $I_{n}$ be the indices of these samples. Then, observe that
\begin{align*}
    R_n(N_n) = \sign{}(N_n - n/2) \frac{2}{\sqrt{n}} \sum_{i \in I_n} \psi(Z_i).
\end{align*}
Inspecting the symmetry of $R_n(N_n)$, we can see that because $N_n$ is symmetric about $n/2$, $\aEE{R_n(N_n)} = 0$, with $\aEE{\cdot}$ shorthand for $\EE{\cdot \cond \{Z_i\}_{i=1}^n}.$ Our approach, then, is to show that $\condvar{}(R_n(N_n)) \cas 0$, where $\condvar{}(\cdot) = \var(\cdot \mid \{Z_i\}_{i=1}^n)$ which by Chebyshev's inequality, implies $P(|R_n(N_n)| > \epsilon \cond \{Z_i\}_{i=1}^n) \to 0,$ as claimed.

To bound the variance, we start by applying the law of total variance,
\begin{equation*}
    \condvar{}(R_n(N_n)) = \condvar{}\left(\aEE{R_n(N_n) \cond N_n}\right) + \aEE{\condvar{}(R_n(N_n) \cond N_n)},
\end{equation*}
and note that conditional on $N_n$, the remaining randomness is the choice of $|N_n - n/2|$ samples in $I_n$. In this way,
\begin{equation*}
    \sum_{i \in I_n} \psi(Z_i)
\end{equation*}
is naturally interpretable as $A_n \defeq |N_n - n/2|$ times the sample average of $A_n$ randomly drawn points $\psi(Z_i)$ from the finite population $\{\psi(Z_i)\}_{i=1}^n$. With this in mind,
\begin{align*}
    \aEE{\sum_{i \in I_n} \psi(Z_i) \cond N_n} &= \frac{A_n}{n} \sum_{i=1}^n \psi(Z_i)\\
    \condvar{}\left( \sum_{i \in I_n} \psi(Z_i) \cond N_n \right) &= A_n\frac{n - A_n}{n}\left(\frac{1}{n-1} \sum_{i=1}^n \psi^2(Z_i) - \left( \frac{1}{n-1} \sum_{i=1}^n \psi(Z_i)\right)^2  \right).
\end{align*}
Plugging these expressions into $R_n(N_n)$ gives
\begin{align}
    \aEE{R_n(N_n) \cond N_n} &= \frac{2}{\sqrt{n}} \frac{N_n - n/2}{n} \sum_{i=1}^n \psi(Z_i) \nonumber \\ 
\intertext{and so}
    \condvar{}\left(\aEE{R_n(N_n) \cond N_n}\right) &= \frac{4}{n} \condvar{}(N_n) \left( \frac{1}{n} \sum_{i=1}^n \psi(Z_i) \right)^2 = \left( \frac{1}{n} \sum_{i=1}^n \psi(Z_i) \right)^2
    \label{eq:exp-than-var}
\end{align}
and
\begin{equation}
    \aEE{\condvar{}\left(R_n(N_n) \cond N_n\right)} = \aEE{\frac{4A_n(n - A_n)}{n^2}}\underbrace{\left(\frac{1}{n-1} \sum_{i=1}^n \psi^2(Z_i) - \left( \frac{1}{n-1} \sum_{i=1}^n \psi(Z_i)\right)^2  \right)}_{s_\psi^2}.
    \label{eq:var-than-exp}
\end{equation}
Term \eqref{eq:exp-than-var} converges to zero almost surely by the Strong Law of Large Numbers. Term \eqref{eq:var-than-exp} also converges to zero almost surely, because $(n - A_n) / n \le 1$, $\aEE{A_n/n} \asymp 1/\sqrt{n}$, and $s_\psi^2 \to \var(\psi(Z_i))$ almost surely by the Strong Law.
\end{proof}

\subsection{Proof of Lemma~\ref{lem:unif-cts}}
\label{proof:unif-cts}
\begin{proof}
Define $\what{G}_S(q)$ as in the proof of Theorem~\ref{thm:asymp-linear} in Section~\ref{sec:proofs}, and let $G_S(q)$ be the corresponding distribution function of $S(X_i)$ in reverse order.
By continuity, we know that $|w(\what{G}_S(S(X_i))) - w(1-F_S(S(X_i)))| \le \rho_w(|\what{G}_S(S(X_i)) - G_S(S(X_i)|)$, where $\rho_w$ is the modulus of continuity of $w$. By the monotonicity of $\rho_w$, $\rho_w(|\what{G}_S(S(X_i) - G_S(S(X_i)|) \le \rho_w(\|\what{G} - G\|_{\infty}))$. Therefore,
$\frac{1}{n} \sum_{i=1}^n (w(\what{G}_S(S(X_i))) - w(G_S(S(X_i))))^2 \le \rho_w^2(\|\what{G} - G\|_{\infty})$. The Glivenko-Cantelli Theorem shows that $\|\what{G} - G\|_{\infty} \to 0$ almost surely, and so the claimed statement holds due to the bounded convergence theorem, as $\what{G}$ and $G$ are bounded in $[0,1]$, implying that $\rho_w(\|\what{G} - G\|_\infty)$ is bounded, as well.
\end{proof}

\subsection{Proof of Proposition~\ref{prop:autoc-weight}}
\label{proof:autoc-weight}
\begin{proof}
Define $\what{G}_S(q)$ as in the proof of Theorem~\ref{thm:asymp-linear} in Section~\ref{sec:proofs}, and let $G_S(q)$ be the corresponding distribution function of $S(X_i)$ in reverse order.
Writing
\begin{align*}
    &\E[\frac{1}{n}\sum_{i=1}^n (w_n(\what{G}_S(S(X_i))) - w(G_S(S(X_i))))^2] \\
    &\le \frac{2}{n}\underbrace{\sum_{i=1}^n (w_n(i/n) - w(i/n))^2}_{\text{Part 1}} + \underbrace{\E\left[ \frac{2}{n}\sum_{i=1}^n (-\log(\what{G}_S(Q_i)) + \log(G_S(Q_i)))^2 \right]}_{\text{Part 2}},
\end{align*}
we verify that each of these terms vanishes in the following steps:
\begin{enumerate}
    \item Showing that Part 1 is $O(1)$, then
    \item showing that Part 2 vanishes.
\end{enumerate}

\textbf{Step 1:}
Recall that $w_n(i/n) = H_n - H_i - 1$ and $w(i/n) = \log(n) - \log(i) - 1$. Noting that $|H_n - H_k - \log(n) + \log(k)| \le  \frac{1}{2k}$ for $1 \le k \le n$, averaging over these differences gives
\begin{equation*}
    \sum_{i=1}^n (w_n(i/n) - w(i/n))^2 \le \sum_{i=1}^n \frac{1}{(2i)^2} \le \frac{\pi^2}{24}.
\end{equation*}
This verifies the second part of Assumption~\ref{assume:weight-fn}.
Additionally, $\frac{1}{n}\sum_{i=1}^n (w_n(i/n) - w(i/n))^2 \to 0$, showing that the first term of the above decomposition vanishes.

\textbf{Step 2:}
In this step, we bound the squared difference between $w \circ \what{G}_S$ and $w \circ G_S$ in the following sense:
\begin{equation*}
  \E\left[ \frac{1}{n}\sum_{i=1}^n (-\log(\what{G}_S(Q_i)) + \log(G_S(Q_i)))^2 \right] = o(1).
\end{equation*}

We make the following observation, with proof in the following section, that shows that it is sufficient to control $(-\log(\what{G}_S(Q_i)) + \log(G_S(Q_i)))^2$ on the event $G_S(Q_i) \ge n^{\epsilon-1} / 2, \what{G}_S(Q_i) \ge n^{\epsilon - 1}$.
\begin{lemm}
\label{lem:log-terms}
For any $1/2 < \epsilon < 1$,
\begin{align*}
    &\E\left[ \frac{1}{n}\sum_{i=1}^n (-\log(\what{G}_S(Q_i)) + \log(G_S(Q_i)))^2 \right] \\
    &= \underbrace{\E\left[\indic{\what{G}_S(Q_i) \ge n^{\epsilon - 1}, G_S(Q_i) \ge n^{\epsilon-1} / 2}(-\log(\what{G}_S(Q_i)) + \log(G_S(Q_i)))^2 \right]}_{(\ast)} + o(1).
\end{align*}
\end{lemm}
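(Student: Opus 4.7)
The plan is to reduce the claim to showing that the contribution from the ``bad'' event
$$E_i \defeq \{\what{G}_S(Q_i) < n^{\epsilon-1}\} \cup \{G_S(Q_i) < n^{\epsilon-1}/2\}$$
is asymptotically negligible, i.e., to proving that
$$R_n \defeq \E\left[\frac{1}{n}\sum_{i=1}^n \indic{E_i}\p{\log \what{G}_S(Q_i) - \log G_S(Q_i)}^2\right] = o(1).$$
By exchangeability of the $Q_i$ and the symmetric dependence of $\what{G}_S$ on them, every summand has the same expectation, so it suffices to bound $\E[\indic{E_1}(\log \what{G}_S(Q_1) - \log G_S(Q_1))^2]$ for a single index.

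First I would estimate $P(E_i)$. Since $F_S$ is continuous, $G_S(Q_i) \sim \operatorname{Unif}(0,1)$, so $P(G_S(Q_i) < n^{\epsilon-1}/2) = n^{\epsilon-1}/2$. By exchangeability of the ranks, $\what{G}_S(Q_i)$ is uniform on $\{1/n, 2/n, \ldots, 1\}$, hence $P(\what{G}_S(Q_i) < n^{\epsilon-1}) = \lfloor n^{\epsilon}\rfloor/n$. A union bound yields $P(E_i) = O(n^{\epsilon-1}) = o(1)$.

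Second, I would apply the crude inequality $(\log a - \log b)^2 \le 2(\log a)^2 + 2(\log b)^2$ to split $R_n$ into a $\what{G}_S$ piece and a $G_S$ piece. The $\what{G}_S$ piece is straightforward: the deterministic lower bound $\what{G}_S(Q_i) \ge 1/n$ (since $Q_i$ itself is in the sample) gives $(\log \what{G}_S(Q_i))^2 \le (\log n)^2$, so its contribution is at most $2(\log n)^2 P(E_i) = O((\log n)^2 n^{\epsilon-1}) = o(1)$. For the $G_S$ piece, I would split further according to which defining event of $E_i$ is in force. On $\{G_S(Q_i) < n^{\epsilon-1}/2\}$, direct integration $\int_0^{n^{\epsilon-1}/2}(\log u)^2\,du = O(n^{\epsilon-1}(\log n)^2)$ gives the bound. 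On $\{\what{G}_S(Q_i) < n^{\epsilon-1}\}$, I would apply Cauchy--Schwarz, using that $\E[(\log G_S(Q_i))^4] < \infty$ because $G_S(Q_i)$ is uniform, to obtain $O(n^{(\epsilon-1)/2})$. Both are $o(1)$ since $\epsilon < 1$.

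The one genuine obstacle is the $(\log G_S(Q_i))^2$ term on $\{\what{G}_S(Q_i) < n^{\epsilon-1}\}$: $\log G_S(Q_i)$ has no deterministic upper bound, so one cannot absorb it into a prefactor like $\log n$ as one does for $\log \what{G}_S(Q_i)$. The fix is to trade this unboundedness against the $O(n^{\epsilon-1})$ probability of the bad event via Cauchy--Schwarz, exploiting that uniforms have all finite log-moments. The remaining steps are routine probabilistic bookkeeping; notably, the exponent condition $\epsilon < 1$ is exactly what is needed here, while the stronger requirement $\epsilon > 1/2$ will be used only in the subsequent step that bounds the good-event contribution $(\ast)$.
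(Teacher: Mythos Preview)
Your proposal is correct, and in fact slightly cleaner than the paper's own argument. Both proofs share the same skeleton: reduce to a single index by exchangeability, use the deterministic bound $\what{G}_S(Q_i)\ge 1/n$ to control the $\log\what{G}_S$ piece by $(\log n)^2$, and then deal separately with the $\log G_S$ piece. The paper, however, organizes the split as a sum over ranks $j\le n^\epsilon$ versus $j>n^\epsilon$; for the first block it bounds $\E[\log^2 G_S(Q_{i(j)})]$ by the explicit order-statistic computation $\E[\log^2 U_{(1)}]\le H_n^2+\pi^2$, and for the second block it controls the sliver $\{\what{G}_S\ge n^{\epsilon-1},\,G_S<n^{\epsilon-1}/2\}$ via the DKW inequality, which is where the paper invokes $\epsilon>1/2$. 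Your Cauchy--Schwarz device, using only that $\E[(\log U)^4]<\infty$ for $U\sim\mathrm{Unif}(0,1)$, replaces both the order-statistic density calculation and the DKW step at once, and needs only $\epsilon<1$. So your version of the lemma actually holds on the wider range $0<\epsilon<1$, with the constraint $\epsilon>1/2$ entering only in the downstream bound on $(\ast)$, exactly as you note.
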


To control the expectation $(\ast)$, we do a Taylor expansion and apply the DKW inequality. For some $0 < s < 1$,
\begin{align*}
    (-\log(\what{G}_S(Q_i)) + \log(G_S(Q_i)))^2 &\le \left( -\frac{\what{G}_S(Q_i) - G_S(Q_i)}{\what{G}_S(Q_i)} - \frac{(\what{G}_S(Q_i) - G_S(Q_i))^2}{(s\what{G}_S(Q_i) + (1-s)G_S(Q_i))^2} \right)^2.
\intertext{On the event $\{\what{G}_S(Q_i) \ge n^{\epsilon - 1}, G_S(Q_i) \ge n^{\epsilon-1} / 2\}$, this is bounded by}
    &\le n^{2(1-\epsilon)}(\what{G}_S(Q_i) - G_S(Q_i))^2 \\
    &\quad\quad + 16n^{4(1-\epsilon)}(\what{G}_S(Q_i) - G_S(Q_i))^4
    \\
    &\le n^{2(1-\epsilon)}\|\what{G}_S - G_S\|_\infty^2 + 16n^{4(1-\epsilon)}\|\what{G}_S - G_S\|_\infty^4.
\end{align*}
\cite{DvoretzkyKiWo56} shows that $P(\|\what{G}_S - G_S\|_\infty > t) \le 2\exp(-Cnt^2)$ for some $0 < C < \infty$. Using that for a random variable $V > 0$ and exponent $r \ge 1$, we have $\E[V^r] \le \int_0^\infty t^{r-1} P(V > t) \dif{t}$, we can bound
\begin{align*}
    (\ast) &\le \E[ n^{2(1-\epsilon)}\|\what{G}_S - G_S\|_\infty^2 + 16n^{4(1-\epsilon)}\|\what{G}_S - G_S\|_\infty^4]
    \\
    &\le n^{2(1-\epsilon)} \int_0^\infty t \exp(-Cnt^2) \dif{t} + 16n^{4(1-\epsilon)} \int_0^\infty t^3 \exp(-Cnt^2) \dif{t}
    \\
    &\le
    \frac{n^{2(1 - \epsilon)}}{2Cn} + \frac{16n^{4(1 - \epsilon)}}{2C^2n^2} = \frac{1}{2C}n^{1 - 2\epsilon} + \frac{8}{C^2}n^{2 - 4\epsilon}.
\end{align*}
This vanishes as $n \to \infty$ so long as $\epsilon > 1/2$.
\end{proof}

\subsection{Proof of Lemma~\ref{lem:log-terms}}
\label{proof:log-terms}
\begin{proof}
We continue to use the notation for $\what{G}_S(q)$ as in the proof of Theorem~\ref{thm:asymp-linear} in Section~\ref{sec:proofs}, and use $G_S(q)$ be the corresponding distribution function of $S(X_i)$ in reverse order. First, we split the term into two parts, 
\begin{align*}
    &\E\left[ \frac{1}{n}\sum_{i=1}^n (-\log(\what{G}_S(Q_{i})) + \log(G_S(Q_{i})))^2 \right] \\
    &=\underbrace{\E\left[ \frac{1}{n}\sum_{j=1}^{n^{\epsilon}} (-\log(\what{G}_S(Q_{i(j)})) + \log(G_S(Q_{i(j)})))^2 \right]}_{\eqdef R_1} + \E\left[ \frac{1}{n}\sum_{j=n^{\epsilon}}^n (-\log(\what{G}_S(Q_{i(j)})) + \log(G_S(Q_{i(j)})))^2 \right].
\end{align*}

The first step of the proof is to show that $R_1 = o(1)$, and the second step is to show that additionally conditioning the second term on $G_S(Q_{i(j)}) > n^{\epsilon - 1} / 2)$ has an asymptotically negligible effect on the expectation.

\textbf{Step 1:} 
Let $U_i$ be iid uniform random variables on $[0, 1]$, and $U_{i(1)}$ be the minimum of $U_1, \dots, U_n$. The key to this step is a change of variables from $Q_i$ to $U_i$. Let $f_{U_{i(1)}}(u)$ be it's density,
\begin{equation*}
    f_{U_{i(1)}}(u) = n(1-u)^{n - 1}.
\end{equation*}
By the monotonicity of the order statistics $Q_{i(j)}$, $t \mapsto G_S(t)$ and $t \mapsto \log t$,
\begin{align*}
    \E[ \log^2(G_S(Q_{i(j)}))]
    &\le \E[ \log( G_S(Q_{i(1)}))]
    \\
    &= \int \log^2(u) f_{U_{i(1)}}(u) \dif{u}
    \\
    &= \frac{6n H_n^2 - 6n \psi^{(1)}(n+1) + 6n\pi^2}{6n} \le H_n^2 + \pi^2.
\end{align*}

Along with the fact that at all of the observed samples $\{Q_i\}_{i=1}^n$, used to construct $\what{G}_S(\cdot)$, $1/n \le \what{G}_S(Q_i) \le 1$, this allows us to bound $R_1$ as
\begin{align*}
    R_1 &= \E\left[ \frac{1}{n}\sum_{j=1}^{n^{\epsilon}} (-\log(\what{G}_S(Q_{i(j)})) + \log(G_S(Q_{i(j)})))^2 \right]
    \\
    &\le
    \E\left[  \frac{1}{n}\sum_{j=1}^{n^{\epsilon}} 2\log^2(\what{G}_S(Q_{i(j)})) + 2\log^2(G_S(Q_{i(j)})) \right]
    \\
    &\le \frac{2 \log^2(n) + 2H_n^2 + 2\pi^2}{n^{1-\epsilon}} \to 0
\end{align*}
because $\epsilon < 1$.

\textbf{Step 2:}
Notice that the sum $\E\left[ \frac{1}{n}\sum_{j=n^{\epsilon}}^n (-\log(\what{G}_S(Q_{i(j)})) + \log(G_S(Q_{i(j)})))^2 \right]$ is over all indices $i$ such that $\what{G}_S(Q_i) \ge n^{\epsilon-1}$, therefore we can rewrite the sum as
\begin{align*}
    &\E\left[ \frac{1}{n}\sum_{j=n^{\epsilon}}^n (-\log(\what{G}_S(Q_{i(j)})) + \log(G_S(Q_{i(j)})))^2 \right]
    \\
    &= \E\left[ \frac{1}{n}\sum_{i=1}^n \indic{\what{G}_S(Q_{i}) \ge n^{\epsilon-1}} (-\log(\what{G}_S(Q_{i})) + \log(G_S(Q_{i})))^2 \right],
\end{align*}
so that each term of the sum is identically distributed. By linearity of expectation, this is equal to
\begin{align*}
    \E\left[ \indic{\what{G}_S(Q_{i}) \ge n^{\epsilon-1}} (-\log(\what{G}_S(Q_{i})) + \log(G_S(Q_{i})))^2 \right]
\end{align*}
Splitting this expectation into two events, $\{G_S(Q_{i}) \ge n^{\epsilon-1}/2 \}$ and $\{G_S(Q_{i}) < n^{\epsilon-1}/2 \}$, we show that the latter event is rare enough to contribute negligibly to the expectation:
\begin{align*}
    &\E\left[ \indic{\what{G}_S(Q_{i}) \ge n^{\epsilon-1}, G_S(Q_{i}) < n^{\epsilon-1}/2} (-\log(\what{G}_S(Q_{i})) + \log(G_S(Q_{i})))^2 \right]
    \\
    &\le \E\left[ 2 \log^2(n) + 2\log^2(U) \mid U \le n^{\epsilon-1}/2  \right] P\left(\what{G}_S(Q_{i}) \ge n^{\epsilon-1}, G_S(Q_{i}) < n^{\epsilon-1}/2 \right)
    \\
    &= O(\log(n) + \log^2(n)) P\left(\what{G}_S(Q_{i}) \ge n^{\epsilon-1}, G_S(Q_{i}) < n^{\epsilon-1}/2 \right)
    \\
    &\le O(\log(n) + \log^2(n)) P\left(\|\what{G}_S(\cdot) - G_S(\cdot)\|_\infty > n^{\epsilon-1}/2 \right)
\end{align*}
\cite{DvoretzkyKiWo56} shows that $P(\|\what{G}_S - G_S\|_\infty > t) \le 2\exp(-Cnt^2)$ for some $0 < C < \infty$, so the above is bounded by
\begin{equation*}
    O(\log(n) + \log^2(n)) \exp(-C n^{2\epsilon-1}/4)
\end{equation*}
which converges to $0$ as $n \to \infty$, because $\epsilon > 1/2$.
\end{proof}

\subsection{Proof of Lemma~\ref{lem:approx-error-weights}}
\label{proof:approx-error-weights}
\begin{proof}
Throughout this proof, we will do everything implicitly conditional on $B_1$ and $G_1$.
We split the term \eqref{eq:approx-error-weights} into three terms,
\begin{align*}
    &\frac{1}{\sqrt{n}}\sum_{i \in I_1} (w_n(\what{G}_S(Q_i))-w(1-F_S(Q_i))) \delta_i \\
    &\quad~\quad~\quad~\quad~\quad~\quad~= \underbrace{\frac{1}{\sqrt{n}}\sum_{i \in I_1} (w_n(\what{G}_S(Q_i))-w(1-F_S(Q_i))) (\delta_i - \E[\delta_i \mid Q_i])}_{\text{Term 1}} \\
    &\quad~\quad~\quad~\quad~\quad~\quad~\quad~+ \underbrace{\frac{1}{\sqrt{n}}\sum_{i \in I_1} (w_n(\what{G}_S(Q_i))-w(\what{G}_S(Q_i))) \E[\delta_i \mid Q_i])}_{\text{Term 2}}\\
    &\quad~\quad~\quad~\quad~\quad~\quad~\quad~+ \underbrace{\frac{1}{\sqrt{n}}\sum_{i \in I_1} (w(\what{G}_S(Q_i))-w(1-F_S(Q_i))) \E[\delta_i \mid Q_i]).}_{\text{Term 3}}
\end{align*}
Under the assumptions of this Lemma, each of these terms converges in probability to zero, as we now show. The first term is mean zero, and therefore, that it is negligible follows from Chebyshev's inequality and Assumption~\ref{assume:weight-fn} 
\begin{align*}
    &P\left(\left| \frac{1}{\sqrt{n}}\sum_{i \in I_1} (w_n(\what{G}_S(Q_i))-w(1-F_S(Q_i))) (\delta_i - \E[\delta_i \mid Q_i])\right| > \epsilon \right) \\
    &~\quad~\quad~\le \frac{\var\left(\frac{1}{\sqrt{n}}\sum_{i \in I_1} (w_n(\what{G}_S(Q_i))-w(1-F_S(Q_i))) (\delta_i - \E[\delta_i \mid Q_i])\right)}{\epsilon^2} \\
    &~\quad~\quad~= \frac{\E\left[\var\left(\frac{1}{\sqrt{n}}\sum_{i \in I_1} (w_n(\what{G}_S(Q_i))-w(1-F_S(Q_i))) (\delta_i - \E[\delta_i \mid Q_i])\mid \{Q_i\}_{i=1}^n \right)\right]}{\epsilon^2} \\
    &~\quad~\quad~= \frac{\E[\frac{1}{n}\sum_{i \in I_1} (w_n(\what{G}_S(Q_i))-w(1-F_S(Q_i)))^2 \E[\delta_i^2\mid Q_i]] }{\epsilon^2} \\
    &~\quad~\quad~\le \frac{C}{\epsilon^2}\E\left[\frac{1}{n}\sum_{i \in I_1} (w_n(\what{G}_S(Q_i))-w(1-F_S(Q_i)))^2 \right] \to 0.
\end{align*}

Similarly, the second term follows from applying H\"older's inequality to show that the second moment of the second term goes to zero, and applying Markov's inequality,
\begin{align*}
    &P\left(\left| \frac{1}{\sqrt{n}}\sum_{i \in I_1} (w_n(\what{G}_S(Q_i))-w(\what{G}_S(Q_i))) \E[\delta_i \mid Q_i]) \right| > \epsilon^2 \right) \\
    &~\quad~\quad~\le \frac{\E\left[\left|\frac{1}{\sqrt{n}}\sum_{i \in I_1} (w_n(\what{G}_S(Q_i))-w(\what{G}_S(Q_i))) \E[\delta_i \mid Q_i]) \right| \right]}{\epsilon} \\
    &~\quad~\quad~\le \frac{\E\left[\sqrt{\sum_{i \in I_1} (w_n(\what{G}_S(Q_i))-w(\what{G}_S(Q_i)))^2 \frac{1}{n}\sum_{i \in I_1}\E[\delta_i \mid Q_i]^2}\right]}{\epsilon} \\
    &~\quad~\quad~= \frac{\sqrt{\sum_{j=1}^n (w_n(j/n)-w(j/n))^2}}{\epsilon} \E\left[\sqrt{\frac{1}{n}\sum_{i \in I_1}\E[\delta_i \mid Q_i]^2} \right] \\
    &~\quad~\quad~\le \frac{\sqrt{\sum_{j=1}^n (w_n(j/n)-w(j/n))^2}}{\epsilon} \sqrt{\E\left[\frac{1}{n}\sum_{i \in I_1}\E[\delta_i \mid Q_i]^2\right]}.
\end{align*}
Assumption~\ref{assume:weight-fn} implies
\begin{equation*}
    \frac{\sum_{j=1}^n (w_n(j/n)-w(j/n))^2}{\epsilon^2}
\end{equation*}
is bounded eventually, and Assumption~\ref{assume:nuisance} implies
\begin{equation*}
    \E\left[\frac{1}{n}\sum_{i \in I_1}\E[\delta_i \mid Q_i]^2\right] \to 0.
\end{equation*}
Therefore, Term 2 is $o_P(1)$.

The third term is an $L$-statistic in the sense of \cite{Shorack72}, which differs from \cite{ShorackWe09} in that it allows the function of the $L$-statistic to vary with $n$, which is important when working with the approximation errors $\delta_i$. Specifically, they study estimators of the form
\begin{equation*}
    T_n = \frac{1}{n}\sum_{j=1}^n c_{nj} g_n(\xi_{nj})
\end{equation*}
where, for our purposes, $c_{ni}=w(j/n)$ and $\xi_{nj}$ is an order statistic of a uniform random variable. Because we have assumed that the distribution function $F_S(\cdot)$ of $Q$ is continuous, we can define $g_n(t) = \E[\delta_i \mid Q_i = F_{S}^{-1}(t)]$, and $g(t) = 0$. We proceed by showing that Assumptions~1-3 of \citet{Shorack72} hold (Assumption~4 is not needed, because we have assumed that what \citet{Shorack72} calls $\kappa$ is $0$). From this, we can conclude that $\sqrt{n}(T_n - \mu_n) \cp 0$, where $\mu_n = \E[w(1-F_S(Q_i))\E[\delta_i \mid Q_i]$. Notice that $\mu_n$ is also the mean of
\begin{equation*}
    T_n' = \frac{1}{n}\sum_{i=1}^n w(1-F_S(Q_i))\E[\delta_i \mid Q_i],
\end{equation*}
and that each of these terms is independent. Therefore, Chebyshev's inequality implies $\sqrt{n}(T_n' - \mu_n) \cp 0$, because
$|w(1-F_S(Q_i))\E[\delta_i \mid Q_i]|^2 \le \E[w^2(1-F_S(Q_i)) \delta_i^2 ] \to 0$ from Assumption~\ref{assume:nuisance}. Altogether, this implies that $\sqrt{n}(T_n - T_n') \cp 0$, which is exactly Term 3. Therefore, the proof follows from showing that Assumptions~1-3 of \citet{Shorack72} hold.

In this, to avoid conflicting notation, we shall use $\epsilon$ in place of the $\delta$ using in \citet{Shorack72}.
To satisfy Assumption~1 of \citet{Shorack72}, let $b_1 = b_2 = b$ as defined in Assumption~\ref{assume:weight-fn}, so that the conditions on $J$ hold. Using that $\E[|\delta_i|^r] < \infty$ in Assumption~\ref{assume:nuisance}, Markov's inequality gives
\begin{equation*}
    P( |\E[\delta_i \mid Q_i]| \ge s ) \le \frac{\E[|\E[\delta_i \mid Q_i]|^r]}{s^r} \le \frac{\E[|\delta_i|^r]}{s^r}.
\end{equation*}
Inverting this statement, using that $\E[\delta_i \mid Q_i]$ is of bounded variation and so can be split into two monotone functions, gives $|g_n(t)| \le C (t(1-t))^{1/r}$. Since the condition on $r$ and $b$ from Assumption~\ref{assume:regularity} implies there exists $\epsilon' > 0$ such that $1/r + b + \epsilon' \le 1/2$, Assumption~1 of \citet{Shorack72} holds with $\epsilon$ being the minimum of $\epsilon'$ and $\epsilon$ from Assumption~\ref{assume:nuisance}.

Here, $J_n = J$, so Assumption~2 of \citet{Shorack72} is satisfied.

Finally, to show Assumption~3, we re-write the integral using integration by parts and bound using H\"older's inequality as follows:
\begin{align*}
    &\int_{0}^1 M(t(1-t))^{-b+1/2-\epsilon/2} \dif{|\E[\delta_{i} \mid Q_i = F_S^{-1}(t)]|}
    \\
    &~\quad~\quad~= M(t(1-t))^{-b+1/2-\epsilon/2} |\E[\delta_{i} \mid Q_i = F_S^{-1}(t)| \bigg|_{0}^1 \\
    &~\quad~\quad~\quad~ - \int_{0}^1 M(t(1-t))^{-b-1/2-\epsilon/2} |\E[\delta_{i} \mid Q_i = F_S^{-1}(t)| \dif{t}
    \\
    &~\quad~\quad~= M(t(1-t))^{-b+1/2-\epsilon/2} |\E[\delta_{i} \mid Q_i = F_S^{-1}(t)| \bigg|_{0}^1 \\
    &~\quad~\quad~\quad~ - \int_{-\infty}^\infty M(F_S(q)(1-F_S(q)))^{-b-1/2-\epsilon/2} |\E[\delta_{i} \mid Q_i = q| \dif{F_S}(q),
\end{align*}
where the last line follows from the change of variables $t = F_S(q)$. Recalling that because of the constraints on $b$ in Assumption~\ref{assume:regularity}, $b<1/2$, and that Assumption~\ref{assume:nuisance} implies that \begin{equation*}
    \sup_{t} |\E[\delta_{i} \mid Q_i = F_S^{-1}(t)| < \infty,
\end{equation*}
the first term will be $0$. The second term converges to zero because
\begin{align*}
    &\int_{-\infty}^\infty M(F_S(q)(1-F_S(q)))^{-b-1/2-\epsilon/2} |\E[\delta_{i} \mid Q_i = q]| \dif{F_S}(q)\\
    &~\quad~\quad~\quad~= \E[M(F_S(Q_i)(1-F_S(Q_i)))^{-b-1/2-\epsilon/2} |\E[\delta_{i} \mid Q_i]|]
    \\&~\quad~\quad~\quad~\le \sqrt{\E[B^2(1-F_S(Q_i))(F_S(Q_i)(1-F_S(Q_i)))^{-2\epsilon} |\E[\delta_{i} \mid Q_i]|^2]\E[((1-F_S(Q_i))(F_S(Q_i)))^{-1 + \epsilon}]}
    \\&~\quad~\quad~\quad~\le \sqrt{C \E[B^2(1-F_S(Q_i))(F_S(Q_i)(1-F_S(Q_i)))^{-2\epsilon} \delta_i^2]} \to 0
\end{align*}
for a constant $\E[((1-F_S(Q_i))(F_S(Q_i)))^{-1 + \epsilon}] \le C < \infty$ and by Assumption~\ref{assume:nuisance} to conclude that the last line goes to $0$. Note that if $\epsilon' < \epsilon$ from Assumption~\ref{assume:nuisance}, the first term will involve an even larger exponent, and will still be bounded.
\end{proof}

\section{Additional Simulation Experiments}
We use several simulation experiments to probe aspects of RATE the metric's behavior, namely
the power of the RATE in testing against a null hypothesis of no heterogeneous treatment effects. We consider this both as a function of sample size and as a function of the signal-to-noise ratio of the heterogeneous treatment effect function.

We generate simulated data representing scenarios with both uncensored continuous and right-censored time-to-event/survival outcomes; employ several different kinds of prioritization rules, including CATE estimators like Causal Forests \citep{athey2019generalized}, S-Learners, and X-Learners \citep{kunzel2019metalearners}, as well as risk estimators like traditional random forests \citep{breiman2001random, hastie2009elements}; and consider both completely randomized trials as well as observational studies in which treatment assignment depends only on observable covariates (``unconfoundedness''). 
For the sake of illustration and brevity, we include in the main manuscript only one type continuous-outcome simulation and one type of survival-outcome simulation, each of which represents an observational study in the unconfoundedness setting. 


\subsection{Estimating RATEs in Observational Studies with Uncensored Outcomes and Unconfoundedness}
\label{sec:obs-study}

One common context for estimating the RATEs is in studies with binary treatments, $W_i \in \{0, 1\}$, continuous and uncensored treatment outcomes, $Y_i \in \mathbb{R}$, and where we allow for treatment selection based on observable covariates by assuming that potential outcomes are conditionally independent of the treatment conditioned on observed subject data, $Y_i(1), Y_i(0) \indep W_i | X_i$ (unconfoundedness).

\subsubsection{Simulator Design}
\label{sec:obs-study-sim}

We start from the simulation ``Setup A'' in \citet{nie2017quasi}, in which data is generated as follows:
\begin{align*}
    X_i &\sim \textnormal{Unif}(0,1)^d \\
    e(X_i) &= \textnormal{trim}_{0.1}\{\sin \left(\pi X_{i1} X_{i2}\right)\} \\
     W_i | X_i &\sim \textnormal{Bernoulli}(e(X_i)) \\
     b(X_i) &= \sin(\pi X_{i1} X_{i2}) + 2(X_{i3} - 0.5)^2 + X_{i4} + 0.5 X_{i5}
\end{align*}
where $\textnormal{trim}_{\eta}(x) = \max \{\eta, \min(x, 1 - \eta)\}$, $X_i$ represents the model input features for the $i^{th}$ subject, $e(\cdot)$ is the propensity function, $W_i$ is a binary treatment indicator ($1$ if the $i^{th}$ subject received treatment, $0$ otherwise), and $b(\cdot)$ is the baseline main effect modeled after the scaled \cite{friedman1991multivariate} function. The treatment effect function, $\tau(\cdot)$, and observed outcome for the $i^{th}$ subject, $Y_i$ are:
\begin{align*}
    \tau(X_i) &= \left(X_{i1} + X_{i2}\right)/2\\
    \varepsilon_i | X_i &\sim \mathcal{N}(0, 1) \\
    Y_i &= b(X_i) + (W_i - e(X_i)) \tau(X_i) + \sigma_{\varepsilon} \varepsilon_i
\end{align*} 
where $\sigma_{\varepsilon}$ represents the strength of random noise in the observation model. 

We modify this simulation slightly
by (1) introducing a parameter $\sigma_{\tau}$ that controls the signal strength/variance of the heterogeneous treatment effects, and (2) replacing $\tau(X_i)$ with 
\begin{equation*}
    \tilde{\tau}(X_i, \sigma_{\tau}) = \frac{\tau(X_i)}{SD(\tau)} \sigma_{\tau},
\end{equation*}
where $SD(\tau) = \sqrt{\frac{1}{n - 1}\sum_{i=1}^n \left(\tau(X_i) - \frac{1}{n}\sum_{i=1}^n\tau(X_i)\right)^2}.$

We note that in this simulation, the nuisance parameters $e(\cdot)$ and $b(\cdot)$ are difficult to learn, but the treatment effect function is more straightforward.

\subsubsection{Prioritization Rules}
\label{sec:obs-study-prioritization}
In the simulations, we study the RATE associated with prioritization rules generated as follows:

\textbf{Oracle} The Oracle prioritization rule assigns each subject a priority according to the true expected treatment effect conditioned on that subject's covariates. The Oracle thus has access to the true data generating process, though not the specific noise applied to each subject.

\textbf{Random} The Random prioritization rule assigns each subject a priority drawn uniformly at random from the interval [0, 1]. Note that a random prioritization rule will have a RATE of 0 for any weight function $\alpha(\cdot)$.

\textbf{Random Forest (RF) Risk} The RF Risk prioritization rule assigns each subject a priority in accordance with their estimated risk, where risk in this case is defined to be the subject's outcome in the absence of treatment: $\mu_0(x) = \EE{Y_i(0) | X_i  = x}$. We estimate the baseline risk using random forests, as implemented in \texttt{grf} \citep{athey2019generalized}. A key assumption in order for the RF Risk to achieve a high RATE is that risk is strongly correlated with the treatment effect. In our simulator setup, this is not the case.

\textbf{Causal Forest} The Causal Forest prioritization rule similarly uses ensembles of trees built with recursive partitioning to assign subject priorities but, unlike random forests, the causal forest chooses partitions 
to minimize the CATE R-loss criterion \citep{nie2017quasi}. We use the Causal Forest implementation provided by \texttt{grf}.

\textbf{X-learner} Like the Causal Forest prioritization rule, the X-learner \citep{kunzel2019metalearners} attempts to directly model the CATE as a function of patient covariates. However, the X-learner uses a different objective for targeting the CATE using the observed data.
The algorithm is described with more detail by \citet{kunzel2019metalearners}. We implement the X-learner with random forests (\texttt{grf}) as base model learners.

\subsubsection{Estimating Doubly Robust Scores of the Treatment Effect}

In this context, we estimate doubly robust scores for each participant using augmented inverse-propensity weighted (AIPW) scores, as described in equation \eqref{eqn:dr-score-unconfoundedness}. In these experiments, we use Random Forests \citep{breiman2001random, athey2019generalized, athey2019machine} as implemented in \texttt{grf} to fit the propensity score, $\hat{e}(x)$, and marginal response curve, $\hat{m}(x, w)$, on folds in a cross-fit manner \citep{ChernozhukovChDeDuHaNeRo18}.


\subsection{Estimating RATEs with Continuous, Right-Censored Outcomes and Unconfoundedness}
\label{sec:surv}


As discussed previously, a common context for validating prioritization rules with RATE metrics, especially in clinical studies with longitudinal outcomes, will be with time-to-event data with censoring. 


\subsubsection{Simulator Design}
\label{sec:surv-sim}
Adapting the ``Second Scenario'' in \citet{cui2023estimating}, we generate covariates independently from a uniform distribution on $[0, 1]^5$. The propensity function $e(X_i)$ is generated from a $\textnormal{Beta}(2, 4)$ distribution, $e(X_i) = (1 + \beta(X_{i2}; 2, 4)) / 4$. The failure time, $T_i$, is generated from a proportional hazard model and the censoring time, $C_i$, from an accelerated failure time model as follows:
\begin{align*}
    U_i &\sim \textnormal{Uniform(0, 1)} \\
    Z_i &\sim \mathcal{N}(0, 1) \\
    T_i &= \left(\frac{-\log(U_i)}{\exp\left(X_{i1} + (-0.4 + X_{i2}) \cdot W_i\right)}\right)^2 \\
    C_i &= \exp \left(X_{i1} - X_{i3} \cdot W_i + Z_i\right) \\
    Y_i &= \min(T_i, C_i) \\
    \Delta_i &= \mathbf{1}\{T_i \leq C_i\}
\end{align*}

Here $Y_i$ represents the observed study time and $\Delta_i$ represents an indicator for right-censoring, as described previously.

\subsubsection{Prioritization Rules}
\label{sec:surv-prioritization-rules}
While our \textbf{Oracle} and \textbf{Random} prioritization rules remain the same as discussed in Section~\ref{sec:obs-study-prioritization} for observational studies with unconfoundedness, we additionally analyzed three different prioritization rules adapted to the survival analysis setting.

\textbf{Random Survival Forest (RSF) Risk} Similar to the Random Forest (RF) Risk prioritization rule described previously, the Random Survival Forest (RSF) Risk prioritization rule estimates subjects' outcomes in the absence of treatment using a collection of single tree models. As in traditional random forest models, these trees are constructed via bootstrap aggregating (``bagging'') and random variable selection to decide which variables to split and how, given the data \citep{breiman2001random}. Here, however, the outcome considered is the conditional survival function, $S(t, x) = P(Y_i(0) > t | X_i = x)$. Additionally, nodes within trees are split by choosing the variable and split point which maximize survival difference between child nodes  \citep{ishwaran2008random}. We used survival forests as implemented in \texttt{grf} \citep{athey2019generalized} to learn RSF prioritization rules.

\textbf{Causal Survival Forest (CSF)} The Causal Survival Forest (CSF) prioritization rule adapts the Causal Forest algorithm described in Section~\ref{sec:obs-study-prioritization} to the setting of right-censored time-to-event outcomes. Concretely, given potential outcomes $T_i(1)$ and $T_i(0)$ representing survival times for the $i^{th}$ subject under treatment and control arms, respectively, the CSF prioritization rule learns to estimate the restricted mean survival time, $\tau(x) = \EE{ \min\{T_i(1), t_0\} - \min\{T_i(0), t_0\} \cond X=x}$, from the data. This is accomplished by constructing an ensemble of single tree models, each of which directly targets heterogeneity in the CATE when determining how and where to split nodes. We use an implementation of Causal Survival Forests in \texttt{grf} \citep{athey2019generalized} in our experiments.

\textbf{Cox Proportional Hazards S-learner} The Cox Proportional Hazards S-learner prioritization rule (CoxPH S-learner) adapts the canonical Cox Proportional Hazards survival model \citep{andersen1982cox} to the causal inference setting by (1) training a model to estimate the hazard ratio when treatment assignment is considered as simply an extra covariate in the model, $\hat{\lambda}(X_i = x, W_i = w)$;
then (2) estimating the CATE on the test set as the difference for each individual between the model's predictions if the subject were assigned to treatment and if they were assigned to control, i.e., $\hat{\tau}(x) = \hat{\lambda}(X_i = x, W_i = 1) - \hat{\lambda}(X_i = x, W_i = 0)$. In modeling $\hat{\lambda}$, we used an implementation of Cox Proportional Hazards provided within the \texttt{survival} package in R \citep{therneau2000cox}. We included all covariates, treatment assignment, and first-order interaction terms between treatment assignment and the other covariates as predictors. This is a common practice for estimating heterogeneous treatment effects in clinical studies \citep{rekkas2020predictive}.

\subsubsection{Estimating Doubly Robust Scores of the Treatment Effect}

For the time-to-event setting, we use a doubly robust score defined in equation~\eqref{eq:dr-score-censoring}. For the sake of brevity we omit the exact formula and refer readers to \citet{cui2023estimating} for details.  Mote that in our experiments we estimate nuisance parameters, including the expected remaining survival time for each individual and the conditional survival function, using survival forests \citep{ishwaran2008random} with cross-fitting, as implemented in the \texttt{grf} R package. Propensity score estimators are learned using random forests with cross-fitting, as also provided in the \texttt{grf} R package.

\subsection{Analysis of Statistical Power}
\label{sec:statistical-power}
The statistical power of the hypothesis test described in Section~\ref{sec:hypo-test} depends on the strength of the heterogeneous treatment effects, the sample size, the choice of weighting function, and the choice of score. In the following sub-sections, we illustrate the relationship between statistical power and strength of heterogeneous treatment effects and the relationship between power and sample size, respectively. Then, we consider the effect that the weighting function of the RATE has on the power, for different distributions of heterogeneous treatment effects.

\subsubsection{Power as a function of heterogeneous treatment effect strength}
Using synthetic data from the simulator described in Section~\ref{sec:obs-study-sim}, we learned nuisance parameter estimators and the prioritization rules described in Section~\ref{sec:obs-study-prioritization} on 1000 training samples and estimated the RATE on another 1000 i.i.d. test samples. For each prioritization rule, we calculated a $P$-value for $H_0: \text{RATE} = 0$ using the procedure described in Section~\ref{sec:hypo-test}. We repeated this procedure 1000 times, and counted the fraction of simulations in which $H_0$ was rejected at significance level $\alpha = 0.05$ as a measure statistical power. This process of estimating power was repeated for a variety of heterogeneous treatment effect strengths $\sigma_\tau$. When there was no heterogeneity in the treatment effects ($\sigma_\tau = 0$), the proportion of null hypotheses rejected was interpreted to be an estimate of the Type I error of our hypothesis test.

Results of this experiment can be seen in Figure \ref{fig:power_vs_snr}. For prioritization rules that directly target the CATE (e.g., Causal Forests, X-learners), statistical power increased monotonically with heterogeneous treatment effect strength, as desired. We additionally observe that these two CATE-targeting estimators perform substantially better than a risk-based prioritization rule using random forests. This is to be expected given the simulator design.

\begin{figure}[ht]
    \begin{subfigure}{\textwidth}
        \includegraphics[scale=0.8]{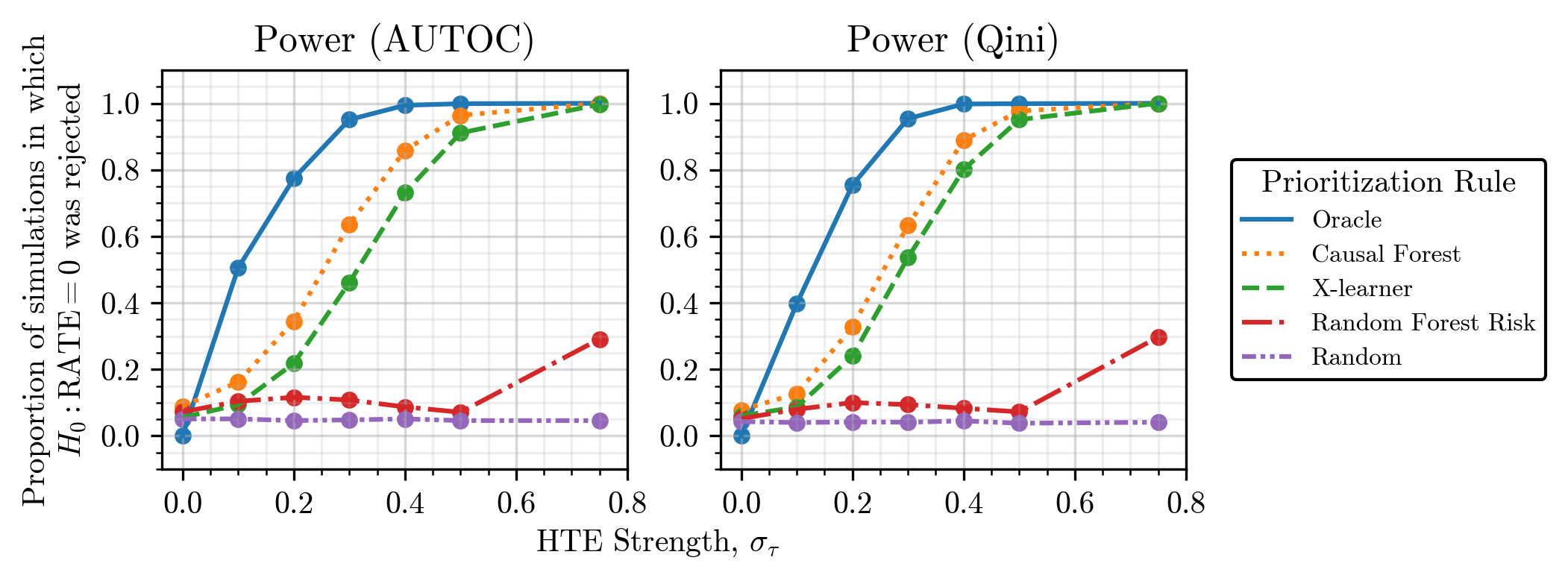} 
    \end{subfigure}
    \caption{Power as a function of heterogeneous treatment effect strength, $\sigma_{\tau}$, in Setup 1 for the RATE with both logarithmic (AUTOC) and linear (Qini) weighting. Of note, even when $\sigma_{\tau} = 0.3$ i.e. the variance/signal strength of the heterogeneous treatment effects was $\sigma_{\tau}^2/\sigma_{\varepsilon}^2 = 0.3^2 = 0.09 \approx 10\%$ that of the noise in the observed outcome model, we were able to use the procedure outlined in Section~\ref{sec:hypo-test} to test against the presence of heterogeneous treatment effects with a power of $> 0.6$ and a Type I error of approximately $0.05$. Once the signal-to-noise ratio approached closer to $\sigma_{\tau}^2/\sigma_{\varepsilon}^2 = 0.4^2 = 0.16$, we were able to test against the presence of heterogeneous treatment effects with a power of $>0.8$ and a Type I error of $0.05$ using the Causal Forest implemented in $\texttt{grf}$ with default hyperparameters.}
    \label{fig:power_vs_snr}
\end{figure}

\subsubsection{Power as a function of sample size}

Using the simulator described in Section~\ref{sec:surv-sim}, we estimated statistical power via the procedure outlined in the previous section. Here, however, instead of measuring power as a function of varying heterogeneous treatment effect strength, we analyzed how power changed with increasing sample size.

See Figure \ref{fig:power_vs_sample_size} for a summary of the results. Of note, the RATE increased monotonically with the sample size for prioritization rules that directly target the CATE (e.g., the Causal Survival Forest and Cox Proportional Hazards S-learner models). The same is not true of prioritization rules that only targeted risk/the baseline outcome (e.g., the Random Survival Forest Risk model).

\begin{figure}[ht]
    \begin{subfigure}{\textwidth}
        \includegraphics[scale=0.8]{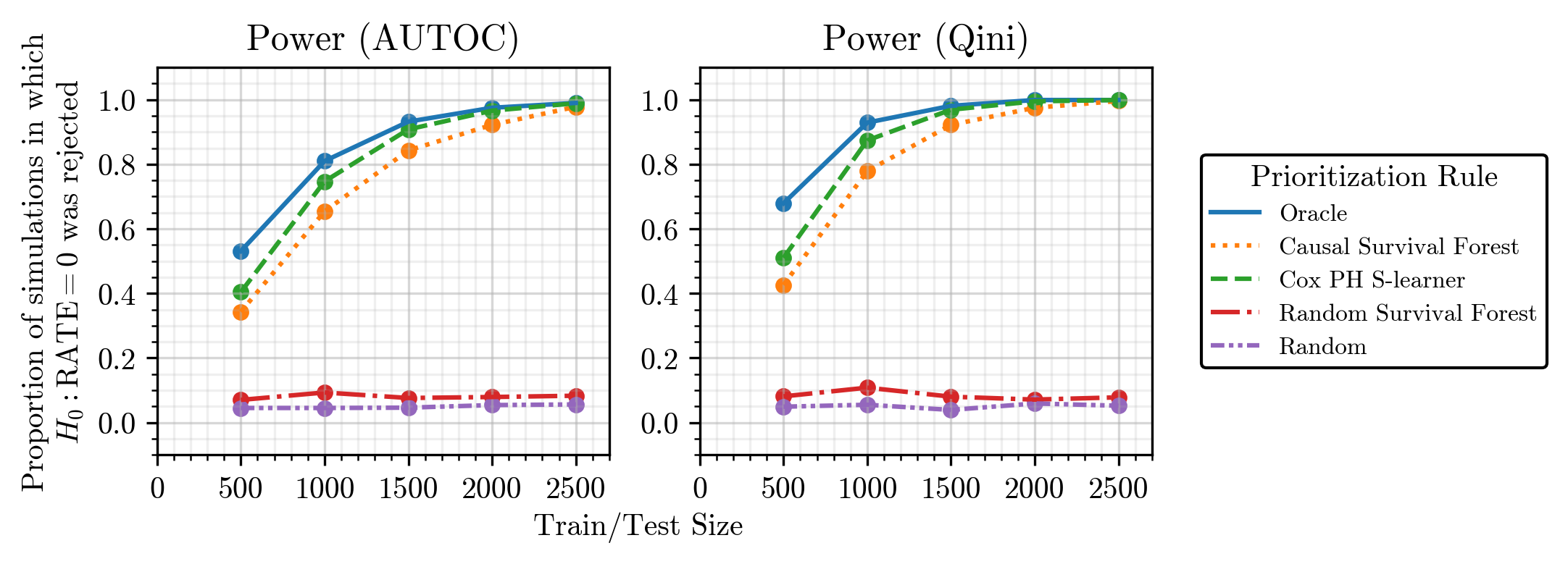} 
    \end{subfigure}
    \caption{Statistical power in testing against the absence of heterogeneous treatment effects, as a function of sample size (train and test set size are identical within each simulation), in the context of right-censored, time-to-event outcomes under unconfoundedness.
    }
    \label{fig:power_vs_sample_size}
\end{figure}

\subsubsection{Power vs. Weighting Function}

 In our simulation section, ``Section 4: Choosing a RATE Metric and Score'', we considered two aspects of the RATE estimator that impact its statistical power for detecting treatment effect heterogeneity: (1) the weighting function $\alpha(\cdot)$ (e.g., Qini vs. AUTOC), and (2) the form for the score, $\widehat{\Gamma}_i$ (e.g., AIPW vs. IPW scores). 

Regarding (1), the weighting function, we showed in simulation that the AUTOC yields greater statistical power than the Qini coefficient when only a small subset of the population experiences nontrivial heterogeneous treatment effects whereas the Qini coefficient yields greater power when heterogeneous treatment effects are diffuse and substantial across the population (see Figure 2 in the main manuscript). This simulation was performed using AIPW oracle scores, $\Gamma^*_i$, as given in Equation 2.13. For these Oracle scores, we assume nuisance parameters are known a priori, including $m(x,w)$ which represents the expected outcome given a subject's covariates, $x$, and treatment assignment, $w \in \{0, 1\}$. In practice, however, one never has access to $m(x, w)$ and this must be estimated instead. We assess the robustness/sensitivity of our findings by repeating the same experiment using Oracle scores, AIPW scores with \emph{estimated} $\hat{m}(x, w)$, and IPW scores. The results are shown in Figure \ref{fig:qini_vs_autoc} below. Notably, the trends highlighted in the manuscript regarding the impact of weighting function on statistical power are robust to choice of score.

\begin{figure}[ht]
    \centering
    \begin{subfigure}{0.32\textwidth}
        \includegraphics[scale=0.6]{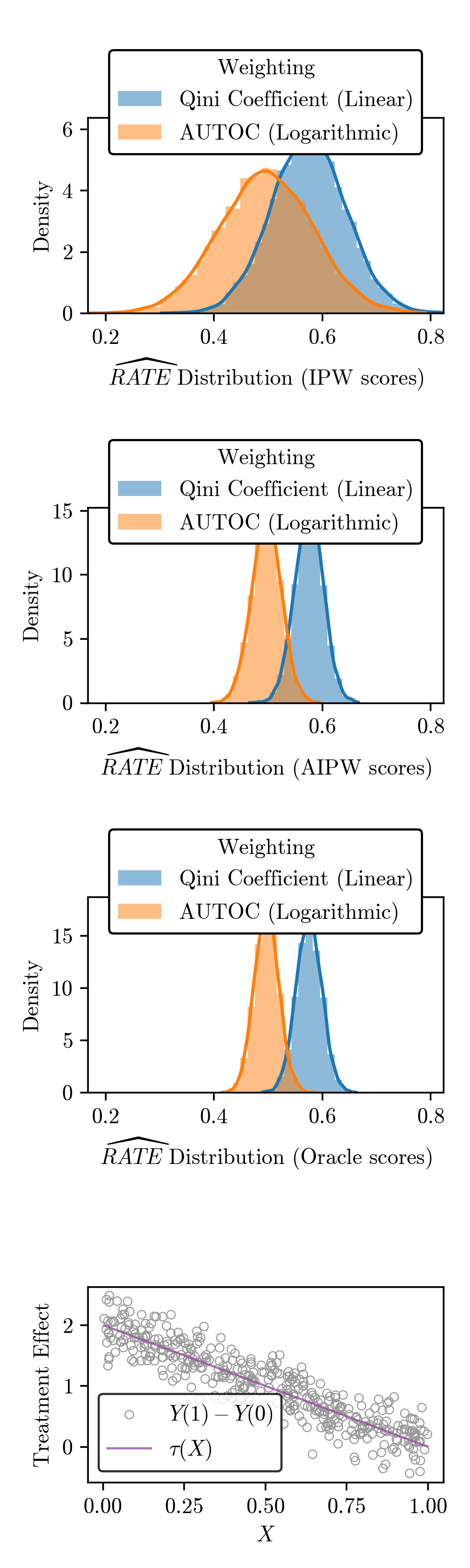} 
        \caption{100\% have $|\tau(X_i)| > 0$}
    \end{subfigure}
    \begin{subfigure}{0.32\textwidth}
        \includegraphics[scale=0.6]{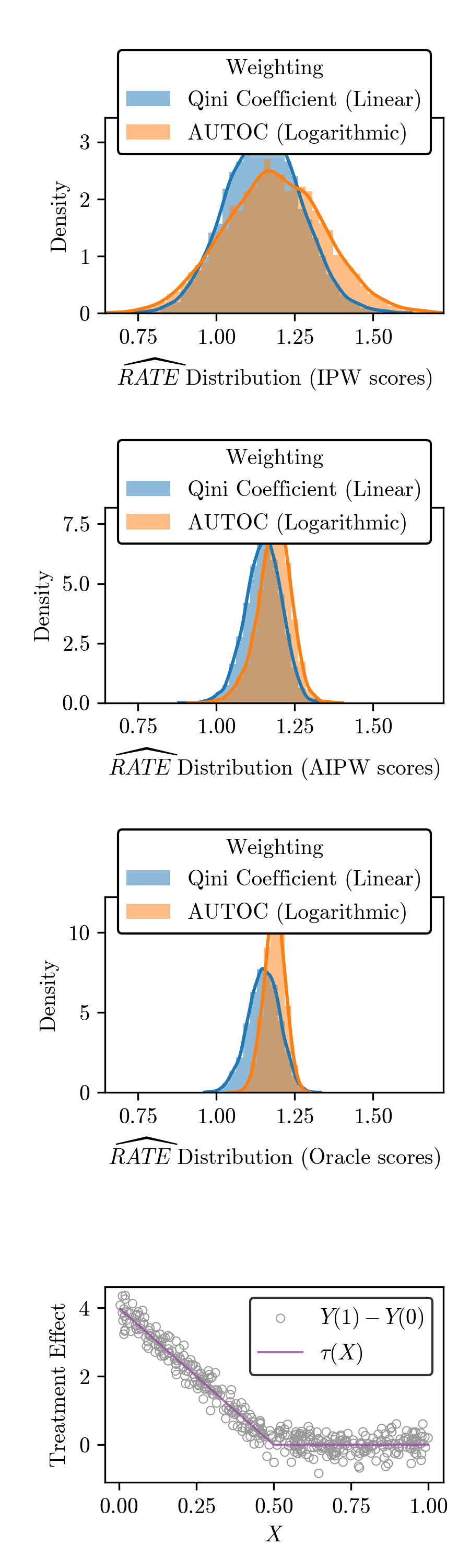}
        \caption{50\% have $|\tau(X_i)| > 0$}
    \end{subfigure}
    \begin{subfigure}{0.32\textwidth}
        \includegraphics[scale=0.6]{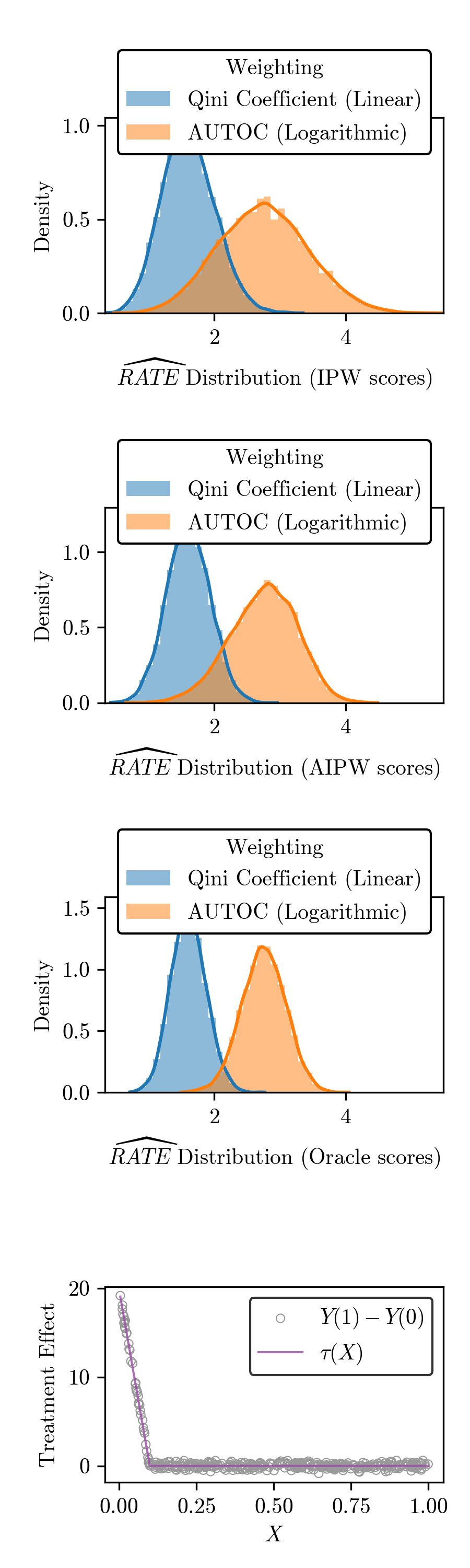}
        \caption{10\% have $|\tau(X_i)| > 0$}
    \end{subfigure}
    \caption{Comparison of linear (Qini) vs. logarithmic (AUTOC) weighting functions for different scores (IPW, AIPW, and Oracle scores). If treatment effects are nonzero for a significant proportion of the population (left column) the power of the estimated RATE when using linear weighting (Qini) tends to be greater than when using logarithmic (AUTOC) weighting. Conversely, if nonzero treatment effects are concentrated among a small proportion of the population (right column) we see that using logarithmic (AUTOC) weighting leads to a greater power for the estimated RATE relative to using linear (Qini) weighting.}
    \label{fig:qini_vs_autoc}
\end{figure}

\begin{table}[p]
\centering
\begin{tabular}{lcccc}
\toprule
 & \multicolumn{2}{c}{ACCORD-BP} & \multicolumn{2}{c}{SPRINT} \\
\cmidrule(lr){2-3} \cmidrule(lr){4-5}
 & n & \% & n & \% \\
\midrule
Total & 4684 &  & 9160 &  \\
\midrule
Non-Intensive & 2342 & 50.0 & 4567 & 49.9 \\
Intensive & 2342 & 50.0 & 4593 & 50.1 \\
\midrule
No Diabetes & 0 & 0.0 & 9160 & 100.0 \\
Diabetes & 4684 & 100.0 & 0 & 0.0 \\
\midrule
Male & 2455 & 52.4 & 5908 & 64.5 \\
Female & 2229 & 47.6 & 3252 & 35.5 \\
\midrule
Non-Black & 3578 & 76.4 & 6414 & 70.0 \\
Black & 1106 & 23.6 & 2746 & 30.0 \\
\midrule
Non-Hispanic & 4359 & 93.1 & 8190 & 89.4 \\
Hispanic & 325 & 6.9 & 970 & 10.6 \\
\midrule
Not current smoker & 4061 & 86.7 & 7932 & 86.6 \\
Current smoker & 623 & 13.3 & 1228 & 13.4 \\
\midrule
Not former smoked & 2719 & 58.0 & 5265 & 57.5 \\
Former smoker & 1965 & 42.0 & 3895 & 42.5 \\
\midrule
Not on Aspirin & 2225 & 47.5 & 4493 & 49.1 \\
On Aspirin & 2459 & 52.5 & 4667 & 50.9 \\
\midrule
Not on Statin & 1632 & 34.8 & 5165 & 56.4 \\
On Statin & 3052 & 65.2 & 3995 & 43.6 \\
\midrule
No Angina & 4153 & 88.7 & 7971 & 87.0 \\
Angina & 531 & 11.3 & 1189 & 13.0 \\
\midrule
 & Mean & SD & Mean & SD \\
\midrule
SBP & 139.15 & 15.78 & 139.66 & 15.59 \\
DBP & 75.95 & 10.39 & 78.15 & 11.94 \\
eGFR & 91.57 & 28.78 & 71.80 & 20.60 \\
\midrule
 & Median & [IQR] & Median & [IQR] \\
\midrule
Age & 62.05 & [57.60, 67.10] & 67.00 & [61.00, 76.00] \\
BP medications & 2.00 & [1.00, 2.00] & 2.00 & [1.00, 3.00] \\
Creatinine & 0.90 & [0.70, 1.00] & 1.01 & [0.86, 1.21] \\
Cholesterol & 188.00 & [162.00, 216.25] & 187.00 & [161.00, 215.00] \\
HDL cholesterol & 45.00 & [37.00, 54.00] & 50.00 & [43.00, 60.00] \\
Triglycerides & 147.00 & [98.00, 226.25] & 107.00 & [77.00, 150.00] \\
BMI & 31.63 & [28.05, 35.90] & 29.02 & [25.88, 32.90] \\
\bottomrule
\end{tabular}
\caption{Baseline characteristics for the SPRINT and ACCORD-BP trials}
\label{tab:baseline-characteristics}
\end{table}

\section{Combined analysis of SPRINT/ACCORD-BP}
\label{sec:combined-sprint-accord-RMST}

While in the main manuscript we train on SPRINT and evaluate on ACCORD-BP (or vice versa) to highlight how the RATE would likely be used in practice (train on one trial/population, evaluate how well prioritization rules generalize to new populations), we also recognize the value of in-distribution evaluation. In this experiment we follow the exact same protocol as described in Section \ref{sec:CVD}, except that we use as our ``train'' set for the Causal Survival Forest, Random Survival Forest, and Cox PH S-Learner models a combined dataset generated by combining subjects from the SPRINT and ACCORD-BP trials and selecting from that set 6922 subjects (uniformly at random). The ``test'' is constructed in similar fashion.

\begin{table}[htbp]
\centering
\begin{tabular}{|r|cc|}
	\hline
	Prioritization Rule & AUTOC (95\% CI) & $p$-value \\
	\hline
	Causal Survival Forest (grf) & -5.83 (-14.09, 2.43) & 0.17 \\
        Cox PH S-learner & -0.99 (-9.49, 7.52) & 0.82 \\
	\hline
	Random Survival Forest Risk (grf) & 3.03 (-6.41, 12.46) & 0.53 \\
	Framingham Risk Score & -7.01 (-18.89, 4.88) & 0.25 \\
	ACC/AHA Pooled Cohort Equations & -1.43 (-13.22, 10.36) & 0.81 \\
	\hline
\end{tabular}
\caption{AUTOC estimates (RMST) obtained using data from a test split of a combination of both SPRINT and ACCORD-BP subjects ($n = 6922$), with prioritization rules trained on SPRINT ($n = 6922$), if necessary. We also show 95\% confidence intervals obtained using the half-sample bootstrap,  along with associated $p$-values.}
\label{tab:train_sprint_test_accord_rmst}
\end{table}

\setlength{\bibsep}{0.2pt plus 0.3ex}

\def\spacingset#1{\renewcommand{\baselinestretch}%
{#1}\small\normalsize} \spacingset{1}
\bibliographystyle{plainnat-abbrev}
{\footnotesize
\bibliography{bib}
}

\end{document}